\documentclass[journal,onecolumn]{IEEEtran}
\usepackage{amsmath,amsthm}
\usepackage{booktabs}
\usepackage{latexsym}
\usepackage{mathrsfs}
\usepackage{amsthm}
\usepackage{amssymb}
\usepackage{amsfonts}
\usepackage{amsbsy}
\usepackage{tikz}
\usepackage{cases}

\usepackage[shortlabels]{enumitem}
\usepackage{url}
\usepackage{array}
\usepackage{pdflscape}
\usepackage{xcolor}
\usepackage{stmaryrd}
\usepackage{verbatim}
\usepackage{braket}
\usepackage{bbm}
\usepackage{bm}
\usepackage{rotating}
\usepackage{makecell}
\usepackage{cellspace}
\usepackage{threeparttable}
\usepackage [latin1]{inputenc}

\theoremstyle{plain}
\theoremstyle{definition}
\newtheorem{definition}{Definition}
\newtheorem{example}{Example}
\newtheorem{remark}{Remark}

\newtheorem{theorem}{Theorem}
\newtheorem{lemma}{Lemma}
\newtheorem{proposition}{Proposition}
\newtheorem{corollary}{Corollary}

\usepackage[colorlinks=true, citecolor=blue]{hyperref}

\allowdisplaybreaks[4]

\begin{document}
\title{Quantum codes and optimal pure quantum $(r,\delta)$-LRCs via the MP construction}
\author{\centerline{Meng Cao and Kun Zhou}
\thanks{M. Cao and K. Zhou are with Beijing Institute of Mathematical Sciences and Applications, Beijing, 101408, China. e-mails: mengcaomath@126.com (M. Cao),  kzhou@bimsa.cn (K. Zhou).}
\thanks{M. Cao is supported by the National Natural Science Foundation of China under Grant No. 12401684. K. Zhou is supported by the National Natural Science Foundation of China under Grant No. 12401040.}
}

\maketitle

\vspace{-10pt}

\begin{abstract}
The matrix-product (MP) construction method combines several classical codes of the same length with a defining matrix to form a new classical code known as an MP code. This approach provides an effective approach for constructing longer classical codes and various kinds of quantum codes.
Quantum $(r,\delta)$-locally recoverable codes (LRCs) are a new kind of quantum codes introduced in \cite{Galindo2026}, with significant potential for quantum data storage, where the optimal construction of pure quantum $(r,\delta)$-LRCs is a fundamental and compelling topic.
In this paper, we employ MP codes whose defining matrices are $\tau$-optimal defining ($\tau$-OD) matrices to construct new quantum codes and quantum $(r,\delta)$-LRCs. Specifically, we report the following results:
\vspace{2pt}
\begin{itemize}
\item We establish a unified $\tau$-monomial decomposition theorem for invertible self-adjoint matrices over finite fields of arbitrary characteristic, which
generalizes the result in \cite{CaoZhou2026} where the characteristic was required to be odd.
Based on this theorem, we prove the existence of $\tau$-OD matrices over $\mathbb{F}_{q^2}$ for any characteristic and demonstrate that there exist several new infinite families of $\tau$-OD matrices over $\mathbb{F}_{q^2}$ of characteristic $2$.
As an application of MP codes involving $\tau$-OD matrices, we construct several infinite families of quantum codes with flexible parameters. Within this framework, we present $222$ record-breaking quantum codes that surpass the best-known records maintained in Grassl's database \cite{Grassl2025Bounds}.

\vspace{3pt}

\item We propose two effective schemes for constructing optimal pure quantum $(r,\delta)$-LRCs via MP codes.
Accordingly, we construct four new infinite families of optimal pure quantum $(r,\delta)$-LRCs with flexible parameters.
Notably, we report an interesting phenomenon by exhibiting $30$ optimal pure quantum $(r,\delta)$-LRCs derived from our framework; that is, there exist quantum codes that are not only optimal pure quantum $(r,\delta)$-LRCs but also, according to Grassl's database \cite{Grassl2025Bounds}, best-known, optimal, or record-breaking quantum codes.
To the best of our knowledge, the new discovery that quantum codes are simultaneously optimal pure quantum $(r,\delta)$-LRCs and record-breaking quantum codes has not been previously reported in the literature.
\end{itemize}
\end{abstract}

\begin{IEEEkeywords}
$\tau$-optimal defining ($\tau$-OD) matrix, quantum code, quantum $(r,\delta)$-locally recoverable code (LRC), optimal pure quantum $(r,\delta)$-LRC
\end{IEEEkeywords}

\section{Introduction}\label{sec:intro}

Unlike classical bits, which can be easily replicated and are relatively robust, quantum states are inherently fragile and highly susceptible to environmental disturbances. Furthermore, the no-cloning theorem prohibits the perfect copying of arbitrary quantum states, rendering naive classical redundancy techniques inapplicable in the quantum setting. To overcome these limitations, quantum information is encoded into a larger Hilbert space, enabling error detection and correction without collapsing the logical state. Consequently, quantum error-correcting codes (QECCs) have become indispensable for mitigating decoherence and noise in quantum information processing.

The theory of QECCs originated with Shor's pioneering work \cite{Shor1995Scheme}, which introduced the $[[9,1,3]]$ code, the first scheme capable of correcting arbitrary single-qubit errors. In \cite{Steane1996Error}, Steane introduced the $[[7,1,3]]$ code, representing another major milestone.
In \cite{Gottesman1997Stabilizer}, Gottesman developed the stabilizer formalism, which established a systematic framework for quantum code construction and laid the foundation for much of modern quantum error correction.
Currently, the majority of studied quantum codes fall within the framework of stabilizer codes, primarily due to their profound links to classical codes.
Prominent examples include the CSS construction \cite{Calderbank1996Good,Steane1996Simple} and the Hermitian construction \cite{Ashikhmin2001Nonbinary}, both of which enable the efficient derivation of QECCs from suitable classical codes.

Recently, Galindo, Hernando, Mart\'{i}n-Cruz, and Matsumoto in \cite{Galindo2026} introduced \emph{quantum $(r,\delta)$-locally recoverable codes (LRCs)}.
These codes represent a novel class of quantum codes with significant potential for quantum data storage.
A quantum $(r,\delta)$-LRC can correct $\delta-1$ qudit erasures from sets of at most $r + \delta - 1$ qudits.
In the same work, they established a fundamental connection between Hermitian (resp. Euclidean) dual-containing codes and quantum $(r,\delta)$-LRCs.
Their work effectively explored general quantum $(r,\delta)$-LRCs and optimal pure quantum $(r,\delta)$-LRCs induced by classical codes.
In \cite{Galindo2025Optimal} and \cite{Galindo2026QuantumBCH}, the authors further investigated quantum $(r,\delta)$-LRCs and provided new optimal constructions of pure quantum $(r,\delta)$-LRCs.

Let $\mathbb{F}_{q}^{k\times t}$ denote the set of all $k\times t$ matrices over the finite field $\mathbb{F}_{q}$ of order $q$, where $q$ is a prime power.
In \cite{Blackmore2001Matrix}, Blackmore and Norton introduced \emph{matrix-product (MP) codes}, a class of classical codes formed by combining a defining matrix with several shorter classical codes of the same length. Another early contribution to MP codes was made by {\"O}zbudak and Stichtenoth in \cite{Ozbudak2002Note}.
Let $\mathcal{C}_{1},\mathcal{C}_{2},\ldots,\mathcal{C}_{k}$ be linear codes of length $n$ over $\mathbb{F}_{q}$.
Let $A=(a_{i,j})\in\mathbb{F}_{q}^{k\times t}$ with $k\leq t$. The MP code associated with $\mathcal{C}_i$ and $A$ is denoted by
\begin{align*}
\mathcal{C}(A):=[\mathcal{C}_{1},\mathcal{C}_{2},\ldots,\mathcal{C}_{k}]\cdot A,
\end{align*}
which consists of all matrix-products of the form $[\mathbf{c}_{1},\mathbf{c}_{2},\ldots,\mathbf{c}_{k}]\cdot A$, where $\mathbf{c}_{i}\in\mathcal{C}_{i}$ for $1\leq i\leq k$.
Here, $A$ is referred to as the \emph{defining matrix} of $\mathcal{C}(A)$, and $\mathcal{C}_{1},\mathcal{C}_{2},\ldots,\mathcal{C}_{k}$ are called the \emph{constituent codes} of $\mathcal{C}(A)$. Following \cite{Blackmore2001Matrix} and \cite{Hernando2009Construction}, every codeword $\mathbf{c}=[\mathbf{c}_{1},\mathbf{c}_{2},\ldots,\mathbf{c}_{k}]\cdot A\in\mathcal{C}(A)$ can be expressed as a row vector of length $tn$ of the form $\mathbf{c}=\big(\sum_{i=1}^k a_{i,1}\mathbf{c}_{i},\sum_{i=1}^k a_{i,2}\mathbf{c}_{i},\ldots,\sum_{i=1}^k a_{i,t}\mathbf{c}_{i}\big)$, where $\mathbf{c}_{i}=(c_{1,i},c_{2,i},\ldots,c_{n,i})\in\mathcal{C}_{i}$ for $1\leq i\leq k$.

Let $A(j_{1},j_{2},\ldots,j_{i})$ denote the submatrix consisting of the first $i$ rows and the $j_{1}, j_{2}, \ldots, j_{i}$-th columns of $A\in \mathbb{F}_{q}^{k\times t}$, where $k\leq t$, $1\leq i\leq k$ and $1\leq j_{1}<j_{2}<\ldots< j_{i}\leq t$. If $A(j_{1},j_{2},\ldots,j_{i})$ is invertible for all $1\leq i\leq k$ and $1\leq j_{1}<j_{2}<\ldots<j_{i}\leq t$, then $A$ is called a {\it non-singular by columns (NSC) matrix}.
As shown in \cite{Blackmore2001Matrix} and reviewed in Lemma \ref{proposition2}, the MP code guarantees an explicit lower bound on its minimum distance if its defining matrix is an NSC matrix. This prompts a deeper investigation into NSC matrices.

The following two propositions, provided in \cite[Proposition 3.3]{Blackmore2001Matrix} and \cite[Lemma 3.1]{Cao2025Entanglement} respectively, establish the existence condition and the multiplicative invariance of NSC matrices.

\begin{proposition}\label{existNSC}
For $s\geq 2$, there exists an $s\times k$ NSC matrix over $\mathbb{F}_{q}$ if and only if $s\leq k\leq q$.
\end{proposition}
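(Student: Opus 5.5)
The plan is to prove the two directions separately. Throughout, an $s\times k$ matrix with $s\le k$ is NSC exactly when every $i\times i$ submatrix formed by the first $i$ rows is invertible, for every $1\le i\le s$.

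\emph{Necessity.} Let $A=(a_{i,j})$ be an $s\times k$ NSC matrix with $s\geq 2$. Taking $i=s$ requires an $s\times s$ submatrix of an $s\times k$ matrix, so $k\geq s$. Next consider the $i=2$ condition. Every $1\times 1$ submatrix of the first row must be invertible, so all entries $a_{1,j}$ are nonzero. Every $2\times 2$ minor of the first two rows must be nonzero:
\[
a_{1,j}a_{2,l}-a_{1,l}a_{2,j}\neq 0 \quad (j\neq l).
\]
This says that the ratios $a_{2,j}/a_{1,j}$, for $1\le j\le k$, are pairwise distinct elements of $\mathbb{F}_q$. Hence $k\leq q$. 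This is the step that uses $s\geq 2$: for $s=1$ the bound would be $k\le q-1$ and the equivalence would fail.

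\emph{Sufficiency.} Given $s\le k\le q$, I construct a generalized Vandermonde-type matrix. Choose $k$ distinct elements $\alpha_1,\ldots,\alpha_k\in\mathbb{F}_q$, which is possible since $k\le q$. Set $a_{i,j}=\alpha_j^{\,i-1}$, with the convention $0^0=1$.

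The submatrix $A(j_1,\ldots,j_i)$ is then the $i\times i$ Vandermonde matrix on $\alpha_{j_1},\ldots,\alpha_{j_i}$. Its determinant is
\[
\prod_{1\le u<v\le i}\big(\alpha_{j_v}-\alpha_{j_u}\big)\neq 0 .
\]
The first row consists of all ones, so the $i=1$ condition also holds, even when some $\alpha_j=0$. Thus $A$ is NSC.

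The only subtle point is this convention for a possible zero node when $k=q$. Using the exponent $i-1$, so that the first row is constant, avoids any problem, and no further obstacles are expected.
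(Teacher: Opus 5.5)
Your argument is correct and complete. The nonvanishing of the $1\times 1$ and $2\times 2$ minors of the first two rows forces the ratios $a_{2,j}/a_{1,j}$ to be pairwise distinct elements of $\mathbb{F}_q$, which gives $k\le q$. The Vandermonde matrix $(\alpha_j^{\,i-1})$ on $k$ distinct nodes, with $0^0=1$, gives sufficiency. The paper does not prove this proposition itself; it quotes it from \cite[Proposition 3.3]{Blackmore2001Matrix}, and what you wrote is the standard proof of that result.

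One correction concerns a side remark: for $s=1$ the bound is not $k\le q-1$. A $1\times k$ NSC matrix is just a row with all entries nonzero, and entries may repeat. Such matrices (for instance the all-ones row) therefore exist for every $k\ge 1$, so the equivalence fails for $s=1$ because there is no upper bound on $k$ at all.

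Also, $s\le k$ is part of the paper's setup for NSC matrices, since the definition is only formulated for $k\le t$. That is what justifies your step ``taking $i=s$ requires an $s\times s$ submatrix''. If one read the conditions for $i>k$ as vacuous instead, an $s\times k$ matrix with $s>k$ whose first $k$ rows are NSC would qualify.
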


\begin{proposition}\label{LND}
Let $N\in\mathbb{F}_{q}^{k\times k}$ be an NSC matrix. Then, for any invertible lower triangular matrix $L\in\mathbb{F}_{q}^{k\times k}$ and invertible diagonal matrix $D\in\mathbb{F}_{q}^{k\times k}$, the matrices $LN$, $ND$, and $LND$ are all NSC matrices.
\end{proposition}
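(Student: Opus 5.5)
The statement to prove is Proposition~\ref{LND}: if $N$ is $k\times k$ NSC, then $LN$, $ND$ and $LND$ are NSC. I will prove it directly from the definition, by studying how a submatrix $A(j_1,\ldots,j_i)$ (first $i$ rows, chosen columns) changes under each multiplication. It is enough to handle $LN$ and $ND$ separately. The statement for $LND$ then follows by applying the $ND$ case to the NSC matrix $LN$.

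\emph{Right multiplication by $D$.} Write $D=\mathrm{diag}(d_1,\ldots,d_k)$ with all $d_j\neq 0$. Multiplying by $D$ scales the $j$-th column of $N$ by $d_j$. Hence, for any $1\le i\le k$ and $1\le j_1<\cdots<j_i\le k$,
\[
(ND)(j_1,\ldots,j_i)=N(j_1,\ldots,j_i)\,\mathrm{diag}(d_{j_1},\ldots,d_{j_i}).
\]
Its determinant is $\det N(j_1,\ldots,j_i)\prod_{s} d_{j_s}$, which is nonzero. So $ND$ is NSC.

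\emph{Left multiplication by $L$.} This is the key step. Since $L$ is lower triangular, row $s$ of $LN$ is a combination of rows $1,\ldots,s$ of $N$ only. Consequently, the first $i$ rows of $LN$ equal $L_i$ times the first $i$ rows of $N$, where $L_i$ is the leading $i\times i$ principal submatrix of $L$. Restricting to columns $j_1,\ldots,j_i$ gives
\[
(LN)(j_1,\ldots,j_i)=L_i\,N(j_1,\ldots,j_i).
\]
The matrix $L_i$ is lower triangular, and its diagonal entries are diagonal entries of the invertible matrix $L$, so they are all nonzero. Hence $\det L_i\neq 0$, and the determinant of the product above is nonzero. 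Therefore $LN$ is NSC.

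The only point needing care is the block identity for the first $i$ rows of $LN$. It relies on the zero pattern of a lower triangular matrix, $\ell_{s,t}=0$ for $t>s$. I expect no real obstacle beyond verifying this identity.
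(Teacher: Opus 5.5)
Your proof is correct. The factorizations $(LN)(j_1,\ldots,j_i)=L_i\,N(j_1,\ldots,j_i)$ and $(ND)(j_1,\ldots,j_i)=N(j_1,\ldots,j_i)\,\mathrm{diag}(d_{j_1},\ldots,d_{j_i})$, together with $LND=(LN)D$, are all that is needed. The paper gives no proof of its own here (it quotes the result from \cite[Lemma 3.1]{Cao2025Entanglement}), and your submatrix-by-submatrix argument is the standard way to establish it.
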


The matrix-product (MP) construction technique serves as a powerful tool for constructing various kinds of quantum codes.
MP codes whose defining matrices are NSC matrices, especially those possessing special structural properties such as orthogonality, unitarity, quasi-orthogonality, or quasi-unitarity, are of significant value in the construction of different types of quantum codes. For instance, Galindo, Hernando, and Ruano in \cite{Galindo2015New} constructed Euclidean dual-containing MP codes via NSC orthogonal, NSC quasi-orthogonal, and quasi-orthogonal matrices, which yielded numerous new quantum codes. In \cite{Liu2018On}, Liu, Dinh, Liu, and Yu established an efficient method for constructing Hermitian dual-containing MP codes based on quasi-unitary matrices, resulting in many new quantum codes through the application of NSC quasi-unitary and quasi-unitary matrices. MP codes whose defining matrices are NSC quasi-unitary matrices, NSC quasi-orthogonal matrices, or their related derivatives, have proven very useful in constructing entanglement-assisted quantum codes (see, e.g., \cite{Cao2025Classical}, \cite{Cao2024Construction}, \cite{Cao2025Entanglement}, \cite{Li2025EAQEC}, \cite{Liu2019Entanglement}). Furthermore, Liu and Hu in \cite{Liu2024Constructions} provided new asymmetric entanglement-assisted quantum codes via MP codes associated with NSC quasi-orthogonal and quasi-orthogonal matrices. Very recently, Liu, Hu, and Liu in \cite{Liu2026subsystem} utilized MP codes involving NSC quasi-orthogonal and quasi-orthogonal matrices to construct new subsystem codes.

The concept of \emph{$\tau$-optimal defining ($\tau$-OD) matrices} was introduced in \cite{Cao2024On} as a special class of NSC matrices.
Such matrices were subsequently categorized into two types in \cite{Cao2025Entanglement}, namely type I $\tau$-OD matrices and type II $\tau$-OD matrices.
In this paper, we focus on type II $\tau$-OD matrices to construct our desired codes.
For the sake of brevity, the term ``$\tau$-OD matrix'' will exclusively denote the ``type II $\tau$-OD matrix' throughout this paper.
Specifically, a matrix $A=(a_{i,j})\in \mathbb{F}_{q^{2}}^{k \times k}$ is called a {\it $\tau$-OD matrix} if $A$ is an NSC matrix and $AA^{\dag}$ is a $\tau$-monomial matrix; that is, $AA^{\dag}=DP_{\tau}$, where $D$ is an invertible diagonal matrix and $P_{\tau}$ is the permutation matrix associated with the permutation $\tau$. Here, $A^{\dag}=(a_{j,i}^{q})$ denotes the conjugate transpose of $A=(a_{i,j})$. When $\tau$ is the identity permutation, the corresponding $\tau$-OD matrix reduces to an NSC quasi-unitary matrix. Consequently, $\tau$-OD matrices generalize NSC quasi-unitary matrices.

In this paper, we employ MP codes whose defining matrices are $\tau$-OD matrices to construct new quantum codes and quantum $(r,\delta)$-LRCs.
Our main contributions are summarized as follows:
\begin{itemize}
\item [1.] By applying the matrix properties developed in this paper for finite fields of arbitrary characteristic to $\tau$-OD matrices and MP codes,
we obtain numerous record-breaking quantum codes. Specifically,
\begin{itemize}
\item [1)] We establish a unified $\tau$-monomial decomposition theorem for invertible self-adjoint matrices over finite fields of arbitrary characteristic,
generalizing the result in \cite{CaoZhou2026} where the characteristic was required to be odd (see Theorem \ref{theorem-com}).

\vspace{3pt}

\item [2)] We prove the existence of $\tau$-OD matrices over $\mathbb{F}_{q^2}$ for any characteristic (see Theorem \ref{1theorem-com}).
We show that there exist several new infinite families of $\tau$-OD matrices over finite fields of characteristic $2$ (see Theorems \ref{theorem-inf1} and \ref{theorem-inf2}).

\vspace{3pt}

\item [3)] Using MP codes involving $\tau$-OD matrices, we derive several infinite families of quantum codes with flexible parameters.
Within this framework, we present $222$ record-breaking quantum codes that surpass the best-known lower bounds on the minimum distances of quantum codes maintained in Grassl's database \cite{Grassl2025Bounds} (see Tables \ref{table1} and \ref{table2}).
\end{itemize}

\vspace{3pt}

\item [2.] MP codes with $\tau$-OD matrices as their defining matrices are utilized to the optimal construction of pure quantum $(r,\delta)$-LRCs. Specifically,
\begin{itemize}
\item [1)] We propose two effective schemes for constructing optimal pure quantum $(r,\delta)$-LRCs via MP codes (see Theorems \ref{theorem-optqx} and
\ref{theorem-optqx-x1}).

\vspace{3pt}

\item [2)] Based on Theorems \ref{theorem-optqx} and \ref{theorem-optqx-x1}, we construct four new infinite families of optimal pure quantum
$(r,\delta)$-LRCs with flexible parameters (see Theorems \ref{cor-sepoql}-\ref{3qsquare}).
The parameters of our optimal pure quantum $(r,\delta)$-LRCs established in Theorems \ref{cor-sepoql}-\ref{3qsquare} are new in comparison with the existing constructions in \cite{Galindo2026, Galindo2025Optimal, Galindo2026QuantumBCH} (see the detailed comparison in Subsection \ref{compar} and Table \ref{table666}).

\vspace{3pt}

\item [3)] We report an interesting phenomenon by exhibiting $30$ optimal pure quantum $(r,\delta)$-LRCs derived from our framework; that is, there exist quantum
codes that are not only optimal pure quantum $(r,\delta)$-LRCs but also, according to Grassl's database \cite{Grassl2025Bounds}, best-known, optimal, or record-breaking quantum codes (see Table \ref{table3} and Remark \ref{remark1rc}). To the best of our knowledge, the new discovery that quantum codes are simultaneously optimal pure quantum $(r,\delta)$-LRCs and record-breaking quantum codes has not been previously reported in the literature.
\end{itemize}
\end{itemize}

\vspace{6pt}

This paper is organized as follows. Section \ref{sec:pre} provides the necessary preliminaries for this paper.
Section \ref{sec:com} establishes a unified $\tau$-monomial decomposition theorem for invertible self-adjoint matrices and proves the existence of new $\tau$-OD matrices. Section \ref{infinitequ} presents several infinite families of quantum codes, including record-breaking quantum codes.
Section \ref{qopt} constructs several infinite families of optimal pure quantum $(r,\delta)$-LRCs and reports a new discovery regarding such codes.
Finally, Section \ref{conclusion} concludes the paper.

\section{Preliminaries}\label{sec:pre}

An $[n,k,d]_{q}$ linear code over $\mathbb{F}_{q}$ is a $k$-dimensional subspace of $\mathbb{F}_{q}^n$ with minimum (Hamming) distance $d$.
By the {\it Singleton bound}, the minimum distance satisfies $d\leq n-k+1$.
A code achieving equality, i.e., $d=n-k+1$, is called a {\it maximum distance separable (MDS) code}.

The {\it Euclidean dual code} $\mathcal{C}^{\perp_{\mathrm{E}}}$ and the {\it Hermitian dual code} $\mathcal{C}^{\perp_{\mathrm{H}}}$ of a linear code $\mathcal{C}$ of length $n$ over $\mathbb{F}_{q^{2}}$ are respectively defined as
\begin{align*}
\mathcal{C}^{\perp_{\mathrm{E}}}=\Big\{\mathbf{x}=(x_{1},\ldots,x_{n})\in\mathbb{F}_{q^{2}}^{n}:\
\langle\mathbf{x},\mathbf{y}\rangle_{\mathrm{E}}=\sum_{i=1}^n x_i y_i=0 \ \mathrm{for} \ \mathrm{all} \  \mathbf{y}=
(y_{1},\ldots,y_{n})\in \mathcal{C}\Big\}
\end{align*}
and
\begin{align*}
\mathcal{C}^{\perp_{\mathrm{H}}}=\Big\{\mathbf{x}=(x_{1},\ldots,x_{n})\in\mathbb{F}_{q^{2}}^{n}:\
\langle\mathbf{x},\mathbf{y}\rangle_{\mathrm{H}}=\sum_{i=1}^n x_i y_i^{q}=0 \ \mathrm{for} \ \mathrm{all} \  \mathbf{y}=
(y_{1},\ldots,y_{n})\in \mathcal{C}\Big\}.
\end{align*}

We say that $\mathcal{C}$ is {\it Euclidean self-orthogonal} if $\mathcal{C}\subseteq\mathcal{C}^{\perp_{\mathrm{E}}}$,
and {\it Hermitian self-orthogonal} if $\mathcal{C}\subseteq\mathcal{C}^{\perp_{\mathrm{H}}}$.
Conversely, $\mathcal{C}$ is called {\it Euclidean dual-containing} if $\mathcal{C}^{\perp_{\mathrm{E}}}\subseteq\mathcal{C}$,
and {\it Hermitian dual-containing} if $\mathcal{C}^{\perp_{\mathrm{H}}}\subseteq\mathcal{C}$.

\subsection{Generalized Reed-Solomon (GRS) Codes}

Let $k$ and $n$ be two positive integers such that $k\leq n\leq q$.
Given two vectors $\mathbf{a}=(a_{1},\ldots,a_{n})$ and $\mathbf{v}=(v_{1},\ldots,v_{n})$, where $a_{1},\ldots,a_{n}$ are distinct elements in $\mathbb{F}_{q}$
and $v_{1},\ldots,v_{n}$ are nonzero elements in $\mathbb{F}_{q}$.
The \emph{generalized Reed-Solomon (GRS) code} of length $n$ and dimension $k$, associated with $\mathbf{a}$ and $\mathbf{v}$, is defined as
\begin{align*}
\mathrm{GRS}_{k}(\mathbf{a},\mathbf{v})=\{(v_{1}f(a_{1}),\ldots,v_{n}f(a_{n})): f(x)\in \mathbb{F}_{q}[x], \mathrm{with}\ \mathrm{deg}(f(x))\leq k-1\}.
\end{align*}
This code is an $[n, k]_q$ MDS code with a generator matrix given by
\begin{align}\label{GRSGenerator}
G_{k}(\mathbf{a},\mathbf{v})=\begin{pmatrix}
v_{1}& v_{2}&  \cdots &v_{n}\\
v_{1}a_{1}&v_{2}a_{2}& \cdots &v_{n}a_{n}\\
\vdots&\vdots&\ddots&\vdots\\
v_{1}a_{1}^{k-1}& v_{2}a_{2}^{k-1}& \ldots &v_{n}a_{n}^{k-1}
\end{pmatrix}.
\end{align}

\subsection{Matrix-Product (MP) Codes}

We recall several well-known results on MP codes in different settings.

\begin{lemma}{\rm (\cite[Page 54]{Ozbudak2002Note})}\label{proposition1}
Let $\mathcal{C}_{i}$ be an $[n,r_{i},d_{i}]_{q}$ linear code for $1\leq i\leq k$, and let $A\in \mathbb{F}_{q}^{k\times t}$ be a matrix of full row rank.
Then, the MP code $\mathcal{C}(A)=[\mathcal{C}_{1},\mathcal{C}_{2},\ldots,\mathcal{C}_{k}]\cdot A$ is an
$\big[tn,\sum_{i=1}^{k}r_{i},\geq \min\{D_{i}(A)d_{i}:1\leq i\leq k\}\big]_{q}$ linear code, where $D_{i}(A)$ denotes the minimum distance of the linear code on $\mathbb{F}_{q}^{t}$ generated by the first $i$ rows of $A$.
\end{lemma}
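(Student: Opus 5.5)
The plan is to treat the three claims of Lemma~\ref{proposition1} (length, dimension, distance bound) separately. Each follows from viewing $\mathcal{C}(A)$ as the image of a linear map.

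\emph{Length and dimension.} Consider the map
\[
\Phi:\mathcal{C}_1\times\cdots\times\mathcal{C}_k\to\mathbb{F}_q^{tn},\qquad (\mathbf{c}_1,\ldots,\mathbf{c}_k)\mapsto[\mathbf{c}_1,\ldots,\mathbf{c}_k]\cdot A.
\]
It is $\mathbb{F}_q$-linear, and its image is $\mathcal{C}(A)$. So $\mathcal{C}(A)$ is a linear code of length $tn$.

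For the dimension I will show that $\Phi$ is injective. Arrange the codewords as the columns of an $n\times k$ matrix $M=[\mathbf{c}_1,\ldots,\mathbf{c}_k]$, so that $\Phi$ sends $M$ to $MA$. Since $A$ has full row rank $k$, it has a right inverse $B$ with $AB=I_k$. Then $MA=0$ forces $M=MAB=0$. Hence $\dim\mathcal{C}(A)=\sum_i r_i$.

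\emph{Minimum distance.} Take a nonzero codeword $\mathbf{c}=MA$ and let $i$ be the largest index with $\mathbf{c}_i\neq 0$. Then $\mathbf{c}_{i+1}=\cdots=\mathbf{c}_k=0$, so only the first $i$ rows of $A$ contribute. For each coordinate position $\ell\in\{1,\ldots,n\}$, the $\ell$-th row of $MA$ is
\[
(c_{\ell,1},\ldots,c_{\ell,i},0,\ldots,0)\,A ,
\]
which is a codeword of the code $\langle A_i\rangle\subseteq\mathbb{F}_q^t$ generated by the first $i$ rows of $A$.

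Now restrict to the positions $\ell$ with $c_{\ell,i}\neq 0$; there are at least $d_i$ of them because $\mathbf{c}_i\in\mathcal{C}_i$ is nonzero. For such $\ell$, the coefficient vector $(c_{\ell,1},\ldots,c_{\ell,i})$ is nonzero. The first $i$ rows of $A$ are linearly independent by the full-rank assumption, so the resulting row is a nonzero codeword of $\langle A_i\rangle$. It therefore has weight at least $D_i(A)$.

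These $d_i$ or more rows occupy disjoint coordinate sets of $\mathbf{c}$. Summing their weights gives
\[
\mathrm{wt}(\mathbf{c})\ge D_i(A)\,d_i\ge\min_j D_j(A)\,d_j .
\]

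The only delicate point is the step where a nonzero coefficient vector must yield a nonzero row. That is exactly where linear independence of the first $i$ rows is needed, and this is supplied by the full row rank hypothesis. Everything else is routine bookkeeping of the row-by-row weight count.
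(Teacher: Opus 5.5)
Your proof is correct and complete. The dimension follows from injectivity of $M\mapsto MA$ via a right inverse of $A$, and the distance bound follows from the row-by-row weight count over the at least $d_i$ positions with $c_{\ell,i}\neq 0$, where you rightly use linear independence of the first $i$ rows of $A$. The paper gives no proof of this lemma and only quotes it from \cite{Ozbudak2002Note}; your argument is essentially the standard one from that source.
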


\begin{lemma}{\rm (\cite[Theorem 1]{Hernando2009Construction})}\label{nested-distance}
With the same notation as in Lemma \ref{proposition1}. If the constituent codes $\mathcal{C}_i$ are nested for $1\leq i\leq k$, i.e.,
$\mathcal{C}_1\supseteq\mathcal{C}_2\supseteq \ldots\supseteq \mathcal{C}_{k}$, then the minimum distance of $\mathcal{C}(A)$ is $d(\mathcal{C}(A))=\min\{D_{i}(A)d_{i}: 1\leq i\leq k\}$.
\end{lemma}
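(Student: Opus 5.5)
The plan is to prove the two inequalities $d(\mathcal{C}(A))\geq \min\{D_i(A)d_i\}$ and $d(\mathcal{C}(A))\leq \min\{D_i(A)d_i\}$ separately. The lower bound is exactly Lemma \ref{proposition1}, since $A$ has full row rank. So all the work is in exhibiting, for each fixed $i$, a nonzero codeword of weight at most $D_i(A)d_i$. Fix $i$, pick $\mathbf{c}\in\mathcal{C}_i$ of weight $d_i$, and pick a nonzero vector $\mathbf{w}=\sum_{l=1}^{i}\lambda_l A_l\in\mathbb{F}_q^t$ (with $A_l$ the $l$-th row of $A$) of weight $D_i(A)$ in the code generated by the first $i$ rows.

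Nestedness now lets us use the same constituent vector in every slot. Because $\mathcal{C}_1\supseteq\cdots\supseteq\mathcal{C}_i$, the vector $\mathbf{c}$ lies in $\mathcal{C}_l$ for every $l\le i$. Set $\mathbf{c}_l=\lambda_l\mathbf{c}$ for $l\le i$ and $\mathbf{c}_l=\mathbf{0}$ for $l>i$. The resulting codeword $[\mathbf{c}_1,\ldots,\mathbf{c}_k]\cdot A$ has $j$-th block
\[
\sum_{l=1}^{i}\lambda_l a_{l,j}\,\mathbf{c}=w_j\,\mathbf{c}.
\]
Hence the codeword equals $\mathbf{w}\otimes\mathbf{c}$ up to coordinate ordering, and its weight is $\mathrm{wt}(\mathbf{w})\cdot\mathrm{wt}(\mathbf{c})=D_i(A)d_i$.

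This codeword is nonzero because $\mathbf{w}\neq\mathbf{0}$ and $\mathbf{c}\neq\mathbf{0}$, so $d(\mathcal{C}(A))\le D_i(A)d_i$ for each $i$. Taking the minimum over $i$ and combining with the lower bound gives equality.

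There is no serious obstacle. The only point needing care is that the tensor-product codeword is genuinely a codeword of $\mathcal{C}(A)$, which is precisely where the nesting hypothesis is used. We also need the degenerate case $\mathcal{C}_i=\{0\}$ excluded, which is implicit since $d_i$ must be defined.
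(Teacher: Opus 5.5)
Your proof is correct. The paper gives no proof of its own and only cites \cite[Theorem 1]{Hernando2009Construction}; your argument is essentially the standard one from that source, where the lower bound is Lemma \ref{proposition1} and the upper bound comes from the codeword $[\lambda_1\mathbf{c},\ldots,\lambda_i\mathbf{c},\mathbf{0},\ldots,\mathbf{0}]\cdot A=(w_1\mathbf{c},\ldots,w_t\mathbf{c})$ of weight $D_i(A)d_i$, which is exactly where the nesting hypothesis is needed.
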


The Singleton bound implies that for any matrix $A\in \mathbb{F}_{q}^{k\times t}$ of full row rank, $D_{i}(A)\leq t-i+1$ holds for all $1\leq i\leq k$.
By the definition of NSC matrices, if $A\in \mathbb{F}_{q}^{k\times t}$ is NSC, then $D_{i}(A)=t-i+1$ for all $1\leq i\leq k$.
Consequently, the parameters of an MP code whose defining matrix is an NSC matrix are given in the following lemma.

\begin{lemma}{\rm (\cite[Theorem 3.7]{Blackmore2001Matrix})}\label{proposition2}
Let $\mathcal{C}_{i}$ be an $[n,r_{i},d_{i}]_{q}$ linear code for $1\leq i\leq k$, and let $A\in \mathbb{F}_{q}^{k\times t}$ be an NSC matrix.
Then, the MP code $\mathcal{C}(A)=[\mathcal{C}_{1},\mathcal{C}_{2},\ldots,\mathcal{C}_{k}]\cdot A$ is an
$\big[tn,\sum_{i=1}^{k}r_{i},\geq \min\{(t-i+1)d_{i}:1\leq i\leq k\}\big]_{q}$ linear code.
\end{lemma}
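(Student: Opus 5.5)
The plan is to derive the statement directly from Lemma \ref{proposition1}, so the work reduces to computing the quantities $D_i(A)$ for an NSC matrix $A\in\mathbb{F}_{q}^{k\times t}$. First I check that an NSC matrix has full row rank. Take $i=k$ and any $k$ columns $j_1<\cdots<j_k$. The NSC hypothesis says $A(j_1,\ldots,j_k)$ is an invertible $k\times k$ submatrix of $A$, so $\mathrm{rank}(A)=k$. Lemma \ref{proposition1} then applies. It gives length $tn$ and dimension $\sum_{i=1}^k r_i$, and it gives the minimum distance bound $\min\{D_i(A)d_i\}$.

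The key step is to show that $D_i(A)\geq t-i+1$ for each $1\leq i\leq k$. Let $A_i$ be the $i\times t$ matrix formed by the first $i$ rows of $A$. The rows of $A_i$ are linearly independent, since they are part of a full-rank matrix, so a nonzero codeword has the form $\mathbf{x}=\mathbf{u}A_i$ with $\mathbf{u}\in\mathbb{F}_q^i\setminus\{\mathbf{0}\}$. Suppose $\mathbf{x}$ had weight at most $t-i$. Then at least $i$ coordinates of $\mathbf{x}$ vanish; pick $i$ of them, $j_1<\cdots<j_i$. This means $\mathbf{u}A(j_1,\ldots,j_i)=\mathbf{0}$. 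By the NSC property $A(j_1,\ldots,j_i)$ is invertible, which forces $\mathbf{u}=\mathbf{0}$, a contradiction. Hence every nonzero codeword has weight at least $t-i+1$, and $D_i(A)\geq t-i+1$.

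The reverse inequality $D_i(A)\leq t-i+1$ follows from the Singleton bound applied to the $[t,i]$ code generated by $A_i$. Therefore $D_i(A)=t-i+1$.

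Substituting this into Lemma \ref{proposition1} gives $d(\mathcal{C}(A))\geq\min\{(t-i+1)d_i:1\leq i\leq k\}$. No step is a real obstacle. The only point that needs care is turning ``at least $i$ zero coordinates'' into a singular $i\times i$ submatrix using the columns $j_1<\cdots<j_i$, and that matches exactly how the NSC definition is indexed.
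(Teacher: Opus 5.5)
Your proposal is correct and is essentially the paper's own argument: the paper also obtains this lemma from Lemma~\ref{proposition1}, noting that the Singleton bound gives $D_i(A)\leq t-i+1$ and that the NSC definition forces $D_i(A)=t-i+1$. You make explicit the full-row-rank check and the invertible-submatrix argument for $D_i(A)\geq t-i+1$ that the paper leaves implicit; only that lower bound is actually needed for the stated inequality.
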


\begin{lemma}{\rm (\cite[Page 7]{Zhang2015Quantum})}\label{proposition4}
Let $\mathcal{C}_{i}$ be a linear code of length $n$ over $\mathbb{F}_{q^{2}}$ for $1\leq i\leq k$, and let $A\in\mathbb{F}_{q^{2}}^{k\times k}$ be an invertible matrix.
Then, the Hermitian dual code of the MP code $\mathcal{C}(A)=[\mathcal{C}_{1},\mathcal{C}_{2},\ldots,\mathcal{C}_{k}]\cdot A$ is given by
\begin{align*}
\mathcal{C}(A)^{\bot_{\mathrm{H}}}=\big[\mathcal{C}_{1}^{\bot_{\mathrm{H}}},\mathcal{C}_{2}^{\bot_{\mathrm{H}}},\ldots, \mathcal{C}_{k}^{\bot_{\mathrm{H}}}\big]
\cdot (A^{-1})^{\dag}.
\end{align*}
Here $B^{\dag}=(b_{j,i}^{q})$ denotes the conjugate transpose of the matrix $B=(b_{i,j})$ over $\mathbb{F}_{q^{2}}$.
\end{lemma}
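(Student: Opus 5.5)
The statement to prove is Lemma \ref{proposition4}: for invertible $A$, the Hermitian dual of $\mathcal{C}(A)$ equals $\big[\mathcal{C}_{1}^{\bot_{\mathrm{H}}},\ldots,\mathcal{C}_{k}^{\bot_{\mathrm{H}}}\big]\cdot (A^{-1})^{\dag}$. The plan is to show one inclusion directly and then conclude equality by comparing dimensions.

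\textbf{Setup.} Write a codeword of $\mathcal{C}(A)$ as an $n\times k$ matrix $C=[\mathbf{c}_1,\ldots,\mathbf{c}_k]$ (columns $\mathbf{c}_i\in\mathcal{C}_i$) multiplied on the right by $A$, so the codeword is $CA\in\mathbb{F}_{q^2}^{n\times k}$ read column by column. Similarly, a candidate dual word is $Y(A^{-1})^{\dag}$ with $Y=[\mathbf{y}_1,\ldots,\mathbf{y}_k]$ and $\mathbf{y}_i\in\mathcal{C}_i^{\bot_{\mathrm{H}}}$. For two such $n\times k$ arrays $X,Z$, the Hermitian inner product of the corresponding length-$kn$ vectors is $\mathrm{tr}\big(X^{T}Z^{(q)}\big)$, where $Z^{(q)}$ denotes entrywise $q$-th powers.

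\textbf{Orthogonality.} Compute
\[
\langle Y(A^{-1})^{\dag},\,CA\rangle_{\mathrm{H}}=\mathrm{tr}\big(((A^{-1})^{\dag})^{T}Y^{T}C^{(q)}A^{(q)}\big).
\]
Since $((A^{-1})^{\dag})^{T}=(A^{-1})^{(q)}=(A^{(q)})^{-1}$, cyclicity of the trace gives
\[
\mathrm{tr}\big(Y^{T}C^{(q)}\big)=\sum_{i=1}^{k}\langle \mathbf{y}_i,\mathbf{c}_i\rangle_{\mathrm{H}}=0 .
\]
Hence $\big[\mathcal{C}_{1}^{\bot_{\mathrm{H}}},\ldots,\mathcal{C}_{k}^{\bot_{\mathrm{H}}}\big]\cdot (A^{-1})^{\dag}\subseteq \mathcal{C}(A)^{\bot_{\mathrm{H}}}$.

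\textbf{Dimension count.} Both $A$ and $(A^{-1})^{\dag}$ are invertible, so Lemma \ref{proposition1} gives
\[
\dim\mathcal{C}(A)=\sum_{i=1}^{k} r_i
\quad\text{and}\quad
\dim\big[\mathcal{C}_{1}^{\bot_{\mathrm{H}}},\ldots,\mathcal{C}_{k}^{\bot_{\mathrm{H}}}\big]\cdot (A^{-1})^{\dag}=\sum_{i=1}^{k}(n-r_i)=kn-\sum_{i=1}^{k} r_i .
\]
The latter equals $\dim\mathcal{C}(A)^{\bot_{\mathrm{H}}}$, so the inclusion is an equality.

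\textbf{Main obstacle.} The only delicate step is the index bookkeeping in the orthogonality computation. One must ensure that conjugation acts on the correct factor and that $(A^{-1})^{\dag}$ transposes to $(A^{(q)})^{-1}$. This holds because the Frobenius map $x\mapsto x^q$ is a ring automorphism, so it commutes with matrix inversion and multiplication.
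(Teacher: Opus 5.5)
Your proof is correct and complete. The identity $\langle X,Z\rangle_{\mathrm{H}}=\mathrm{tr}(X^{\top}Z^{(q)})$ matches the paper's convention that the $j$-th block of $[\mathbf{c}_1,\ldots,\mathbf{c}_k]\cdot A$ is $\sum_{i}a_{i,j}\mathbf{c}_i$. The step $((A^{-1})^{\dag})^{\top}=(A^{-1})^{(q)}=(A^{(q)})^{-1}$ holds because $x\mapsto x^{q}$ is a ring automorphism. The dimension count needs only two facts: the Hermitian form is nondegenerate, so $\dim\mathcal{C}_i^{\perp_{\mathrm{H}}}=n-\dim\mathcal{C}_i$, and right multiplication by an invertible $k\times k$ matrix is injective. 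Do declare $r_i=\dim\mathcal{C}_i$ before using it.

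The paper gives no proof of this lemma; it is quoted from \cite{Zhang2015Quantum}, so there is no in-text argument to compare with. A common alternative reduces to the Euclidean statement of Blackmore and Norton, $\mathcal{C}(A)^{\perp_{\mathrm{E}}}=[\mathcal{C}_1^{\perp_{\mathrm{E}}},\ldots,\mathcal{C}_k^{\perp_{\mathrm{E}}}]\cdot(A^{-1})^{\top}$. Writing $\mathcal{C}^{(q)}=\{\mathbf{c}^{q}:\mathbf{c}\in\mathcal{C}\}$, one has $\mathcal{C}^{\perp_{\mathrm{H}}}=(\mathcal{C}^{(q)})^{\perp_{\mathrm{E}}}$ and $\mathcal{C}(A)^{(q)}=[\mathcal{C}_1^{(q)},\ldots,\mathcal{C}_k^{(q)}]\cdot A^{(q)}$. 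Hence $\mathcal{C}(A)^{\perp_{\mathrm{H}}}=[\mathcal{C}_1^{\perp_{\mathrm{H}}},\ldots,\mathcal{C}_k^{\perp_{\mathrm{H}}}]\cdot((A^{(q)})^{-1})^{\top}$, and $((A^{(q)})^{-1})^{\top}=(A^{-1})^{\dag}$.

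That route is shorter once the Euclidean case is granted. Your direct computation is self-contained. It also generalizes verbatim: replacing $x\mapsto x^{q}$ by an automorphism $\sigma$ proves the $\sigma$-dual version with defining matrix $(A^{-1})^{\star}$, in the paper's $\star$ notation.
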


\subsection{Classical $(r,\delta)$-Locally Recoverable Codes (LRCs)}\label{LRC}

Given an $[n,k,d]_q$ linear code $\mathcal{C}$. For $1\leq i\leq n$, the $i$-th symbol $c_i$ of $\mathcal{C}$ is said to have \emph{$(r,\delta)$-locality} if there exists a subset $S_i\subseteq [n]$ containing $i$ and a punctured code $\mathcal{C}|_{S_i}$ such that $|S_i|\leq r+\delta-1$ and the minimum distance $d(\mathcal{C}|_{S_i})\geq \delta $. Here, $\mathcal{C}|_{S_i}$ denotes the code $\mathcal{C}$ punctured on the coordinate set $[n]\backslash S_i$ by deleting the components indexed by $[n]\backslash S_i$ in each codeword of $\mathcal{C}$. The code $\mathcal{C}$ is called a \emph{classical $(r,\delta)$-LRC} if all its symbols have $(r,\delta)$-locality.

In \cite{Prakash2012}, a Singleton-type bound for the parameters of a classical $(r,\delta)$-LRC was established:
\begin{align}\label{rdelta-singleton}
d\leq n-k+1-\bigg(\bigg\lceil \frac{k}{r}\bigg\rceil-1\bigg)(\delta-1),
\end{align}
where $\lceil\cdot\rceil$ denotes the ceiling function.
A classical $(r,\delta)$-LRC attaining this bound with equality is called an \emph{optimal classical $(r,\delta)$-LRC}.

\begin{lemma}{\rm (\cite[Lemma 3]{Luo2023Three})}\label{local}
Let $\mathcal{C}_i$ be a linear code of length $n$ over $\mathbb{F}_{q}$ for $1\leq i\leq k$, with
$\mathcal{C}_1\supseteq\mathcal{C}_2\supseteq \ldots\supseteq \mathcal{C}_{k}$.
Let $A\in \mathbb{F}_{q}^{k\times t}$ be a matrix of full row rank. If $\mathcal{C}_1$ is a classical $(r,\delta)$-LRC, then the MP code $\mathcal{C}(A)=[\mathcal{C}_{1},\mathcal{C}_{2},\ldots,\mathcal{C}_{k}]\cdot A$ is also a classical $(r,\delta)$-LRC.
\end{lemma}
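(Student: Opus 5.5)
The plan is to show directly that every coordinate of $\mathcal{C}(A)$ has $(r,\delta)$-locality, by reducing to the locality of $\mathcal{C}_1$ inside each block. Index the coordinates of a codeword $\mathbf{c}=[\mathbf{c}_1,\ldots,\mathbf{c}_k]\cdot A$ by pairs $(j,l)$ with $1\le j\le t$ (block) and $1\le l\le n$ (position inside the block). The $(j,l)$-th entry is $\sum_{i=1}^k a_{i,j}c_{l,i}$.

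The key observation uses the nesting $\mathcal{C}_1\supseteq\cdots\supseteq\mathcal{C}_k$. Since every $\mathbf{c}_i$ lies in $\mathcal{C}_1$, the $j$-th block
\begin{align*}
\mathbf{b}_j:=\sum_{i=1}^k a_{i,j}\mathbf{c}_i
\end{align*}
is a linear combination of codewords of $\mathcal{C}_1$, so $\mathbf{b}_j\in\mathcal{C}_1$ for every $j$.

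Now fix a coordinate $(j,l)$. Since $\mathcal{C}_1$ is a classical $(r,\delta)$-LRC, there is a set $S_l\subseteq[n]$ with $l\in S_l$, $|S_l|\le r+\delta-1$, and $d(\mathcal{C}_1|_{S_l})\ge\delta$. Take the recovery set
\begin{align*}
S_{(j,l)}:=\{(j,s):s\in S_l\},
\end{align*}
which contains $(j,l)$ and satisfies $|S_{(j,l)}|=|S_l|\le r+\delta-1$.

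The remaining point is the distance bound $d(\mathcal{C}(A)|_{S_{(j,l)}})\ge\delta$. The restriction map $\mathbf{c}\mapsto \mathbf{b}_j|_{S_l}$ sends $\mathcal{C}(A)$ into $\mathcal{C}_1|_{S_l}$. Hence $\mathcal{C}(A)|_{S_{(j,l)}}$ is a linear subcode of $\mathcal{C}_1|_{S_l}$, and a subcode has minimum distance at least that of the code, i.e.\ at least $\delta$. Two conventions should be checked here. If the restricted subcode is the zero code, its minimum distance is $\infty$ under the usual convention. If the definition is meant as ``any $\delta-1$ erasures in $S$ can be recovered'', this also holds, because it holds for the larger code $\mathcal{C}_1|_{S_l}$.

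There is no real obstacle; the only delicate step is the block-membership claim $\mathbf{b}_j\in\mathcal{C}_1$, which is exactly where the nesting is used. Full row rank of $A$ plays no role in the locality itself; it only guarantees the dimension $\sum_i r_i$ from Lemma \ref{proposition1}.
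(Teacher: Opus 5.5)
Your proof is correct. By the nesting, each block $\sum_{i=1}^k a_{i,j}\mathbf{c}_i$ lies in $\mathcal{C}_1$, so placing a repair set $S_l$ of $\mathcal{C}_1$ inside block $j$ gives a restricted code contained in $\mathcal{C}_1|_{S_l}$, with minimum distance at least $\delta$. The zero-code convention you flag never matters where the paper uses this lemma, because there $A$ is an NSC matrix, so every first-row entry is nonzero and no column vanishes.

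The paper gives no proof of its own; it imports the lemma from Luo et al. Your blockwise argument is the standard proof of that result. You are also right that full row rank of $A$ plays no role in the locality.
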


\subsection{Quantum Codes and Quantum $(r,\delta)$-Locally Recoverable Codes (LRCs)}

A $q$-ary quantum code $[[n,k,d]]_q$ is defined as a $q^k$-dimensional subspace of the complex Hilbert space $(\mathbb{C}^q)^{\otimes n} \cong \mathbb{C}^{q^n}$, where $d$ is the minimum distance. It is capable of detecting up to $d-1$ quantum errors and correcting up to $\left\lfloor \frac{d-1}{2} \right\rfloor$ quantum errors, where $\lfloor\cdot\rfloor$ denotes the floor function. The parameters of an $[[n,k,d]]_{q}$ quantum code satisfy the \emph{quantum Singleton bound}, i.e., $2d \leq n-k+2$.
Codes that attain this bound, i.e., $2d=n-k+2$, are referred to as \emph{quantum MDS codes}.

One of the most effective methods for constructing quantum codes is the {\it Hermitian construction} (see \cite[Corollary 1]{Ashikhmin2001Nonbinary}), which yields quantum codes from Hermitian dual-containing codes.

\begin{lemma}{\rm (\cite[Corollary 1]{Ashikhmin2001Nonbinary})}\label{proposition6}
Let $\mathcal{C}$ be an $[n,k,d]_{q^{2}}$ Hermitian dual-containing code. Then, there exists a quantum code $\mathcal{Q}$ with parameters
$[[n,2k-n,d(\mathcal{Q})\geq d]]_{q}$, where
\begin{align*}
d(\mathcal{Q})&=\begin{cases}
d, & \mathrm{if}\ \mathcal{C}^{\bot_{\mathrm{H}}}=\mathcal{C},\\
\mathrm{wt}(\mathcal{C}\backslash\mathcal{C}^{\bot_{\mathrm{H}}}), & \mathrm{if}\ \mathcal{C}^{\bot_{\mathrm{H}}}\subsetneq\mathcal{C}.
\end{cases}
\end{align*}
Here, $\mathrm{wt}(\mathcal{C}\backslash\mathcal{C}^{\bot_{\mathrm{H}}})$ represents the minimum weight of all nonzero vectors in
$\mathcal{C}\backslash\mathcal{C}^{\bot_{\mathrm{H}}}$.
The quantum code $\mathcal{Q}$ is called \textit{pure} if $d(\mathcal{Q})=d$, and \textit{impure} otherwise.
\end{lemma}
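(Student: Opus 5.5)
The plan is to derive the statement from the stabilizer formalism \cite{Gottesman1997Stabilizer}, in its nonbinary form as used in \cite{Ashikhmin2001Nonbinary}. The Hermitian self-orthogonal code $\mathcal{S}:=\mathcal{C}^{\bot_{\mathrm{H}}}$ will play the role of the stabilizer. By assumption $\mathcal{S}\subseteq\mathcal{C}=\mathcal{S}^{\bot_{\mathrm{H}}}$, and $\dim_{\mathbb{F}_{q^2}}\mathcal{S}=n-k$.

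\textbf{Step 1: transfer to the symplectic setting.} Fix an $\mathbb{F}_q$-basis $\{1,\omega\}$ of $\mathbb{F}_{q^2}$. Writing $x_i=a_i+\omega b_i$ identifies $\mathbb{F}_{q^2}^n$ with $\mathbb{F}_q^{2n}$, sending $\mathbf{x}$ to $(\mathbf{a}|\mathbf{b})$. Under this map the Hamming weight of $\mathbf{x}$ equals the symplectic weight $|\{i:(a_i,b_i)\neq(0,0)\}|$. Define the trace-alternating form
\[
\langle\mathbf{x},\mathbf{y}\rangle_{a}=\mathrm{Tr}_{q^2/q}\!\left(\frac{\langle\mathbf{x},\mathbf{y}\rangle_{\mathrm{H}}-\langle\mathbf{x},\mathbf{y}\rangle_{\mathrm{H}}^{q}}{\omega-\omega^{q}}\right).
\]
This form corresponds, up to an invertible $\mathbb{F}_q$-linear change of coordinates, to the standard symplectic form on $\mathbb{F}_q^{2n}$.

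\textbf{Step 2: the duals agree.} The key claim is that for an $\mathbb{F}_{q^2}$-linear code $\mathcal{S}$, the trace-alternating dual equals $\mathcal{S}^{\bot_{\mathrm{H}}}$.

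One inclusion is immediate: if $\langle\mathbf{x},\mathbf{y}\rangle_{\mathrm{H}}=0$, then $\langle\mathbf{x},\mathbf{y}\rangle_a=0$.

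For the converse, suppose $\langle\mathbf{x},\mathbf{y}\rangle_a=0$ for all $\mathbf{y}\in\mathcal{S}$. Since $\mathcal{S}$ is $\mathbb{F}_{q^2}$-linear, we may replace $\mathbf{y}$ by $\lambda\mathbf{y}$ for every $\lambda\in\mathbb{F}_{q^2}$. Because $\langle\mathbf{x},\lambda\mathbf{y}\rangle_{\mathrm{H}}=\lambda^{q}\langle\mathbf{x},\mathbf{y}\rangle_{\mathrm{H}}$, this forces $\mathrm{Tr}\big(\mu\,\langle\mathbf{x},\mathbf{y}\rangle_{\mathrm{H}}\big)=0$ for all $\mu\in\mathbb{F}_{q^2}$. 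Nondegeneracy of the trace form then gives $\langle\mathbf{x},\mathbf{y}\rangle_{\mathrm{H}}=0$.

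This is the step I expect to need the most care. One must check that the expression inside $\mathrm{Tr}$ is really $\mathbb{F}_q$-bilinear and alternating, and that the scaling argument covers all of $\mathbb{F}_{q^2}$. From it we get two consequences: $\mathcal{S}$ is symplectically self-orthogonal, and its symplectic dual is exactly $\mathcal{C}$.

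\textbf{Step 3: apply the stabilizer theorem.} Over $\mathbb{F}_q$, the code $\mathcal{S}$ has dimension $2(n-k)$. The nonbinary stabilizer theorem then gives an $[[n,\,n-2(n-k),\,d(\mathcal{Q})]]_q=[[n,2k-n,d(\mathcal{Q})]]_q$ code. Its minimum distance is the minimum symplectic weight of $\mathcal{S}^{\bot_a}\setminus\mathcal{S}=\mathcal{C}\setminus\mathcal{C}^{\bot_{\mathrm{H}}}$. When $\mathcal{S}=\mathcal{C}$, the convention is instead the minimum weight of $\mathcal{S}^{\bot_a}$, which is $d$. By Step 1 these symplectic weights are Hamming weights. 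Since $\mathcal{C}\setminus\mathcal{C}^{\bot_{\mathrm{H}}}\subseteq\mathcal{C}$, we get $d(\mathcal{Q})\ge d$. Purity, meaning $d(\mathcal{Q})=d$, is then just a definition in the statement.
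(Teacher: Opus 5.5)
The paper does not prove this lemma; it quotes it from \cite{Ashikhmin2001Nonbinary}. Your outline is the standard derivation from that source: take $\mathcal{S}=\mathcal{C}^{\bot_{\mathrm{H}}}$ as the stabilizer, show that for $\mathbb{F}_{q^2}$-linear codes the symplectic dual equals the Hermitian dual, and apply the stabilizer theorem. Your bookkeeping in Step 3 is correct: $\dim_{\mathbb{F}_q}\mathcal{S}=2(n-k)$ gives $n-2(n-k)=2k-n$ logical qudits, the $\mathcal{S}=\mathcal{C}$ case is handled, and Hamming weight coincides with symplectic weight.

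However, the form you wrote down is wrong, and it fails in exactly the characteristic the paper needs. For $z\in\mathbb{F}_{q^2}$ and $v=\omega-\omega^q\neq 0$, one has $(z-z^q)^q=-(z-z^q)$ and $v^q=-v$. Hence $c:=(z-z^q)/v$ is fixed by Frobenius and already lies in $\mathbb{F}_q$, so $\mathrm{Tr}_{q^2/q}(c)=c+c^q=2c$.

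Your $\langle\cdot,\cdot\rangle_a$ is therefore twice the intended form, and it is identically zero when $q$ is even. In that case two steps break:
\begin{itemize}
\item[(a)] the Step 1 claim that the form is equivalent to the standard symplectic form is false;
\item[(b)] the converse inclusion in Step 2 collapses, because its hypothesis $\langle\mathbf{x},\mathbf{y}\rangle_a=0$ holds for all $\mathbf{x}$ and gives no information.
\end{itemize}
The paper invokes this lemma with $q=2^e$ (Theorems \ref{theorem-qcode1} and \ref{theorem-qcode4}, and the $q=4,8$ examples), so this cannot be waved away. You may have conflated $\mathrm{Tr}_{q^2/q}$ with the trace down to the prime field that appears in some formulations of the symplectic form for additive codes.

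The repair is to drop the outer trace and define $\langle\mathbf{x},\mathbf{y}\rangle_a=(z-z^q)/(\omega-\omega^q)$ with $z=\langle\mathbf{x},\mathbf{y}\rangle_{\mathrm{H}}$. Write $x_i=a_i+\omega b_i$ and $y_i=a_i'+\omega b_i'$. Coordinatewise, $x_iy_i^q-x_i^qy_i=(\omega-\omega^q)(b_ia_i'-a_ib_i')$, so $\langle\mathbf{x},\mathbf{y}\rangle_a=\mathbf{b}\cdot\mathbf{a}'-\mathbf{a}\cdot\mathbf{b}'$. This is exactly minus the standard symplectic form, so no change of coordinates is needed.

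Moreover, $(z-z^q)/v=\mathrm{Tr}_{q^2/q}(z/v)$ because $v^q=-v$. It follows that $\langle\mathbf{x},\lambda\mathbf{y}\rangle_a=\mathrm{Tr}_{q^2/q}(\mu z)$ with $\mu=\lambda^q/v$, and $\mu$ runs over all of $\mathbb{F}_{q^2}$ as $\lambda$ does. This is precisely what your scaling argument in Step 2 uses. With this correction your argument is complete in every characteristic.
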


Starting from existing quantum codes, one can obtain new ones by employing the propagation rules, such as the lengthening and subcode constructions
(see \cite[Proposition 2.8]{Feng2006Asymptotic} and \cite[Theorem 4.1]{Grassl2021Algebraic}).

\begin{lemma}\label{proposition66}
Let $\mathcal{Q}$ be an $[[n,k,d]]_{q}$ quantum code. Then,
\begin{itemize}
\item [(1)] \textbf{\textup{(Lengthening)}} there exists an $[[n+1,k,\geq d]]_{q}$ quantum code if $k>0$.

\item [(2)] \textbf{\textup{(Subcode)}} there exists an $[[n,k-1,\geq d]]_{q}$ quantum code if $k>1$ or if $k=1$ and the initial code $\mathcal{Q}$ is pure.
\end{itemize}
\end{lemma}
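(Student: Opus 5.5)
Both parts are standard propagation rules. I would prove them in the stabilizer (additive) formalism, since every code used in this paper comes from Lemma \ref{proposition6} and is therefore a stabilizer code. Write $\mathcal{Q}$ as the code of a stabilizer, identified with an $\mathbb{F}_q$-linear subspace $S\subseteq\mathbb{F}_q^{2n}$ that is self-orthogonal for the symplectic form $\langle\cdot,\cdot\rangle_s$. Let $S^{\perp_s}$ be its symplectic dual. Then $\dim S=n-k$ and $\dim S^{\perp_s}=n+k$. The distance $d$ is the minimum symplectic weight of $S^{\perp_s}\setminus S$ when $k>0$, and purity means that every nonzero vector of $S^{\perp_s}$ has symplectic weight at least $d$.

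\textbf{Lengthening.} I would append one qudit in a fixed state, i.e.\ take $\mathcal{Q}'=\mathcal{Q}\otimes|0\rangle$. In symplectic language this is $S'=S\times\langle(0\,|\,1)\rangle\subseteq\mathbb{F}_q^{2(n+1)}$, where the added generator is $Z$ on the new coordinate. This $S'$ is symplectically self-orthogonal with $\dim S'=n-k+1=(n+1)-k$, so the dimension $k$ is preserved. Its dual is $S'^{\perp_s}=S^{\perp_s}\times\langle(0\,|\,1)\rangle$. If $(s,z)\in S'^{\perp_s}\setminus S'$, then $s\in S^{\perp_s}\setminus S$, so $\mathrm{wt}_s(s,z)\ge\mathrm{wt}_s(s)\ge d$. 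The hypothesis $k>0$ is exactly what makes $S^{\perp_s}\setminus S$ nonempty, so the distance is the minimum over a nonempty set and is at least $d$. The same conclusion also follows directly from the Knill--Laflamme conditions: for an error $E_1\otimes E_2$, the factor $\langle 0|E_2|0\rangle$ is a scalar, so the conditions reduce to those for $\mathcal{Q}$.

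\textbf{Subcode.} Since $k\ge1$, I pick $g\in S^{\perp_s}\setminus S$ and set $S''=S+\mathbb{F}_q g$. This space is self-orthogonal, because $g\perp_s S$ and $\langle g,g\rangle_s=0$. It has dimension $n-k+1$, so it defines an $[[n,k-1]]_q$ code. Moreover $S''^{\perp_s}\subseteq S^{\perp_s}$ and $S''\supseteq S$, hence $S''^{\perp_s}\setminus S''\subseteq S^{\perp_s}\setminus S$.

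If $k>1$, then $S''^{\perp_s}\setminus S''\neq\emptyset$, and every vector in it has weight at least $d$. If $k=1$, then $S''^{\perp_s}=S''$ and the new code has dimension $0$. Its distance is then measured as the minimum weight of the nonzero vectors of $S''\subseteq S^{\perp_s}$. Purity of $\mathcal{Q}$ gives exactly that this minimum is at least $d$.

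The only real subtlety, and the main obstacle, is this boundary case $k=1$. There the distance convention changes, and one has to see why purity is required: without it, a low-weight vector of $S\subseteq S''$ could determine the distance. The rest is bookkeeping of dimensions and of inclusions between the dual spaces. For general, non-stabilizer codes I would instead cite \cite[Proposition 2.8]{Feng2006Asymptotic} and \cite[Theorem 4.1]{Grassl2021Algebraic}.
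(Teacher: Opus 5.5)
Your proof is correct, but it does not follow the paper, which gives no argument: it imports both propagation rules from \cite[Proposition 2.8]{Feng2006Asymptotic} and \cite[Theorem 4.1]{Grassl2021Algebraic}, where they hold for arbitrary quantum codes. You instead give a self-contained symplectic proof, and its key steps hold:
\begin{itemize}
\item $S'^{\perp_s}=S^{\perp_s}\oplus\langle(0\,|\,1)\rangle$, because a line in $\mathbb{F}_q^2$ is Lagrangian.
\item In the subcode step, $S''^{\perp_s}\setminus S''\subseteq S^{\perp_s}\setminus S$ settles $k>1$, and $S''\subseteq S^{\perp_s}$ settles $k=1$.
\end{itemize}
Your route makes the roles of the hypotheses visible. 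It also covers every code to which the paper actually applies the lemma, since all of them come from Lemma \ref{proposition6} and are stabilizer codes of the type you assume.

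One sharpening in the lengthening part: $k>0$ matters for more than making $S^{\perp_s}\setminus S$ nonempty. The appended weight-one stabilizer $(0\,|\,1)$ on the new coordinate is ignored only because, for $k>0$, the distance is measured on $S'^{\perp_s}\setminus S'$. For $k=0$ this vector would lie in $S'\setminus\{0\}$ and force distance $1$. This mirrors the purity issue you point out in the subcode case.

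You also do not need the citation fallback for non-stabilizer codes, because the subcode rule follows directly from the Knill--Laflamme conditions:
\begin{itemize}
\item Suppose $PEP=\lambda_E P$ for all errors $E$ of weight $<d$. Then any subspace with projector $P'\le P$ satisfies $P'EP'=\lambda_E P'$. This gives the case $k>1$, where the subcode still has dimension $q^{k-1}\ge 2$.
\item Purity means $\lambda_E=0$ for every non-identity error-basis element of weight $<d$. So every state $|\psi\rangle\in\mathcal{Q}$ has $\langle\psi|E|\psi\rangle=0$ for these $E$, and is therefore an $[[n,0,\ge d]]_q$ code. This gives the case $k=1$.
\end{itemize}
Together with your Knill--Laflamme argument for lengthening, this proves the lemma for arbitrary quantum codes without appealing to the literature.
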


In a recent work \cite{Galindo2026}, Galindo, Hernando, Mart\'{i}n-Cruz, and Matsumoto introduced \emph{quantum $(r,\delta)$-locally recoverable codes (LRCs)},
a novel class of quantum codes with significant potential for quantum data storage.
In the same work, they established a fundamental framework connecting Hermitian (resp. Euclidean) dual-containing codes to quantum $(r,\delta)$-LRCs.
Their results provide explicit characterizations for these codes, including the optimality of pure quantum $(r,\delta)$-LRCs constructed from classical codes.

\begin{lemma}{\rm (\cite[Theorem 31]{Galindo2026})}\label{quan-LRC}
Let $\mathcal{C}\subseteq \mathbb{F}_{q^2}^n$ (resp. $\mathcal{C}\subseteq \mathbb{F}_{q}^n$) be a Hermitian (resp. Euclidean) dual-containing code, $\dim(\mathcal{C})=\frac{n+k}{2}$ and $\mathcal{C}$ be a classical $(r,\delta)$-LRC. Then, it induces a quantum $(r,\delta)$-LRC $Q'(\mathcal{C})$ with parameters $[[n,k,\geq d(\mathcal{C})]]_{q}$, which satisfies
\begin{align}\label{equation1211}
k+2d(\mathcal{C})+2\bigg(\bigg\lceil\frac{n+k}{2r}\bigg\rceil-1\bigg)(\delta-1)\leq n+2.
\end{align}
\end{lemma}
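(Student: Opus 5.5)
This is Lemma \ref{quan-LRC}, a result quoted from \cite[Theorem 31]{Galindo2026}. The plan is to rebuild it from three ingredients: the stabilizer (Hermitian/Euclidean) construction, the classical locality of $\mathcal{C}$, and the classical Singleton-type bound \eqref{rdelta-singleton}.

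\textbf{Step 1: the quantum code.} Since $\mathcal{C}^{\perp_{\mathrm{H}}}\subseteq\mathcal{C}$ and $\dim\mathcal{C}=\frac{n+k}{2}$, Lemma \ref{proposition6} gives a stabilizer code $Q'(\mathcal{C})$ of length $n$. Its dimension parameter is $2\cdot\frac{n+k}{2}-n=k$. Its minimum distance is at least $d(\mathcal{C})$, because $\mathrm{wt}(\mathcal{C}\setminus\mathcal{C}^{\perp_{\mathrm{H}}})\ge d(\mathcal{C})$. The Euclidean case over $\mathbb{F}_q$ goes the same way via the CSS construction.

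\textbf{Step 2: quantum locality.} I would use the characterization, as in \cite{Galindo2026}, that a stabilizer code corrects erasures on a coordinate set $S$ whenever no nonzero word of the normalizer, after puncturing to $S$, is a logical error. For each coordinate $i$, take the set $S_i$ given by the classical $(r,\delta)$-locality of $\mathcal{C}$. Then $|S_i|\le r+\delta-1$, and the punctured code $\mathcal{C}|_{S_i}$ has minimum distance at least $\delta$.

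The point to check is the following. Any Pauli error supported on a set $E\subseteq S_i$ with $|E|\le\delta-1$ that commutes with the stabilizer corresponds to a word of $\mathcal{C}$ vanishing on $S_i\setminus E$. Its restriction to $S_i$ has weight at most $\delta-1$, so by $d(\mathcal{C}|_{S_i})\ge\delta$ that restriction is zero. Hence the error is trivial on $S_i$, and the $\delta-1$ erasures inside $S_i$ are recoverable using only the qudits of $S_i$.

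I expect this translation from classical puncturing to quantum local recovery to be the main obstacle. The bookkeeping with the symplectic or Hermitian representation has to be done carefully, and in the end it is mostly definitional.

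\textbf{Step 3: the bound.} $\mathcal{C}$ is a classical $(r,\delta)$-LRC of length $n$ and dimension $\frac{n+k}{2}$. Applying \eqref{rdelta-singleton} gives
\[
d(\mathcal{C})\le n-\frac{n+k}{2}+1-\Big(\Big\lceil\frac{n+k}{2r}\Big\rceil-1\Big)(\delta-1).
\]
Multiplying by $2$ and rearranging yields exactly \eqref{equation1211}. This step is routine arithmetic.
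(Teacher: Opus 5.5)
Steps 1 and 3 are correct, and Step 3 is all that \eqref{equation1211} needs. The paper itself only quotes \cite[Theorem 31]{Galindo2026} and gives no proof. The gap is in Step 2, the only place where locality is supposed to enter.

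You consider a Pauli error on $E\subseteq S_i$ that commutes with \emph{the} stabilizer $\mathcal{C}^{\perp_{\mathrm{H}}}$. Such an error is a word of $\mathcal{C}$ supported on $E$. Excluding it only shows that erasures on $E$ are correctable by the \emph{global} decoder, which was already clear from Step 1 since $d(\mathcal{C})\geq\delta$ for any classical $(r,\delta)$-LRC. It does not give a recovery channel acting only on $S_i$. The stabilizers that detect such an error may have support leaving $S_i$, and a recovery confined to $S_i$ cannot measure them. So ``recoverable using only the qudits of $S_i$'' does not follow from what you wrote. The characterization you quote at the start of Step 2 is also garbled.

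What you need is detection by the \emph{local} stabilizers. These form the shortened code $(\mathcal{C}^{\perp_{\mathrm{H}}})_{S_i}=\{\mathbf{x}|_{S_i}:\mathbf{x}\in\mathcal{C}^{\perp_{\mathrm{H}}},\ \mathrm{supp}(\mathbf{x})\subseteq S_i\}$. This code is $\mathbb{F}_{q^2}$-linear, so its trace-alternating (symplectic) dual equals its Hermitian dual. Hence a Pauli $P$ supported on $E$ commutes with all local stabilizers if and only if $P|_{S_i}\in\big((\mathcal{C}^{\perp_{\mathrm{H}}})_{S_i}\big)^{\perp_{\mathrm{H}}}$. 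The puncturing/shortening duality $(\mathcal{C}|_{S_i})^{\perp_{\mathrm{H}}}=(\mathcal{C}^{\perp_{\mathrm{H}}})_{S_i}$ identifies this set with $\mathcal{C}|_{S_i}$.

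This is the real reason the relevant objects are words of $\mathcal{C}$ that vanish on $S_i\setminus E$ but are arbitrary outside $S_i$. Your phrase points at these, but you attach it to the wrong hypothesis. Now $\mathrm{wt}(P|_{S_i})\leq\delta-1<d(\mathcal{C}|_{S_i})$ forces $P=0$. So distinct Pauli errors on $E$ have distinct syndromes with respect to generators of the local stabilizer group. Measuring these generators and undoing the identified Pauli is a recovery channel supported on $S_i$, which is what quantum $(r,\delta)$-locality requires. In the Euclidean (CSS) case, run the same argument separately on the $X$- and $Z$-parts, using $(\mathcal{C}|_{S_i})^{\perp_{\mathrm{E}}}=(\mathcal{C}^{\perp_{\mathrm{E}}})_{S_i}$.
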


\begin{definition}{\rm (\cite[Definition 33]{Galindo2026})}\label{optipure}
Following the notation of Lemma \ref{quan-LRC}, a pure quantum $(r,\delta)$-LRC $Q'(\mathcal{C})$ with parameters $[[n,k,d(\mathcal{C})]]_{q}$ is called an
\emph{optimal pure quantum $(r,\delta)$-LRC} if
\begin{align}\label{quan-inqu}
k+2d(\mathcal{C})+2\bigg(\bigg\lceil\frac{n+k}{2r}\bigg\rceil-1\bigg)(\delta-1)=n+2.
\end{align}
\end{definition}

\begin{remark}\label{optimal}
Due to the condition $k=2\dim(\mathcal{C})-n$ in Lemma \ref{quan-LRC}, one can easily verify that Eq. \eqref{quan-inqu} is equivalent to
\begin{align*}
d(\mathcal{C})=n-\dim(\mathcal{C})+1-\bigg(\bigg\lceil \frac{\dim(\mathcal{C})}{r}\bigg\rceil-1\bigg)(\delta-1).
\end{align*}
This means that $\mathcal{C}$ is an optimal classical $(r,\delta)$-LRC. Consequently, within the framework of Lemma \ref{quan-LRC} and Definition \ref{optipure},
a Hermitian (resp. Euclidean) dual-containing classical $(r,\delta)$-LRC is optimal if and only if its induced pure quantum $(r,\delta)$-LRC is optimal.
\end{remark}

\section{Unified $\tau$-Monomial Decomposition Theorem and New $\tau$-OD Matrices} \label{sec:com}

In this section, we establish a unified $\tau$-monomial decomposition theorem for invertible self-adjoint matrices over finite fields of arbitrary characteristic.
Based on our decomposition theorem, we prove the existence of $\tau$-OD matrices over $\mathbb{F}_{q^2}$ for any characteristic and demonstrate that there exist several new infinite families of $\tau$-OD matrices over $\mathbb{F}_{q^2}$ of characteristic $2$.

\subsection{Notation and Concepts}

We need to use the following notation and concepts:
\begin{itemize}
\item Let $\sigma$ be an automorphism of $\mathbb{F}_{q}$ such that $\sigma^2={\rm id}_{\mathbb{F}_{q}}$, where ${\rm id}_{\mathbb{F}_{q}}$ denotes the identity
automorphism of $\mathbb{F}_{q}$.

\vspace{3pt}

\item For any matrix $A=(a_{i,j})$ over $\mathbb{F}_{q}$, we define $A^{\sigma}=(\sigma(a_{i,j}))$ and $A^{\star}=(A^{\top})^{\sigma}$, where $A^{\top}$ denotes the
\emph{transpose} of $A$. A matrix $A$ is said to be \emph{self-adjoint} if $A^{\star}=A$. For any row vector $\mathbf{a}=(a_{1},\ldots,a_{k})\in\mathbb{F}_{q}^{k}$, we define $\mathbf{a}^{\sigma}=(\sigma(a_{1}),\ldots,\sigma(a_{k}))$ and $\mathbf{a}^{\star}=(\sigma(a_{1}),\ldots,\sigma(a_{k}))^{\top}$.

\vspace{3pt}

\item Let ${\rm Sym}_{k}$ denote the \emph{symmetric group} of degree $k$, which consists of all permutations of the set $\{1, 2, \ldots, k\}$.
A permutation $\tau\in{\rm Sym}_{k}$ is called the \emph{identity permutation} if $\tau(i)=i$ for all $1\leq i\leq k$, and we denote it by $\mathbbm{1}_{k}$.
For any $1\leq a<b\leq k$, we denote by $(a,b)$ a \emph{2-cycle} in ${\rm Sym}_{k}$.
Specifically, if $\tau=(a,b)\in{\rm Sym}_{k}$, then $\tau(a)=b$, $\tau(b)=a$ and $\tau(i)=i$ for all $i\in\{1,2,\ldots,k\}\backslash\{a,b\}$.

\vspace{3pt}

\item Let $I_{s}$ denote the $s\times s$ identity matrix. Let $O_{s\times t}$ (resp. $O_{t\times s}$) denote the $s\times t$ (resp. $t\times s$) zero matrix.

\vspace{3pt}

\item Let $\mathbf{0}_{1\times s}$ (resp. $\mathbf{0}_{s\times 1}$) denote the $1\times s$ (resp. $s\times 1$) zero vector.

\vspace{3pt}

\item A matrix $A\in \mathbb{F}_{q}^{k\times k}$ is called a \emph{$\tau$-monomial matrix} if we can write $A=DP_{\tau}$, where $D\in \mathbb{F}_{q}^{k\times k}$ is
an invertible diagonal matrix and $P_{\tau}$ is a permutation matrix with respect to the permutation $\tau\in\mathrm{Sym}_{k}$ such that the $\tau(i)$-th row of $P_{\tau}$ corresponds to the $i$-th row of the identity matrix $I_{k}$ for all $1\leq i\leq k$.

\vspace{3pt}

\item Let $A=(a_{i,j})_{1\leq i,j\leq k}\in\mathbb{F}_{q}^{k\times k}$.
If $a_{i,j}=0$ for all $1\leq i<j\leq k$ (resp. $1\leq j<i\leq k$), then $A$ is called a \emph{lower triangular} (resp. an \emph{upper triangular}) \emph{matrix}.
If a lower triangular (resp. an upper triangular) matrix $A$ satisfies $a_{i,i}=1$ for all $1\leq i\leq k$, then it is called a \emph{unit lower triangular}
(resp. a \emph{unit upper triangular}) \emph{matrix}.
\end{itemize}

\subsection{Unified $\tau$-Monomial Decomposition Theorem}

We first consider the finite field extension $\mathbb{E}=\mathbb{F}_{q^a}\supseteq \mathbb{F}_q=\mathbb{F}$, where $q$ is a prime power and $a$ is a positive integer.
The \emph{trace function} $\mathrm{Tr}_{\mathbb{E}/\mathbb{F}}:\mathbb{E}\rightarrow \mathbb{F}$ is defined as
\begin{align*}
\mathrm{Tr}_{\mathbb{E}/\mathbb{F}}(x)=\sum_{i=0}^{a-1} x^{q^i},\;x\in \mathbb{E}.
\end{align*}

\begin{proposition}{\rm (\cite[Theorem 2.25]{Lidl1994Introduction})}\label{theorem-trace}
Suppose $\mathbb{E}$ is a finite extension of $\mathbb{F}=\mathbb{F}_q$ and $\alpha\in \mathbb{E}$. Then, $\mathrm{Tr}_{\mathbb{E}/\mathbb{F}}(\alpha)=0$ if and only if there exists $\beta\in \mathbb{E}$ such that $\alpha=\beta^q-\beta$.
\end{proposition}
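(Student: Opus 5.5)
The plan is to prove the two directions separately. The ``if'' direction is a direct computation. The ``only if'' direction is a counting argument that compares the image of the map $\beta\mapsto\beta^q-\beta$ with the kernel of $\mathrm{Tr}_{\mathbb{E}/\mathbb{F}}$, where $\mathbb{E}=\mathbb{F}_{q^a}$ and $\mathbb{F}=\mathbb{F}_q$.

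\emph{If.} Suppose $\alpha=\beta^q-\beta$ with $\beta\in\mathbb{E}$. The trace is additive, since $x\mapsto x^{q^i}$ is additive in characteristic $p$. Hence $\mathrm{Tr}_{\mathbb{E}/\mathbb{F}}(\alpha)=\mathrm{Tr}_{\mathbb{E}/\mathbb{F}}(\beta^q)-\mathrm{Tr}_{\mathbb{E}/\mathbb{F}}(\beta)$. We have $\mathrm{Tr}_{\mathbb{E}/\mathbb{F}}(\beta^q)=\sum_{i=1}^{a}\beta^{q^i}$. Because $\beta^{q^a}=\beta$ for every $\beta\in\mathbb{E}$, this sum is a cyclic shift of $\sum_{i=0}^{a-1}\beta^{q^i}$, so it equals $\mathrm{Tr}_{\mathbb{E}/\mathbb{F}}(\beta)$. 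Therefore $\mathrm{Tr}_{\mathbb{E}/\mathbb{F}}(\alpha)=0$.

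\emph{Only if.} Regard $\mathbb{E}$ as an $a$-dimensional vector space over $\mathbb{F}$ and define $L:\mathbb{E}\to\mathbb{E}$ by $L(\beta)=\beta^q-\beta$.
\begin{itemize}
\item $L$ is $\mathbb{F}$-linear, because $c^q=c$ for every $c\in\mathbb{F}$.
\item Its kernel is the set of roots of $x^q-x$ in $\mathbb{E}$, which is exactly $\mathbb{F}$. So $\ker L$ is one-dimensional.
\item By rank--nullity, the image of $L$ has dimension $a-1$, i.e., $|L(\mathbb{E})|=q^{a-1}$.
\item By the ``if'' direction, $L(\mathbb{E})\subseteq\ker\mathrm{Tr}_{\mathbb{E}/\mathbb{F}}$.
\end{itemize}
On the other hand, $\ker\mathrm{Tr}_{\mathbb{E}/\mathbb{F}}$ is the set of roots in $\mathbb{E}$ of the nonzero polynomial $x+x^q+\cdots+x^{q^{a-1}}$, which has degree $q^{a-1}$. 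Hence it has at most $q^{a-1}$ elements. Combining the containment with this size bound gives $L(\mathbb{E})=\ker\mathrm{Tr}_{\mathbb{E}/\mathbb{F}}$. Thus every $\alpha$ with $\mathrm{Tr}_{\mathbb{E}/\mathbb{F}}(\alpha)=0$ has the form $\beta^q-\beta$.

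The only real content is this cardinality comparison: the exact size of the image of $L$, from rank--nullity, against the upper bound on the trace kernel, from the degree of the trace polynomial. Everything else is a routine verification, so I do not expect any step to be a serious obstacle.
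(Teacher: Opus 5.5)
Your proof is correct, including the degenerate case $a=1$. The paper gives no argument of its own for this proposition; it only quotes \cite[Theorem 2.25]{Lidl1994Introduction}. Your sufficiency direction, via $\mathrm{Tr}_{\mathbb{E}/\mathbb{F}}(\beta^q)=\mathrm{Tr}_{\mathbb{E}/\mathbb{F}}(\beta)$, is the one given in that reference.

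For necessity, the reference does not count. It takes a root $\beta$ of $x^q-x-\alpha$ in some extension field of $\mathbb{E}$ and telescopes:
\[
0=\mathrm{Tr}_{\mathbb{E}/\mathbb{F}}(\alpha)=\sum_{i=0}^{a-1}\bigl(\beta^{q}-\beta\bigr)^{q^i}=\sum_{i=0}^{a-1}\bigl(\beta^{q^{i+1}}-\beta^{q^i}\bigr)=\beta^{q^a}-\beta,
\]
so $\beta\in\mathbb{E}$. That route needs no linear algebra or cardinality comparison. Because the identity $\mathrm{Tr}_{\mathbb{E}/\mathbb{F}}(\alpha)=\beta^{q^a}-\beta$ holds for every root $\beta$ of $x^q-x-\alpha$, it settles both directions at once. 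The price is passing to an extension field to find the root.

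Your argument never leaves $\mathbb{E}$. It proves the sharper structural fact that the image of $\beta\mapsto\beta^q-\beta$ is exactly the $(a-1)$-dimensional $\mathbb{F}$-subspace $\ker\mathrm{Tr}_{\mathbb{E}/\mathbb{F}}$. As a by-product you get $|\ker\mathrm{Tr}_{\mathbb{E}/\mathbb{F}}|=q^{a-1}$, whence the $\mathbb{F}$-linear trace maps $\mathbb{E}$ onto $\mathbb{F}$. In Lidl--Niederreiter, the same degree bound on the trace polynomial is used in a separate result to prove that surjectivity.
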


If we consider finite fields of characteristic $2$, the following result can be readily obtained.

\begin{lemma}\label{cor-trace}
Let $q=2^a$, where $a$ is a positive integer. Then, for any $\alpha \in \mathbb{F}_{q}$, there exists $\beta\in \mathbb{F}_{q^2}$ such that $\alpha=\beta^q+\beta$.
\end{lemma}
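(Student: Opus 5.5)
The natural route is to apply Proposition \ref{theorem-trace} to the quadratic extension $\mathbb{E}=\mathbb{F}_{q^2}\supseteq\mathbb{F}=\mathbb{F}_q$. Here the trace is $\mathrm{Tr}_{\mathbb{E}/\mathbb{F}}(x)=x+x^q$. Given $\alpha\in\mathbb{F}_q$, we want $\beta\in\mathbb{F}_{q^2}$ with $\alpha=\beta^q-\beta$. In characteristic $2$ we have $-\beta=\beta$, so this is exactly the desired identity $\alpha=\beta^q+\beta$. Thus it suffices to show $\mathrm{Tr}_{\mathbb{E}/\mathbb{F}}(\alpha)=0$ for every $\alpha\in\mathbb{F}_q$, viewed as an element of $\mathbb{E}$.

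For the trace condition, note that $\alpha\in\mathbb{F}_q$ implies $\alpha^q=\alpha$. Hence $\mathrm{Tr}_{\mathbb{E}/\mathbb{F}}(\alpha)=\alpha+\alpha^q=2\alpha=0$, since $q=2^a$ and so the characteristic is $2$. Proposition \ref{theorem-trace} then supplies $\beta\in\mathbb{F}_{q^2}$ with $\alpha=\beta^q-\beta=\beta^q+\beta$, which completes the argument.

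If one prefers to avoid Proposition \ref{theorem-trace}, a counting argument works directly. The map $\beta\mapsto\beta^q+\beta$ is $\mathbb{F}_q$-linear from $\mathbb{F}_{q^2}$ into $\mathbb{F}_q$; its image lies in $\mathbb{F}_q$ because it is fixed by the $q$-power map, using $\beta^{q^2}=\beta$. Its kernel consists of the roots of $x^q+x=0$ in $\mathbb{F}_{q^2}$. In characteristic $2$ the condition $x^q=-x=x$ means the kernel is exactly $\mathbb{F}_q$, which has $q$ elements. The image therefore has size $q^2/q=q$, so the map is onto $\mathbb{F}_q$. There is no real obstacle here; the only point requiring care is the sign identification $-\beta=\beta$ in characteristic $2$.
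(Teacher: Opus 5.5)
Your main argument is correct and is exactly the paper's proof. You note $\mathrm{Tr}_{\mathbb{F}_{q^2}/\mathbb{F}_q}(\alpha)=\alpha+\alpha^q=2\alpha=0$ in characteristic $2$, apply Proposition \ref{theorem-trace}, and identify $\beta^q-\beta$ with $\beta^q+\beta$; you spell out the vanishing of the trace more explicitly than the paper, and your alternative counting argument is also valid, since it shows directly that $\beta\mapsto\beta^q+\beta$ maps $\mathbb{F}_{q^2}$ onto $\mathbb{F}_q$, which is all the lemma asserts.
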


\begin{IEEEproof}
Since $\mathbb{F}_{q}$ has characteristic $2$, it follows that
\begin{align*}
\mathrm{Tr}_{\mathbb{F}_{q^{2}}/\mathbb{F}_{q}}(\alpha)=0.
\end{align*}
Applying Proposition \ref{theorem-trace} and again using the fact that $\mathbb{F}_{q}$ has characteristic $2$, we conclude that there exists $\beta\in \mathbb{F}_{q^{2}}$ such that $\alpha=\beta^q+\beta$. This completes the proof.
\end{IEEEproof}

\vspace{6pt}

Now, let us introduce the concept of the $\tau$-monomial decomposition for invertible self-adjoint matrices.

\begin{definition}\textbf{\textup{($\tau$-Monomial Decomposition)}}\label{def-UL}
Let $\sigma$ be an automorphism of $\mathbb{F}_{q}$ satisfying $\sigma^2=\mathrm{id}_{\mathbb{F}_{q}}$. We say that $(\mathbb{F}_{q},\sigma)$ admits a \emph{$\tau$-monomial decomposition} if for any invertible self-adjoint matrix $A\in\mathbb{F}_{q}^{k\times k}$, there exists a unit lower triangular matrix $L\in\mathbb{F}_{q}^{k\times k}$ and a $\tau$-monomial matrix $DP_\tau\in\mathbb{F}_{q}^{k\times k}$ for some $\tau\in\mathrm{Sym}_{k}$ such that $LAL^{\star}=DP_\tau$.
\end{definition}

\begin{proposition}{\rm (\cite[Theorem 1]{CaoZhou2026})}\label{theorem-odd}
For a finite field $\mathbb{F}_{q}$ of odd characteristic and an automorphism $\sigma$ of $\mathbb{F}_{q}$ satisfying $\sigma^2={\rm id}_{\mathbb{F}_{q}}$,
there exists a $\tau$-monomial decomposition with respect to $(\mathbb{F}_{q},\sigma)$.
\end{proposition}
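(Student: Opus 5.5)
We prove the statement (Proposition \ref{theorem-odd}) by induction on $k$, using a congruence elimination procedure in which each step multiplies $A$ on the left by a unit lower triangular matrix $L_1$ and on the right by $L_1^{\star}$. Self-adjointness is preserved throughout, since $(L_1AL_1^{\star})^{\star}=L_1A^{\star}L_1^{\star}=L_1AL_1^{\star}$. The base case $k=1$ is trivial: an invertible $1\times 1$ matrix is already $\mathbbm{1}_1$-monomial.

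For the inductive step, let $A=(a_{i,j})$ be invertible and self-adjoint, so $a_{j,i}=\sigma(a_{i,j})$. We work on the last column, because only unit lower triangular operations are allowed: a row $i$ may be modified by adding multiples of rows $j<i$, together with the matching column operations. Since $A$ is invertible, its first column is nonzero. Let $p$ be the largest index with $a_{p,1}\neq 0$. Two cases arise, according to whether $p=1$.

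\textbf{Case $p=1$.} Here $a_{1,1}\neq 0$ and the rest of the first column and first row vanish. Then $A=\mathrm{diag}(a_{1,1},A')$, where $A'$ is self-adjoint and invertible, and we apply induction to $A'$.

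\textbf{Case $p>1$.} For each $i>p$ the entry $a_{i,1}$ is already zero. For each $1<i<p$, subtract $(a_{i,1}/a_{p,1})$ times row $p$ from row $i$. This is not lower triangular, since it uses a later row to modify an earlier one. To fix this, we instead pivot on the lowest nonzero entry of the column and clear upward only via rows above, which we can do by using the first row: since $a_{1,p}=\sigma(a_{p,1})\neq 0$, the entry in position $(1,p)$ can be used to clear positions $(1,j)$ for $j>p$ through column operations. Column operation $j\leftarrow j-c\cdot p$ with $j>p$ corresponds to the row operation $j\leftarrow j-\sigma(c)\cdot p$, and that row operation is lower triangular. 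After this step, row $1$ has its last nonzero entry at column $p$, and symmetrically column $1$ has its last nonzero entry at row $p$. Next, rows $i>1$ with $i\neq p$ have their entries in column $p$ cleared using row $1$? No: that is upper triangular. The correct tool is row $p$ acting on rows $i>p$, which clears the entries $a_{i,1}$ for $i>p$; these are already zero.

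We then handle the pair $\{1,p\}$ as a $2$-cycle block and reduce the remaining indices. Rows $i$ with $1<i<p$ are cleared using row $1$ in column $p$, i.e. $a_{i,p}$ is killed via $a_{1,p}$, which is allowed since $1<i$. Rows $i>p$ are cleared using row $p$ in column $1$ and using row $1$ in column $p$. After that, the $2\times 2$ principal block on $\{1,p\}$ is $\begin{pmatrix}0&b\\ \sigma(b)&c\end{pmatrix}$, possibly with $a_{1,1}=0$. If $c\neq 0$, we need to kill $c$ by adding $\lambda$ times row $1$ to row $p$, which changes $c$ to $c+\lambda\sigma(b)+\sigma(\lambda)b$. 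In odd characteristic we take $\lambda=-c/(2\sigma(b))$ when $\sigma=\mathrm{id}$; when $\sigma\neq \mathrm{id}$ we need $\mathrm{Tr}(\lambda\sigma(b))=-c$, which is solvable because the trace is surjective. The result is a block $\begin{pmatrix}0&b\\ \sigma(b)&0\end{pmatrix}$, which is $\tau$-monomial with $\tau=(1,p)$. Removing rows and columns $1$ and $p$ leaves an invertible self-adjoint matrix of size $k-2$, and induction (after reindexing, which preserves lower triangularity since the order is kept) yields a $\tau$-monomial decomposition whose permutation is a product of disjoint $2$-cycles and fixed points.

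The main obstacle is the $c$-elimination in the case $\sigma=\mathrm{id}$. This step requires dividing by $2$, which is exactly where the odd characteristic enters. We must also check that the clearing operations are genuinely lower triangular and mutually compatible: clearing column $p$ entries in later rows must not refill column $1$. We handle this by performing the column-$1$ clearing via row $p$ after the column-$p$ clearing via row $1$, and verifying that the induced changes vanish since $a_{1,1}=0$ at that stage.
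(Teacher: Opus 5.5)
\textbf{There is a genuine gap: you pivot on the wrong entry of the first column.}

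For any unit lower triangular $L$, the first column of $LAL^{\star}$ is $L$ applied to the first column of $A$:
$(LAL^{\star})_{i,1}=a_{i,1}+\sum_{j<i}L_{i,j}a_{j,1}$. In particular $(LAL^{\star})_{1,1}=a_{1,1}$ for every such $L$. This has two consequences.
\begin{itemize}
\item[(a)] If $a_{1,1}\neq 0$, then $\tau(1)=1$ is forced. So in your Case $p>1$ with $a_{1,1}\neq 0$, the block $\begin{pmatrix}0&b\\ \sigma(b)&c\end{pmatrix}$ on $\{1,p\}$ can never be produced. Your Case $p=1$ only covers matrices whose first column is already a multiple of the first unit vector.
\item[(b)] If $a_{1,1}=0$ and $s$ is the \emph{smallest} index with $a_{s,1}\neq 0$, then $(LAL^{\star})_{s,1}=a_{s,1}\neq 0$ for every $L$. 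So $\tau(1)=s$ is forced; this is the content of Lemma \ref{lem-permu}. With $p$ the \emph{largest} such index, a $2$-cycle $(1,p)$ is impossible as soon as $p\neq s$.
\end{itemize}
This is exactly the obstruction you ran into. The entries $a_{i,1}$ with $1<i<p$ are never cleared in your procedure. Your ``fix'' only touches positions $(1,j)$ with $j>p$, and these are already zero by the choice of $p$.

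A concrete failure: over $\mathbb{F}_3$ with $\sigma=\mathrm{id}$, take the invertible symmetric matrix $A=\begin{pmatrix}0&1&1\\ 1&0&0\\ 1&0&1\end{pmatrix}$. It has $p=3$, but $(LAL^{\star})_{2,1}=1$ for every unit lower triangular $L$. Separately, adding multiples of row $1$ to rows $i>1$ \emph{is} lower triangular, contrary to your aside.

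\textbf{The repair is to split on $a_{1,1}$,} as in the paper's proof of Theorem \ref{theorem-com}.

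If $a_{1,1}\neq 0$, subtract $(a_{i,1}/a_{1,1})$ times row $1$ from each row $i>1$, with the matching column operations. This gives $\mathrm{diag}(a_{1,1},A')$, and you induct on size $k-1$.

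If $a_{1,1}=0$, pivot at the smallest $s$ with $a_{s,1}\neq 0$ and proceed as follows.
\begin{itemize}
\item Column $1$ is then automatically zero in rows $2,\ldots,s-1$. Clear $a_{i,1}$ for $i>s$ using row $s$.
\item By self-adjointness, row $1$ now equals $a_{1,s}$ times the $s$-th unit row vector. So you can clear $a_{i,s}$ for every $i\notin\{1,s\}$ using row $1$; this is lower triangular since $i>1$, and it disturbs nothing else.
\item Finally, add $\lambda$ times row $1$ to row $s$, with $\lambda a_{1,s}=-a_{s,s}/2$. The $(s,s)$ entry becomes $a_{s,s}+\lambda a_{1,s}+\sigma(\lambda a_{1,s})=0$, because $\sigma(a_{s,s})=a_{s,s}$. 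This works whether or not $\sigma=\mathrm{id}$; in characteristic $2$ the paper uses $\beta$ from Lemma \ref{cor-trace} instead.
\end{itemize}
The remaining $(k-2)\times(k-2)$ block is invertible and self-adjoint. Your $c$-elimination idea and your reindexing argument for the induction then go through unchanged.
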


The following lemma is useful for establishing the main result of this subsection, namely Theorem \ref{theorem-com}.

\begin{lemma}\label{lem-nonexist}
If $\mathbb{F}_{q}$ has characteristic $2$ and $\sigma={\rm id}_{\mathbb{F}_{q}}$, then the $\tau$-monomial decomposition with respect to $(\mathbb{F}_{q},\sigma)$ does not exist.
\end{lemma}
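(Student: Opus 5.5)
To prove the lemma we exhibit one invertible symmetric matrix over $\mathbb{F}_{q}$ (characteristic $2$, $\sigma=\mathrm{id}$) for which no unit lower triangular $L$ and no $\tau$-monomial $DP_\tau$ satisfy $LAL^{\top}=DP_\tau$. The natural candidate is the smallest alternating, hence symmetric, invertible matrix
\begin{align*}
A=\begin{pmatrix} 0&1\\ 1&0\end{pmatrix}\in\mathbb{F}_{q}^{2\times 2}.
\end{align*}
The proof rests on two facts, which we establish in this order.

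\textbf{Zero diagonals are preserved.} In characteristic $2$, congruence by any matrix preserves the property of having zero diagonal. For any $L$, the $(i,i)$ entry of $LAL^{\top}$ is $\sum_{j,l}L_{ij}A_{jl}L_{il}$. Since $A$ is symmetric, the terms with $j\neq l$ pair off as $2L_{ij}L_{il}A_{jl}=0$. What remains is $\sum_j L_{ij}^2A_{jj}=0$, because $A$ has zero diagonal. Hence $LAL^{\top}$ has zero diagonal for every $L$, and it is symmetric and invertible.

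\textbf{Consequences for $\tau$.} Suppose $LAL^{\top}=DP_\tau$. A monomial matrix has zero diagonal only if $\tau$ has no fixed points. Symmetry of $DP_\tau$ then forces each nonzero entry at position $(i,\tau(i))$ to be mirrored at position $(\tau(i),i)$. For $k=2$ this leaves only $\tau=(1,2)$.

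\textbf{Contradiction for $k=2$.} With $\tau=(1,2)$, compute directly. Take $L=\begin{pmatrix}1&0\\ \ell&1\end{pmatrix}$. Then $LAL^{\top}=\begin{pmatrix}0&1\\1&2\ell\end{pmatrix}=A$, which is itself $\tau$-monomial with $\tau=(1,2)$. So this $A$ does admit a decomposition, and the naive example fails. This is the main obstacle: we must choose $A$ so that the lower-triangular constraint (we may only add multiples of earlier rows to later rows) obstructs pairing.

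\textbf{Revised candidate.} Use
\begin{align*}
A=\begin{pmatrix}1&0&0\\0&0&1\\0&1&0\end{pmatrix} \quad\text{or}\quad A=\begin{pmatrix}0&1&0\\1&0&0\\0&0&1\end{pmatrix}.
\end{align*}
Take the second, $A=\begin{pmatrix}0&1&0\\1&0&0\\0&0&1\end{pmatrix}$, and argue via the diagonal map $d(x)=xAx^{\top}$. In characteristic $2$ this map is additive and semilinear: $d(x)=\sum_j x_j^2A_{jj}$, since the cross terms vanish. The $i$-th diagonal entry of $LAL^{\top}$ is $d(\ell_i)$, where $\ell_i$ is the $i$-th row of $L$. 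Row $1$ of $L$ is $e_1$, so $d(e_1)=A_{11}=0$. Row $2$ is $\ell e_1+e_2$, so $d=A_{22}=0$. Row $3$ has the form $\ast e_1+\ast e_2+e_3$, so $d=A_{33}=1$. Thus the diagonal of $DP_\tau$ is $(0,0,1)$, which forces $\tau(3)=3$ and leaves $\tau$ without fixed points on $\{1,2\}$, so $\tau=(1,2)$. That pattern is achievable, since $A$ itself has it.

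\textbf{General plan for the obstruction.} The diagonal of $LAL^{\top}$ always equals the diagonal of $A$: in characteristic $2$ the map $d$ is additive with $d(e_j)=A_{jj}$, and the unit-triangular structure leaves row $i$ contributing $A_{ii}$ plus squares times earlier $A_{jj}$. So the obstruction must come from an interleaved pattern. Zero-diagonal indices must be paired by $\tau$ into $2$-cycles, while the diagonal forces fixed points elsewhere. I would take $A$ with diagonal $(0,1,0)$ whose zero-diagonal indices $1$ and $3$ cannot be paired, for example
\begin{align*}
A=\begin{pmatrix}0&1&0\\1&1&1\\0&1&0\end{pmatrix}.
\end{align*}
Its determinant is $0$, which is not invertible, so this needs adjusting.

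The intended final argument is as follows. Take $A$ with $A_{11}=0$ and $A_{22}\neq0$, and row $3$ chosen so that the diagonal of $LAL^{\top}$, computed by $d$, is $(0,A_{22}+\ell^2\cdot0,\ldots)$. Then invertibility of $DP_\tau$ combined with the zero pattern forces $\tau(1)\neq1$. The partner $j=\tau(1)$ must have zero diagonal, and we derive a contradiction from the entries $(1,j)$ computed explicitly in terms of $\ell$. A concrete check of this kind on a $3\times 3$ example is the crux. I expect the hardest step to be choosing that example so that the additivity of $d$ pins down the diagonal completely while the off-diagonal pattern rules out every allowed $\tau$.
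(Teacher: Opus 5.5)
Your proposal does not prove the lemma, because it never exhibits a valid witness. Every concrete matrix you test fails:
\begin{itemize}
\item $\begin{pmatrix}0&1\\1&0\end{pmatrix}$ and both $3\times 3$ candidates are already $\tau$-monomial, with $\tau=(1,2)$, $(2,3)$ and $(1,2)$ respectively. So $L=I$ decomposes each of them.
\item $\begin{pmatrix}0&1&0\\1&1&1\\0&1&0\end{pmatrix}$ is singular.
\end{itemize}
Your closing paragraph is only a plan with unspecified entries. The key claim, that some invertible symmetric matrix admits no decomposition, is therefore left unproved. Also, your sentence that the diagonal of $LAL^{\top}$ always equals that of $A$ is false. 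For $A=I_2$ and $\ell=1$, the $(2,2)$ entry becomes $1+\ell^2=0$. Only your refined formula $d(\ell_i)=A_{ii}+\sum_{j<i}L_{ij}^2A_{jj}$ is correct.

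The gap closes with your own tools, and you were wrong that $k=2$ cannot work; the zero-diagonal $2\times 2$ matrix was simply the wrong choice. Take $A=\begin{pmatrix}0&1\\1&1\end{pmatrix}$. It is invertible and symmetric, has exactly the pattern $A_{11}=0$, $A_{22}\neq 0$ you arrive at in your last paragraph, and is the paper's example.
\begin{itemize}
\item For $L=\begin{pmatrix}1&0\\ \ell&1\end{pmatrix}$, your map $d$ gives the diagonal $(A_{11},\,A_{22}+\ell^2A_{11})=(0,1)$ for every $\ell$.
\item The $(i,i)$ entry of $DP_\tau$ is nonzero if and only if $\tau(i)=i$. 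So a $2\times 2$ $\tau$-monomial matrix has both diagonal entries nonzero ($\tau$ the identity) or both zero ($\tau=(1,2)$).
\item A diagonal $(0,1)$ is therefore impossible, and you are done. No $3\times 3$ example or off-diagonal analysis is needed.
\end{itemize}
The underlying obstruction is parity: a symmetric $\tau$-monomial matrix forces $\tau$ to be an involution, hence an even number of zero diagonal entries.

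The paper uses the same matrix but argues by direct computation. It shows $LAL^{\star}=A$ for every $\alpha$, since the cross terms in the $(2,2)$ entry contribute $2\alpha=0$. It then observes that $A$ itself is visibly not $\tau$-monomial.
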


\begin{IEEEproof}
Consider the following $2 \times 2$ invertible self-adjoint matrix over $\mathbb{F}_{q}$:
\begin{align*}
A =
\begin{pmatrix}
0 & 1 \\
1 & 1
\end{pmatrix}.
\end{align*}
Any $2\times 2$ unit lower triangular matrix $L$ over $\mathbb{F}_{q}$ has the form
\begin{align*}
L =
\begin{pmatrix}
1 & 0 \\
\alpha & 1
\end{pmatrix},
\end{align*}
where $\alpha \in \mathbb{F}_{q}$. Since the characteristic of $\mathbb{F}_{q}$ is $2$ and $\sigma={\rm id}_{\mathbb{F}_{q}}$, a direct calculation yields that
\begin{align*}
LAL^{\star}=A,
\end{align*}
which is not a $\tau$-monomial matrix for any $\tau\in{\rm Sym}_{2}$. Therefore, the $\tau$-monomial decomposition with respect to
$(\mathbb{F}_{q},\sigma)$ does not exist.
\end{IEEEproof}

\vspace{8pt}

In the following, we establish a unified $\tau$-monomial decomposition theorem, which completely characterizes the existence of the $\tau$-monomial decomposition with respect to $(\mathbb{F}_{q},\sigma)$ for finite fields $\mathbb{F}_{q}$ of odd characteristic or of characteristic $2$.

\begin{theorem}\textbf{\textup{(Unified $\tau$-Monomial Decomposition Theorem)}}\label{theorem-com}
Let $\sigma$ be an automorphism of $\mathbb{F}_{q}$ satisfying $\sigma^2=\mathrm{id}_{\mathbb{F}_{q}}$.
Then, there exists a $\tau$-monomial decomposition with respect to $(\mathbb{F}_{q},\sigma)$ if and only if one of following conditions holds:
\begin{itemize}
\item [(1)] $q$ is odd;

\item [(2)] $q=2^{2e}$ for some positive integer $e$, and $\sigma$ satisfies $\sigma(x)=x^{2^{e}}$ for all $x\in \mathbb{F}_{q}$.
\end{itemize}
\end{theorem}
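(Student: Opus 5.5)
We split the argument according to the characteristic of $\mathbb{F}_q$. If $q$ is odd, the existence of a $\tau$-monomial decomposition is exactly Proposition \ref{theorem-odd}, so condition (1) is sufficient.

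Now let $q=2^m$. The only automorphisms $\sigma$ with $\sigma^2=\mathrm{id}_{\mathbb{F}_q}$ are $\sigma=\mathrm{id}_{\mathbb{F}_q}$ and, when $m=2e$ is even, $\sigma(x)=x^{2^e}$. If $m$ is odd, then necessarily $\sigma=\mathrm{id}$. If $m$ is even, $\sigma$ may still be the identity. In either case with $\sigma=\mathrm{id}$, Lemma \ref{lem-nonexist} shows that no decomposition exists. This proves necessity: in characteristic $2$, a decomposition forces $m=2e$ and $\sigma(x)=x^{2^e}$.

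For sufficiency in case (2), write $\mathbb{F}_q=\mathbb{F}_{Q^2}$ with $Q=2^e$, so that $\sigma(x)=x^Q$. We induct on $k$, imitating the odd-characteristic proof of Proposition \ref{theorem-odd}. Let $A$ be invertible and self-adjoint; its diagonal entries satisfy $a_{i,i}^Q=a_{i,i}$, so they lie in $\mathbb{F}_Q$.

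\emph{Case 1: $a_{1,1}\neq 0$.} Take $L_1=\begin{pmatrix}1&\mathbf{0}\\ -a_{1,1}^{-1}\mathbf{b}&I\end{pmatrix}$, where $\mathbf{b}$ is the first column below the diagonal. Then $L_1AL_1^{\star}=\mathrm{diag}(a_{1,1},A')$, where $A'$ is invertible and self-adjoint. Applying induction to $A'$ and embedding the resulting unit lower triangular matrix as $\mathrm{diag}(1,L')$ gives the decomposition, with $\tau$ fixing $1$.

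\emph{Case 2: $a_{1,1}=0$.} Since $A$ is invertible, the first row is nonzero; let $j$ be the smallest index with $a_{1,j}\neq0$. This is the step I expect to be the main obstacle, because characteristic $2$ prevents the usual trick of adding rows to create a nonzero diagonal entry. The fix is to use the $2$-cycle $(1,j)$.

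First, use unit lower triangular operations (adding multiples of row/column $j$ to later rows/columns $i>j$) to clear the entries $a_{1,i}$ for $i>j$. Pivoting on the entry $a_{1,j}$ in column $1$, use row $1$ to clear the entries of column $j$ in rows $i>j$, and symmetrically clear row $j$. The delicate point is the entry $a_{j,j}\in\mathbb{F}_Q$. Adding $\beta$ times row $1$ to row $j$, together with the $\star$-conjugate column operation, changes $a_{j,j}$ to
\[
a_{j,j}+\beta a_{1,j}+\beta^Q a_{1,j}^Q=a_{j,j}+\mathrm{Tr}_{\mathbb{F}_{Q^2}/\mathbb{F}_Q}(\beta a_{1,j}).
\]
Since $1<j$, this operation is unit lower triangular. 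By Lemma \ref{cor-trace}, applied with $Q$ in place of $q$, there exists $\gamma$ with $\gamma^Q+\gamma=a_{j,j}$. Choosing $\beta=\gamma a_{1,j}^{-1}$ makes the new $(j,j)$ entry zero. This is exactly where condition (2) is used.

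After these steps, rows and columns $1$ and $j$ are supported only on the antidiagonal pair $(1,j),(j,1)$. The entries $a_{i,1}$ for $1<i<j$ already vanish by the minimality of $j$, which is what keeps every operation lower triangular. The complementary principal submatrix $A''$ of size $k-2$ is invertible and self-adjoint, so induction gives a unit lower triangular $L''$ with $L''A''L''^{\star}$ a $\tau''$-monomial matrix. Interleaving $L''$ into the positions $\ne 1,j$ keeps the total matrix unit lower triangular. Moreover, these operations do not disturb rows and columns $1,j$, since those entries are now zero off the pair. The resulting permutation is $\tau=(1,j)\tau''$.

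Some care is needed to order the eliminations so that each one uses only a pivot of smaller index. I would handle this by treating the index $1$ and the index $j$ symmetrically, through the pair of pivots $a_{1,j}$ and $a_{j,1}$.
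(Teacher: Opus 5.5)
Your proposal is correct and is essentially the paper's proof: the same necessity argument via Lemma \ref{lem-nonexist} (you usefully make explicit that in characteristic $2$ the only automorphisms with $\sigma^2=\mathrm{id}$ are the identity and $x\mapsto x^{2^e}$), the same case split on $a_{1,1}$, the same use of Lemma \ref{cor-trace} to annihilate the $(j,j)$ entry, and the same reduction to a $(k-2)\times(k-2)$ invertible self-adjoint block, with your sequence of elementary congruences being a factorization of the paper's explicit matrix $L_1$. One small point to state explicitly: the entries $a_{i,j}$ with $1<i<j$ (the paper's $\mathbf{c}^{\star}$ block) must also be cleared, which is done by adding multiples of row $1$ to row $i$, a lower triangular operation since $i>1$.
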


\begin{IEEEproof}
By Proposition \ref{theorem-odd}, it suffices to consider the theorem in the case where $q$ is even.
Suppose that there exists a $\tau$-monomial decomposition with respect to $(\mathbb{F}_{q},\sigma)$, where $q$ is even.
By Lemma \ref{lem-nonexist}, we have $q=2^{2e}$ for some positive integer $e$, and $\sigma$ satisfies $\sigma(x)=x^{2^{e}}$ for all $x\in \mathbb{F}_{q}$.

Conversely, assume that $q=2^{2e}$ for some positive integer $e$, and $\sigma(x)=x^{2^{e}}$ for all $x\in \mathbb{F}_{q}$. We aim to show that there exists a $\tau$-monomial decomposition with respect to $(\mathbb{F}_{q},\sigma)$. We proceed by induction on the size $k\times k$ of the invertible self-adjoint matrix over
$\mathbb{F}_{q}$. The result holds trivially for $k=1$. Assume the conclusion holds for any invertible self-adjoint matrix in $\mathbb{F}_{q}^{k\times k}$ with $k<n$.
We shall show that it remains true for $k=n$.

Let $A=(a_{i,j})_{1\leq i,j\leq n}\in\mathbb{F}_{q}^{n\times n}$ be an arbitrary invertible self-adjoint matrix.
In what follows, we consider the cases $a_{1,1} \neq 0$ and $a_{1,1} = 0$, respectively.
\begin{enumerate}[wide, itemsep=0pt, leftmargin =0pt, widest={{\bf Case $1$}}]
\item[{\bf Case I}:] When $a_{1,1}\neq 0$, the matrix $A$ can be partitioned as
\begin{align*}
A =
\begin{pmatrix}
a_{1,1} & \mathbf{b} \\
\mathbf{b}^{\star} & A_{2,2}
\end{pmatrix},
\end{align*}
where $a_{1,1}=a_{1,1}^{\sqrt{q}}$ and $A_{2,2}=A_{2,2}^{\star}$. By setting
\begin{align*}
L_{1}=
\begin{pmatrix}
1&\mathbf{0}_{1\times (n-1)}\\
\frac{1}{a_{1,1}}\mathbf{b}^{\star}&I_{n-1}
\end{pmatrix},
\end{align*}
we derive
\begin{align*}
L_{1}AL_{1}^{\star}=
\begin{pmatrix}
a_{1,1}&\mathbf{0}_{1\times (n-1)}\\
\mathbf{0}_{(n-1)\times 1}&A_{2,2}+\frac{1}{a_{1,1}}\mathbf{b}^{\star}\mathbf{b}
\end{pmatrix}.
\end{align*}

The inductive hypothesis implies that there exists a unit lower triangular matrix $L_{2}$ and a $\tau_{1}$-monomial matrix $B_{1}$ for some $\tau_{1}\in\mathrm{Sym}_{n-1}$ such that
\begin{align*}
L_{2}\left(A_{2,2}+\frac{1}{a_{1,1}}\mathbf{b}^{\star}\mathbf{b}\right)L_{2}^{\star}=B_{1}.
\end{align*}

Taking the unit lower triangular matrix
\begin{align*}
L=
\begin{pmatrix}
1&\mathbf{0}_{1\times(n-1)}\\
\mathbf{0}_{(n-1)\times 1}&L_{2}\\
\end{pmatrix}{L}_{1},
\end{align*}
we derive
\begin{align*}
LAL^{\star}=
\begin{pmatrix}
a_{1,1}&\mathbf{0}_{1\times(n-1)}\\
\mathbf{0}_{(n-1)\times 1}&B_{1}\\
\end{pmatrix},
\end{align*}
which is a $\tau$-monomial matrix for some $\tau\in\mathrm{Sym}_{n}$.

\vspace{4pt}

\item[{\bf Case II}:] When $a_{1,1}=0$, we assume that $s$ is the smallest positive integer such that $a_{1,s}\neq 0$.
As $a_{i,j}=a_{j,i}^{\sqrt{q}}$ for all $1\leq i,j\leq n$, the matrix $A$ can be written as
\begin{align*}
A=
\begin{pmatrix}
0&\mathbf{0}_{1\times (s-2)}&a_{1,s}&\mathbf{b}\\
\mathbf{0}_{(s-2)\times 1}&A_{1,1}&\mathbf{c}^{\star}&A_{1,2}\\
a_{s,1}&\mathbf{c}&a_{s,s}&\mathbf{d}\\
\mathbf{b}^{\star}&A_{2,1}&\mathbf{d}^{\star}&A_{2,2}\\
\end{pmatrix}.
\end{align*}

As $a_{s,s}=a_{s,s}^{\sqrt{q}}$, it follows from Lemma \ref{cor-trace} that there exists $\beta\in \mathbb{F}_{q}$ such that $a_{s,s}=\beta^{\sqrt{q}}+\beta$.
We define a unit lower triangular matrix as follows:
\begin{align*}
L_{1}=
\begin{pmatrix}
1&\mathbf{0}_{1\times (s-2)}&0&\mathbf{0}_{1\times (n-s)}\\
\frac{1}{a_{1,s}}\mathbf{c}^{\star}&I_{s-2}&\mathbf{0}_{(s-2)\times 1}&O_{(s-2)\times (n-s)}\\
\frac{\beta}{a_{1,s}}&\mathbf{0}_{1\times (s-2)}&1&\mathbf{0}_{1\times (n-s)}\\
\frac{a_{s,s}}{a_{s,1}a_{1,s}}\mathbf{b}^{\star}+\frac{1}{a_{1,s}}\mathbf{d}^{\star}&O_{(n-s)\times (s-2)}&\frac{1}{a_{s,1}}\mathbf{b}^{\star}&I_{n-s}\\
\end{pmatrix}.
\end{align*}
Then, through direct computation and substituting $a_{s,s} = \beta^{\sqrt{q}} + \beta$, we obtain
\begin{align}\label{equation1}
L_{1}AL_{1}^{\star}=
\begin{pmatrix}
0&\mathbf{0}_{1\times (s-2)}&a_{1,s}&\mathbf{0}_{1\times (n-s)}\\
\mathbf{0}_{(s-2)\times 1}&A_{1,1}&\mathbf{0}_{(s-2)\times 1}&A_{1,2}+\frac{1}{a_{1,s}}\mathbf{c}^{\star}\mathbf{b}\\
a_{s,1}&\mathbf{0}_{1\times (s-2)}&0&\mathbf{0}_{1\times (n-s)}\\
\mathbf{0}_{(n-s)\times 1}&A_{2,1}+\frac{1}{a_{s,1}}\mathbf{b}^{\star}\mathbf{c}&\mathbf{0}_{(n-s)\times 1}
&A_{2,2}+\frac{1}{a_{1,s}}\mathbf{d}^{\star}\mathbf{b}+\frac{1}{a_{s,1}}\mathbf{b}^{\star}\mathbf{d}+\frac{a_{s,s}}{a_{s,1}a_{1,s}}\mathbf{b}^{\star}\mathbf{b}\\
\end{pmatrix}.
\end{align}

Let
\begin{align}\label{equation2}
A_{n-2}=
\begin{pmatrix}
A_{1,1}&A_{1,2}+\frac{1}{a_{1,s}}\mathbf{c}^{\star}\mathbf{b}\\
A_{2,1}+\frac{1}{a_{s,1}}\mathbf{b}^{\star}\mathbf{c}
&A_{2,2}+\frac{1}{a_{1,s}}\mathbf{d}^{\star}\mathbf{b}+\frac{1}{a_{s,1}}\mathbf{b}^{\star}\mathbf{d}
+\frac{a_{s,s}}{a_{s,1}a_{1,s}}\mathbf{b}^{\star}\mathbf{b}\\
\end{pmatrix}.
\end{align}
By the inductive hypothesis, there exists an $(n-2)\times (n-2)$ unit lower triangular matrix
\begin{align}\label{equation3}
L_{n-2}=
\begin{pmatrix}
L_{1,1}&O_{(s-2)\times (n-s)}\\
L_{2,1}&L_{2,2}\\
\end{pmatrix}
\end{align}
and a $\tau_{1}$-monomial matrix $B_{1}$ for some $\tau_{1}\in\mathrm{Sym}_{n-2}$ such that
\begin{align}\label{equation4}
L_{n-2}A_{n-2}L_{n-2}^{\star}=B_{1}.
\end{align}

Finally, we define an $n\times n$ unit lower triangular matrix
\begin{align}\label{equation15}
L=
\begin{pmatrix}
1&\mathbf{0}_{1\times (s-2)}&0&\mathbf{0}_{1\times (n-s)}\\
\mathbf{0}_{(s-2)\times 1}&L_{1,1}&\mathbf{0}_{(s-2)\times 1}&O_{(s-2)\times (n-s)}\\
0&\mathbf{0}_{1\times (s-2)}&1&\mathbf{0}_{1\times (n-s)}\\
\mathbf{0}_{(n-s)\times 1}&L_{2,1}&\mathbf{0}_{(n-s)\times 1}&L_{2,2}\\
\end{pmatrix}L_{1}.
\end{align}
Applying Eqs. \eqref{equation1}-\eqref{equation15}, it is straightforward to verify that $LAL^{\star}$ is a self-adjoint $\tau$-monomial matrix for some $\tau\in\mathrm{Sym}_{n}$.
\end{enumerate}

In summary, for both cases, there exists a unit lower triangular matrix $L$ such that $LAL^{\star}$ is a $\tau$-monomial matrix for some $\tau\in\mathrm{Sym}_{n}$. This confirms the existence of a $\tau$-monomial decomposition with respect to $(\mathbb{F}_q, \sigma)$, which completes the proof.
\end{IEEEproof}

\vspace{6pt}

\begin{theorem}\label{1theorem-com}
For any positive integer $k\leq q^{2}$, there exists a $k\times k$ $\tau$-OD matrix over $\mathbb{F}_{q^{2}}$ for some $\tau\in\mathrm{Sym}_{k}$.
\end{theorem}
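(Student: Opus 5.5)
The idea is to start from any NSC matrix and correct it by a unit lower triangular factor supplied by Theorem \ref{theorem-com}. By Proposition \ref{existNSC}, since $k\leq q^{2}$, there is a $k\times k$ NSC matrix $N$ over $\mathbb{F}_{q^{2}}$. For $k=1$ this is just a nonzero scalar; for $k\geq 2$ it is the proposition itself. We will look for a unit lower triangular $L$ such that $A=LN$ is the desired $\tau$-OD matrix.

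First we check that $A=LN$ is NSC for every such $L$. This is Proposition \ref{LND}, since a unit lower triangular matrix is an invertible lower triangular matrix.

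Next we set up the decomposition. Work in $\mathbb{F}_{q^{2}}$ with the automorphism $\sigma(x)=x^{q}$. It satisfies $\sigma^{2}=\mathrm{id}$, and $B^{\star}=B^{\dag}$ for every matrix $B$. The matrix $M=NN^{\dag}$ is self-adjoint, because $(NN^{\dag})^{\dag}=NN^{\dag}$. It is invertible, because $N$ is invertible: the NSC condition with $i=k$ says the full $k\times k$ matrix is invertible.

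We then need the pair $(\mathbb{F}_{q^{2}},\sigma)$ to satisfy the hypotheses of Theorem \ref{theorem-com}. If $q$ is odd, then $q^{2}$ is odd and condition (1) holds. If $q=2^{e}$, then $q^{2}=2^{2e}$ and $\sigma(x)=x^{2^{e}}$, so condition (2) holds. In either case Theorem \ref{theorem-com} gives a unit lower triangular $L$ and some $\tau\in\mathrm{Sym}_{k}$ with $LML^{\star}=DP_{\tau}$.

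Finally, with $A=LN$ we get
\begin{align*}
AA^{\dag}=LNN^{\dag}L^{\dag}=LML^{\star}=DP_{\tau}.
\end{align*}
So $A$ is NSC and $AA^{\dag}$ is $\tau$-monomial, which means $A$ is a $\tau$-OD matrix.

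The only point that needs care is matching the definitions: $\star$ with $\sigma=$ Frobenius$^{q}$ must coincide with $\dag$, and characteristic $2$ must fall under case (2) with exponent $e$ given by $q=2^{e}$. The real difficulty is already contained in Theorem \ref{theorem-com}, so beyond this check the argument is direct.
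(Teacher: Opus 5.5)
Your proof is correct and follows the paper's argument. Take an NSC matrix $N$ from Proposition~\ref{existNSC}, apply Theorem~\ref{theorem-com} with $\sigma(x)=x^{q}$ to the invertible self-adjoint matrix $NN^{\dag}$ to obtain a unit lower triangular $L$ with $LNN^{\dag}L^{\dag}=DP_{\tau}$, and conclude via Proposition~\ref{LND} that $LN$ is $\tau$-OD. Your version is somewhat more careful than the paper's: you treat $k=1$ separately (Proposition~\ref{existNSC} only covers $s\geq 2$), and you explicitly check that $\star=\dag$ and that $(\mathbb{F}_{q^{2}},x\mapsto x^{q})$ satisfies condition (1) or (2) of Theorem~\ref{theorem-com}.
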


\begin{IEEEproof}
By Proposition \ref{existNSC} and Theorem \ref{theorem-com}, we deduce the existence of a $\tau$-monomial decomposition with respect to $(\mathbb{F}_{q^2}, \sigma)$, regardless of whether $q$ is odd or even.

Note that there exists a unit lower triangular matrix $L\in\mathbb{F}_{q^{2}}^{k\times k}$ and a $\tau$-monomial matrix $DP_\tau\in\mathbb{F}_{q^{2}}^{k\times k}$ for some $\tau\in\mathrm{Sym}_{k}$ such that $LNN^{\dag}L^{\dag}=DP_\tau$. Moreover, it follows from Proposition \ref{LND} that $LN$ is an NSC matrix. Consequently, $LN$ is a $\tau$-OD matrix over $\mathbb{F}_{q^{2}}$ for some $\tau\in\mathrm{Sym}_{k}$, which confirms the statement.
\end{IEEEproof}

\subsection{Existence of Several New Infinite Families of $\tau$-OD Matrices}\label{OD}

Let $A$ be a $k \times k$ matrix. Let $\{u_i\}_{i=1}^s$ and $\{v_i\}_{i=1}^s$ be two sets of $s$ pairwise distinct positive integers in $\{1,2,\dots,k\}$.
We denote by
\begin{align*}
\left|\begin{pmatrix}
u_{1},&u_{2},&\cdots,&u_{s}\\
v_{1},&v_{2},&\cdots,&v_{s}
\end{pmatrix}_{A}\right|
\end{align*}
the determinant of the submatrix of $A$ formed by the $u_{1}',u_{2}',\ldots,u_{s}'$-th rows and the $v_{1}',v_{2}',\ldots,v_{s}'$-th columns, where $1\leq u_{1}'<u_{2}'<\ldots< u_{s}'\leq k$ and $1\leq v_{1}'<v_{2}'<\ldots< v_{s}'\leq k$, such that $\{u_{i}'\}_{i=1}^s=\{u_i\}_{i=1}^s$ and $\{v_{i}'\}_{i=1}^s=\{v_i\}_{i=1}^s$.

\vspace{3pt}

The following result corresponds to the finite field case of \cite[Lemma 3]{CaoZhou2026}.

\begin{lemma}\label{lem-permu}
Let $A=(a_{i,j})_{1\leq i,j\leq k}\in\mathbb{F}_{q^{2}}^{k\times k}$ be an invertible matrix. If there exists a unit lower triangular matrix $L\in\mathbb{F}_{q^{2}}^{k\times k}$ and a unit upper triangular matrix $U\in\mathbb{F}_{q^{2}}^{k\times k}$ such that $LAU$ is a $\tau$-monomial matrix for some $\tau\in{\rm Sym}_{k}$, then $\tau$ satisfies:
\begin{align*}
\tau(1)={\rm min}\left\{1\leq s \leq k:a_{1,s}\neq 0\right\},
\end{align*}
and for all $2\leq i\leq k$,
\begin{align*}
\tau(i)={\rm min}\left\{1\leq s \leq k:s\notin\{\tau(1),\ldots,\tau(i-1)\}, \
\left|\begin{pmatrix}
1, & \cdots, & i-1, & i \\
\tau(1), & \cdots , & \tau(i-1), & s
\end{pmatrix}_{A}\right|\neq 0\right\}.
\end{align*}
\end{lemma}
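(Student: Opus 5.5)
We characterize $\tau$ through the leading principal-row minors of $A$, because these minors are invariant under the transformation $A\mapsto LAU$. Write $M=LAU=DP_\tau$ with $D=\mathrm{diag}(d_1,\ldots,d_k)$ invertible. By the convention for $P_\tau$, row $i$ of $M$ has its single nonzero entry $d_i$ in column $\tau(i)$. The argument has two steps: an invariance statement for minors, and then a direct computation of the same minors for $M$.

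\emph{Step 1: invariance of the minors.} For any $i$ and any column set $S=\{s_1<\cdots<s_i\}$, let $A[\{1,\ldots,i\},S]$ denote the submatrix on rows $1,\ldots,i$ and columns $S$. I claim that
\begin{align*}
\det (LA)[\{1,\ldots,i\},S]=\det A[\{1,\ldots,i\},S].
\end{align*}
This holds because the first $i$ rows of $LA$ equal $L_{[i]}$ times the first $i$ rows of $A$, where $L_{[i]}$ is the leading $i\times i$ block of $L$, which is unit lower triangular.

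The analogous statement for right multiplication by $U$ is weaker. I will not claim that a fixed column set gives the same minor for $A$ and $AU$. Instead I will use a flag-type statement. Let $W_i$ be the row space of the first $i$ rows of $A$; it is unchanged by $L$. For a subspace $W$, set
\begin{align*}
J(W)=\{\,j : \text{some } w\in W\setminus\{0\} \text{ has first nonzero coordinate } j\,\}.
\end{align*}
Right multiplication by a unit upper triangular $U$ preserves the first nonzero coordinate of every vector, hence preserves $J(W)$. Reduced row echelon form shows that $J(W)$ is exactly the lexicographically smallest column set $S$ with $\det W[S]\ne0$, meaning the minor of a basis matrix of $W$ on columns $S$ is nonzero. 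Equivalently, $J(W)$ is the greedy set obtained by choosing each next index as the least one that keeps the minor nonzero.

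\emph{Step 2: computing the greedy sets from $M$.} Since $W_i(M)=W_i(A)U$, it suffices to compute these greedy sets for the monomial matrix $M$. The first $i$ rows of $M$ are supported exactly on the columns $\{\tau(1),\ldots,\tau(i)\}$, so $J(W_i)=\{\tau(1),\ldots,\tau(i)\}$.

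\emph{Step 3: deriving the formula by induction on $i$.} Fix $i$ and suppose $\tau(1),\ldots,\tau(i-1)$ are already determined. Define
\begin{align*}
s^*=\min\Big\{\,s\notin\{\tau(1),\ldots,\tau(i-1)\} :
\left|\begin{pmatrix}1,&\cdots,&i\\ \tau(1),&\cdots,&s\end{pmatrix}_A\right|\ne0\,\Big\}.
\end{align*}
We must show $\tau(i)=s^*$. Using the echelon description, this amounts to showing that $J(W_i)\setminus J(W_{i-1})$ equals the least $s$ such that the columns $\{\tau(1),\ldots,\tau(i-1),s\}$ give a nonzero $i\times i$ minor of the first $i$ rows.

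Take the echelon basis of $W_{i-1}$ with pivots at $\tau(1),\ldots,\tau(i-1)$, and extend it by one vector $v$ of $W_i$ that has been reduced to vanish on those pivot columns. Then the minor on columns $\{\tau(1),\ldots,\tau(i-1),s\}$ is, up to a nonzero scalar, equal to $v_s$. Hence the least admissible $s$ is the first nonzero coordinate of $v$, which is the new pivot $\tau(i)$. The case $i=1$ reads $\tau(1)=\min\{s: a_{1,s}\ne0\}$.

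The main obstacle is Step 3: showing that the minor on $\{\tau(1),\ldots,\tau(i-1),s\}$ equals $v_s$ up to a nonzero scalar. Getting this right needs careful bookkeeping of the row reduction by a unit lower triangular transformation. That is the only nonroutine point; the rest is elimination bookkeeping.
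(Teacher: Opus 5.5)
The paper does not prove this lemma; it is quoted as the finite-field case of \cite[Lemma 3]{CaoZhou2026}. So there is no in-text argument to compare yours with. On its own terms, your proof is correct and complete. Each step checks out:
\begin{itemize}
\item[(a)] $W_i(LAU)=W_i(A)U$, because $L$ is unit lower triangular.
\item[(b)] Right multiplication by a unit upper triangular matrix preserves leading positions, so $J(W_i(A))=J(W_i(M))=\{\tau(1),\ldots,\tau(i)\}$.
\item[(c)] In Step 3, $v$ is nonzero and its leading coordinate lies in $J(W_i)$ but outside $\{\tau(1),\ldots,\tau(i-1)\}$, so it equals $\tau(i)$.
\end{itemize}

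The step you call the main obstacle is in fact immediate. Any change of basis of $W_i$ multiplies all $i\times i$ minors by one common nonzero determinant; no triangularity is needed. The matrix with rows $b_1,\ldots,b_{i-1},v$ has an identity block on columns $\tau(1),\ldots,\tau(i-1)$, and $v$ vanishes there. Hence its minor on $\{\tau(1),\ldots,\tau(i-1),s\}$ is $\pm v_s$. Also, the lexicographic-minimality description of $J(W)$ is never used. You only need that the pivots of the reduced echelon form are exactly $J(W)$.

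A more computational route avoids flags altogether. From $A=L^{-1}DP_\tau U^{-1}$ and the shape of the first $i$ rows of $L^{-1}$, one gets directly
\[
\det A[\{1,\ldots,i\},S]=\pm\, d_1\cdots d_i\,\det U^{-1}[T_i,S],\qquad T_i=\{\tau(1),\ldots,\tau(i)\}.
\]
A minor of an upper triangular matrix vanishes unless the sorted $T_i$ is componentwise at most the sorted $S$, and it equals $1$ when $S=T_i$. For $s\notin T_{i-1}$ with $s<\tau(i)$, the set $S=T_{i-1}\cup\{s\}$ is componentwise below $T_i$ and different from it, so that minor vanishes. The choice $s=\tau(i)$ gives a nonzero minor.

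Comparing the two routes:
\begin{itemize}
\item The computational route is shorter and gives an exact formula for every row-initial minor.
\item Your route explains conceptually that $\tau$ is the pivot pattern of the row flag of $A$. That pattern is invariant under $A\mapsto LAU$, which makes the uniqueness of $\tau$ transparent.
\end{itemize}

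One small point on conventions. You read $DP_\tau$ as having its row-$i$ entry in column $\tau(i)$. The paper's literal definition of $P_\tau$ (row $\tau(i)$ of $P_\tau$ is row $i$ of $I_k$) puts it in column $\tau^{-1}(i)$. Your reading is the one under which the lemma holds for arbitrary invertible $A$: under the literal one, $A=P_\tau$ with $\tau$ a $3$-cycle is a counterexample. It is also the reading the paper actually uses in Theorem \ref{theorem-inf1}. In all the paper's applications $A$ is self-adjoint, so $\tau$ is an involution and the two readings coincide.
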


\vspace{6pt}

Building upon Theorem \ref{theorem-com} and Lemma \ref{lem-permu}, the following theorem establishes a unified approach for constructing $\tau$-OD matrices over finite fields of arbitrary characteristic, extending the scope of \cite[Theorem 4]{CaoZhou2026} from odd characteristic to all characteristics.

\begin{theorem}\label{theorem-permu}
Let $N\in\mathbb{F}_{q^{2}}^{k\times k}$ be an NSC matrix and $\tau\in{\rm Sym}_{k}$. Then, there exists a unit lower triangular matrix $L\in\mathbb{F}_{q^{2}}^{k\times k}$ such that $LN$ is a $\tau$-OD matrix over $\mathbb{F}_{q^{2}}$ if and only if
\begin{align}\label{equation20}
\tau(1)={\rm min}\left\{1\leq s \leq k:n_{1,s}\neq 0\right\},
\end{align}
where $n_{1,s}$ is the $(1,s)$-entry of $NN^{\dag}$, and for all $2\leq i\leq k$,
\begin{align}\label{equation21}
\tau(i)={\rm min}\left\{1\leq s \leq k:s\notin\{\tau(1),\ldots,\tau(i-1)\}, \
\left|\begin{pmatrix}
1, & \cdots, & i-1, & i \\
\tau(1), & \cdots , & \tau(i-1), & s
\end{pmatrix}_{NN^{\dag}}\right|\neq 0\right\}.
\end{align}
\end{theorem}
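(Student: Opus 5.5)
Set $M=NN^{\dag}$. Since $N$ is NSC it is invertible, so $M$ is invertible, and $M^{\dag}=M$, i.e. $M$ is self-adjoint with respect to $(\mathbb{F}_{q^2},\sigma)$ where $\sigma(x)=x^q$. For any unit lower triangular $L$, Proposition \ref{LND} shows that $LN$ is again NSC. Hence $LN$ is a $\tau$-OD matrix if and only if $(LN)(LN)^{\dag}=LML^{\dag}$ is a $\tau$-monomial matrix. The statement therefore reduces to a purely matrix-theoretic claim: there is a unit lower triangular $L$ with $LML^{\dag}=DP_\tau$ if and only if $\tau$ satisfies \eqref{equation20}--\eqref{equation21} with $M$ in place of $A$.

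\emph{Necessity.} Suppose $LML^{\dag}=DP_\tau$ for some unit lower triangular $L$. Then $L^{\dag}$ is unit upper triangular. Applying Lemma \ref{lem-permu} with $A=M$ and $U=L^{\dag}$ shows that $\tau$ is forced to be the permutation given by \eqref{equation20}--\eqref{equation21}.

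\emph{Sufficiency.} Let $\tau$ be defined by \eqref{equation20}--\eqref{equation21}. By Theorem \ref{theorem-com}, the pair $(\mathbb{F}_{q^2},\sigma)$ with $\sigma(x)=x^q$ admits a $\tau$-monomial decomposition in every characteristic. For $q$ odd this is case (1). For $q=2^e$ we have $q^2=2^{2e}$ and $\sigma(x)=x^{2^e}$, which is case (2). So there exist a unit lower triangular $L$ and some $\tau'\in\mathrm{Sym}_k$ with $LML^{\dag}=D'P_{\tau'}$. Applying Lemma \ref{lem-permu} again, $\tau'$ must equal the permutation determined by $M$ through \eqref{equation20}--\eqref{equation21}, namely $\tau$. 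Thus $LN$ is a $\tau$-OD matrix for exactly this $\tau$.

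The only delicate point is that Lemma \ref{lem-permu} makes the permutation unique. This rests on the fact that the relevant minors are invariant: left multiplication by a unit lower triangular matrix and right multiplication by a unit upper triangular matrix preserve the determinants of submatrices formed by the first $i$ rows and any chosen set of columns. We can take the lemma as given, so the proof amounts to checking that its hypotheses ($M$ invertible, $U=L^{\dag}$ unit upper triangular) hold. The main substantive input is the characteristic-2 case of Theorem \ref{theorem-com}, which is already established.
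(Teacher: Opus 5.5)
Your proposal is correct and matches the paper's proof step for step. Necessity is Lemma \ref{lem-permu} applied with $A=NN^{\dag}$ and $U=L^{\dag}$. Sufficiency takes the $\tau'$-monomial decomposition guaranteed by Theorem \ref{theorem-com} and then identifies $\tau'=\tau$ via the same lemma; you usefully make explicit that case (2) applies because $q^2=2^{2e}$ and $x^q=x^{2^e}$.

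One correction to your closing aside, though the proof does not rely on it. Right multiplication by a unit upper triangular $U$ does \emph{not} preserve minors on the first $i$ rows with an arbitrary column set: take $A=I_2$ and $U=\begin{pmatrix}1&1\\0&1\end{pmatrix}$, which turns the $(1,2)$-entry $0$ into $1$. What is actually invariant under $A\mapsto LAU$ is the rank of every leading block formed by the first $i$ rows and first $j$ columns, and that is what forces uniqueness of $\tau$ in Lemma \ref{lem-permu}.
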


\begin{IEEEproof}
``$\Longrightarrow$'': Suppose there exists a unit lower triangular matrix $L\in\mathbb{F}_{q^{2}}^{k\times k}$ such that $LN$ is a $\tau$-OD matrix over $\mathbb{F}_{q^{2}}$. Then, $LN(LN)^{\dag}$ is a $\tau$-monomial matrix.
Setting $A=NN^{\dag}$ and $U=L^{\dag}$ in Lemma \ref{lem-permu}, it follows that Eqs. \eqref{equation20} and \eqref{equation21} hold.

\vspace{4pt}

``$\Longleftarrow$'': Conversely, suppose $\tau$ satisfies Eqs. \eqref{equation20} and \eqref{equation21}.
By Theorem \ref{theorem-com}, there exists a $\tau$-monomial decomposition with respect to $(\mathbb{F}_{q^2}, \sigma)$, regardless of whether $q$ is odd or even.
Since $NN^{\dag}$ is an invertible self-adjoint matrix, there exists a unit lower triangular matrix $L\in\mathbb{F}_{q^{2}}^{k\times k}$ such that
$LNN^{\dag}L^{\dag}$ is a $\tau'$-monomial matrix for some $\tau'\in\mathrm{Sym}_{k}$. By Proposition \ref{LND} and Lemma \ref{lem-permu}, $\tau'=\tau$, and therefore $LN$ is a $\tau$-OD matrix over $\mathbb{F}_{q^{2}}$.

Therefore, we complete the proof.
\end{IEEEproof}

\vspace{6pt}

Now, by applying Theorem \ref{theorem-permu}, we prove the existence of a new infinite family of $\tau$-OD matrices over finite fields of characteristic $2$ in the following theorem.

\begin{theorem}\label{theorem-inf1}
Let $q=2^e>2$ be an even prime power. Suppose $(k-1)\mid(q^{2}-1)$ with $k\geq 3$. Let $i+qt_{i}\equiv 0 \pmod{k-1}$, where $1\leq t_{i}\leq k-2$ for $1\leq i\leq k-2$. Then, there exists an infinite family of $k\times k$ $\tau$-OD matrices over $\mathbb{F}_{q^2}$, where
\begin{align}\label{equ-perm}
\tau=\begin{pmatrix}
1 & 2 & 3 & \cdots & k-1 & k\\
1 & t_{1}+1 & t_{2}+1 & \cdots & t_{k-2}+1 & k
\end{pmatrix}.
\end{align}
\end{theorem}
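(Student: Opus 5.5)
Our plan is to apply Theorem \ref{theorem-permu} to a concrete NSC matrix $N$ whose Gram matrix $NN^{\dag}$ we can compute explicitly. Since $(k-1)\mid(q^2-1)$, there is a primitive $(k-1)$-th root of unity $\xi\in\mathbb{F}_{q^2}$. We take $N$ to be the $k\times k$ Vandermonde-type matrix whose columns are indexed by the $k$ points $0,1,\xi,\ldots,\xi^{k-2}$, or more precisely by $0$ together with the $(k-1)$-th roots of unity, with rows $(x^{i-1})_{x}$ for $1\leq i\leq k$. A square Vandermonde matrix with distinct nodes is NSC: any $i\times i$ submatrix formed from the first $i$ rows is a Vandermonde determinant. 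This gives $k\leq q^2$ NSC matrices, consistent with Proposition \ref{existNSC}. To obtain an infinite family we will also let $N$ vary as $N D$ for invertible diagonal $D$, or equivalently let the roots be scaled by any $\gamma\in\mathbb{F}_{q^2}^{*}$. Proposition \ref{LND} keeps these NSC, and the entries of $ND D^{\dag}N^{\dag}$ are controlled by the same character sums. We expect the infinitude to come from this freedom and from the infinitely many admissible $(q,k)$.

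The key computation is the Hermitian Gram matrix $M=NN^{\dag}$, with $(i,j)$-entry $\sum_{x}x^{i-1}x^{q(j-1)}$ (with $0^0=1$). The node $0$ contributes $1$ only when $i=j=1$. The roots of unity contribute $\sum_{\zeta^{k-1}=1}\zeta^{(i-1)+q(j-1)}$, which equals $k-1$ if $(i-1)+q(j-1)\equiv 0\pmod{k-1}$ and $0$ otherwise. In characteristic $2$, $k-1$ is odd because it divides $q^2-1$, so $k-1=1$ in $\mathbb{F}_{q^2}$.

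We then list the nonzero entries:
\begin{itemize}
\item $(1,1)$ gives $1+(k-1)=k\equiv 0$;
\item $(1,k)$ and $(k,1)$ give $1$, since $q(k-1)\equiv0$;
\item $(k,k)$ gives $1$;
\item for $2\leq i\leq k-1$, the entry $(i,j)$ is nonzero exactly when $(i-1)+q(j-1)\equiv0$, that is, $j=t_{i-1}+1$ with $t_{i-1}\in\{1,\ldots,k-2\}$ as in the hypothesis. The value $t_{i-1}$ is unique because $\gcd(q,k-1)=1$.
\end{itemize}
So $M$ is, up to the $2\times2$ block on rows and columns $\{1,k\}$ equal to $\begin{pmatrix}0&1\\1&1\end{pmatrix}$, a monomial matrix. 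Its middle rows $2,\ldots,k-1$ are supported on columns $t_1+1,\ldots,t_{k-2}+1$, which form a permutation of $\{2,\ldots,k-1\}$.

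Next we verify Eqs. \eqref{equation20} and \eqref{equation21} for the $\tau$ in \eqref{equ-perm}. For $i=1$, the smallest $s$ with $m_{1,s}\neq0$ must be $s=1$, but $m_{1,1}=0$. This points to a mismatch: with the node $0$ included, $\tau(1)=k$. We therefore expect to adjust $N$, and this adjustment is the main obstacle. We will first try dropping $0$ and using $k$ distinct nodes $\{\gamma\zeta\}\cup\{\lambda\}$ for a suitable $\lambda\neq0$ with $\lambda^{q+1}\neq 1$ relative to $\gamma^{q+1}$. The $(1,1)$ entry is then $\gamma^{(q+1)\cdot 0}(k-1)+1$, so we will rescale the columns by a diagonal $D$, with weights $w$ on the roots and $u$ on $\lambda$, so that $m_{1,1}=w^{q+1}+u^{q+1}\neq0$. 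The cross terms then come only from the $\lambda$ column, a rank-one perturbation $\mathbf{v}\mathbf{v}^{\dag}$ with $\mathbf{v}=u(1,\lambda,\ldots,\lambda^{k-1})^{\top}$.

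The remaining work is the minor computation. Writing $M=D_0 P+\mathbf{v}\mathbf{v}^{\dag}$, where $D_0P$ is the monomial part, we will show inductively, via the matrix determinant lemma on leading row blocks, that the minor $\left|\begin{pmatrix}1,\ldots,i\\ \tau(1),\ldots,\tau(i-1),s\end{pmatrix}_{M}\right|$ vanishes for every $s<\tau(i)$ not already used and is nonzero at $s=\tau(i)$. We also have to handle the last row, where $\tau(k)=k$ is forced since it is the only column left. The generic choices of $u,w,\lambda$ (infinitely many) that avoid finitely many polynomial conditions give the family. Theorem \ref{theorem-permu} then yields $L$ with $LN$ a $\tau$-OD matrix.
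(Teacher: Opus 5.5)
Your Gram-matrix computation for the Vandermonde matrix on $\{0\}\cup\{\zeta^j\}$ is correct, and you correctly see that $m_{1,1}=k=0$ forces $\tau(1)=k$. The gap is in your repair, which would produce the wrong $\tau$.

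Once you replace the node $0$ by a nonzero $\lambda$, the vector $\mathbf{v}=u(1,\lambda,\ldots,\lambda^{k-1})^{\top}$ has no zero entries. So $\mathbf{v}\mathbf{v}^{\dag}$, with entries $u^{q+1}\lambda^{(i-1)+q(j-1)}$, fills the whole matrix and destroys the sparsity that makes the minors in \eqref{equation21} vanish. Concretely, suppose $t_1\neq 1$ (e.g.\ $q=4$, $k=4$, where $t_1=2$). Then the root-of-unity part is zero at $(1,2)$, $(2,1)$ and $(2,2)$, and
\[
m_{1,1}m_{2,2}-m_{1,2}m_{2,1}=(w^{q+1}+u^{q+1})u^{q+1}\lambda^{q+1}-u^{2(q+1)}\lambda^{q+1}=w^{q+1}u^{q+1}\lambda^{q+1}\neq 0 .
\]
Hence, as soon as $m_{1,1}\neq 0$ (which you need for $\tau(1)=1$), \eqref{equation21} gives $\tau(2)=2\neq t_1+1$, for every choice of $u,w,\gamma,\lambda$.

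Your $D_0P$ is also not monomial. Rows $1$ and $k$ both have exponent residue $0$ modulo $k-1$, so the root-of-unity part contains a singular $2\times 2$ block on $\{1,k\}$.

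More fundamentally, \eqref{equation21} requires certain minors to \emph{vanish} for all unused $s<\tau(i)$. A genericity argument that avoids finitely many polynomial conditions can only secure non-vanishing, so it can never deliver those zeros. (Over a fixed $\mathbb{F}_{q^2}$ there are also only finitely many parameter choices.)

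The missing idea is the diagonal rescaling you already mention in your first paragraph, applied to the node-$0$ column only. Take $N'=N\,\mathrm{diag}(g,1,\ldots,1)$ with $g$ a primitive element. That column is $(1,0,\ldots,0)^{\top}$, so $N'N'^{\dag}$ differs from $NN^{\dag}$ only in the $(1,1)$ entry, which becomes $g^{q+1}+1\neq 0$. This is exactly where $q>2$ is needed, since every element of $\mathbb{F}_4^{*}$ satisfies $g^{3}=1$. By Proposition~\ref{LND}, $N'$ is NSC, and it is precisely the paper's matrix.

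With $E_{i,j}$ the matrix units and using $k-1=1$ in $\mathbb{F}_{q^2}$, you get $N'N'^{\dag}=(g^{q+1}+1)E_{1,1}+E_{1,k}+E_{k,1}+E_{k,k}+\sum_{i=2}^{k-1}E_{i,t_{i-1}+1}$. Now read off $\tau$:
\begin{itemize}
\item Since $m_{1,1}\neq 0$, you get $\tau(1)=1$.
\item For $2\leq i\leq k-1$, row $i$ has its only nonzero entry in column $t_{i-1}+1\in\{2,\ldots,k-1\}$. So the minor in \eqref{equation21} vanishes unless $s=t_{i-1}+1$. In that case the submatrix is monomial, because row $1$ meets the chosen columns only in column $1$. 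This gives $\tau(i)=t_{i-1}+1$.
\item The value $\tau(k)=k$ is forced, as $k$ is the only column left.
\end{itemize}
Theorem~\ref{theorem-permu} then supplies $L$ with $LN'$ a $\tau$-OD matrix.
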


\begin{IEEEproof}
We define the following $k\times k$ matrix over $\mathbb{F}_{q^2}$:
\begin{align*}
N=\begin{pmatrix}
g&1&1&1&\cdots&1\\
0&1&g^{m}&g^{2m}&\cdots&g^{(k-2)m}\\
0&1&g^{2m}&g^{4m}&\cdots&g^{2(k-2)m}\\
\vdots&\vdots&\vdots&\vdots&\ddots&\vdots\\
0&1&g^{(k-2)m}&g^{2(k-2)m}&\cdots&g^{(k-2)^{2}m}\\
0&1&1&1&\cdots&1\\
\end{pmatrix},
\end{align*}
where $g$ is a primitive element of $\mathbb{F}_{q^2}$ and $m=\frac{q^2-1}{k-1}$. One can easily verify that $N$ is an NSC matrix. To compute $NN^{\dag}$, let $\mathbf{n}_{i}$ denote the $i$-th row of $N$ for $1\leq i\leq k$. The following properties regarding the Hermitian inner products of these rows are then obtained.

\begin{itemize}
\item [1)] Clearly, $\langle\mathbf{n}_{1},\mathbf{n}_{1}\rangle_{\mathrm{H}}=g^{q+1}+1$ and
$\langle\mathbf{n}_{1},\mathbf{n}_{k}\rangle_{\mathrm{H}}=\langle\mathbf{n}_{k},\mathbf{n}_{1}\rangle_{\mathrm{H}}
=\langle\mathbf{n}_{k},\mathbf{n}_{k}\rangle_{\mathrm{H}}=1$.

\vspace{6pt}

\item [2)] When $2\leq i\leq k-1$, we have
\begin{align*}
\langle\mathbf{n}_{1},\mathbf{n}_{i}\rangle_{\mathrm{H}}=\langle\mathbf{n}_{k},\mathbf{n}_{i}\rangle_{\mathrm{H}}=0.
\end{align*}
The fact $(NN^{\dag})^{\dag}=NN^{\dag}$ implies that for $2\leq i\leq k-1$,  $\langle\mathbf{n}_{i},\mathbf{n}_{1}\rangle_{\mathrm{H}}=\langle\mathbf{n}_{i},\mathbf{n}_{k}\rangle_{\mathrm{H}}=0$.

\vspace{6pt}

\item [3)] When $2\leq i,j\leq k-1$, we obtain
\begin{align*}
\langle\mathbf{n}_{i},\mathbf{n}_{j}\rangle_{\mathrm{H}}
&=\sum_{s=0}^{k-2}g^{sm[i-1+q(j-1)]} \\
&=\begin{cases}
k-1, & \mbox{if } j=t_{i-1}+1,\\
0,& \mbox{if } j\neq t_{i-1}+1.
\end{cases}
\end{align*}
\end{itemize}

It follows that
\begin{align*}
NN^{\dag}=(g^{q+1}+1)E_{1,1}+(k-1)\left(E_{1,k}+E_{k,1}+E_{k,k}+\sum_{i=2}^{k-1}E_{i,t_{i-1}+1}\right).
\end{align*}
Here $E_{i,j}$ denotes the matrix whose $(i,j)$-entry is $1$, while all other entries are $0$.

By Theorem \ref{theorem-com}, there exists a unit lower triangular matrix $L$ such that $LNN^{\dag}L^{\dag}$ is a $\tau$-monomial matrix
over $\mathbb{F}_{q^{2}}$ for some $\tau\in{\rm Sym}_{k}$. Hence, $LN$ is a $\tau$-OD matrix. Applying Theorem \ref{theorem-permu}, we derive $\tau(1)=1$, $\tau(k)=k$, and
$\tau(i)=t_{i-1}+1$ for $2\leq i\leq k-1$. This result is consistent with Eq. \eqref{equ-perm}, which completes the proof.
\end{IEEEproof}

\vspace{6pt}

In the following theorem, we demonstrate the existence of several infinite families of small-sized $\tau$-OD matrices over finite fields of characteristic $2$.

\begin{theorem}\label{theorem-inf2}
Let $e$ be a positive integer. The following three statements hold.
\begin{itemize}
\item [(1)] There exists an infinite family of $2 \times 2$ $\tau$-OD matrices over $\mathbb{F}_{2^{2e}}$ with $\tau=(1,2)$.

\item [(2)] If $e > 1$, then there exists an infinite family of $3 \times 3$ $\tau$-OD matrices over $\mathbb{F}_{2^{2e}}$ with $\tau=(2,3)$.

\item [(3)] If $e > 2$, then there exist an infinite family of $4 \times 4$ $\tau$-OD matrices over $\mathbb{F}_{2^{2e}}$ with $\tau=(1,2)(3,4)$.
\end{itemize}
\end{theorem}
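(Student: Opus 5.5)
The plan follows the route of Theorem \ref{theorem-inf1}. For each of the three sizes $k=2,3,4$ I will write down an explicit parametrized NSC matrix $N\in\mathbb{F}_{2^{2e}}^{k\times k}$, where $q=2^e$ so that $\mathbb{F}_{q^2}=\mathbb{F}_{2^{2e}}$. I will then compute the self-adjoint matrix $NN^{\dag}$. Theorem \ref{theorem-com} guarantees a unit lower triangular $L$ with $LNN^{\dag}L^{\dag}$ monomial, and Proposition \ref{LND} gives that $LN$ is NSC, so $LN$ is a $\tau'$-OD matrix. Theorem \ref{theorem-permu} (equivalently Lemma \ref{lem-permu}) then identifies $\tau'$ from the leading minors of $NN^{\dag}$. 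So the whole task reduces to choosing $N$ so that the pattern of vanishing entries and minors of $NN^{\dag}$ forces the prescribed $\tau$. The phrase ``infinite family'' will come from letting $e$ vary, together with free parameters in $N$.

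\textbf{Case (1).} Take $N=\begin{pmatrix}1&\alpha\\ 1&\beta\end{pmatrix}$ with $\alpha\neq\beta$ both nonzero, so $N$ is NSC. The rule \eqref{equation20} requires $(NN^{\dag})_{1,1}=1+\alpha^{q+1}=0$, i.e. $\alpha^{q+1}=1$. In characteristic $2$ this means $\alpha$ lies in the norm-one subgroup of order $q+1$. The rule \eqref{equation21} then gives $\tau(1)=2$, $\tau(2)=1$ automatically. For example $\alpha=1$ and any $\beta\notin\{0,1\}$ work; these exist since $q^2\geq4$.

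\textbf{Case (2).} I want $NN^{\dag}$ with $(1,1)$-entry nonzero, so $\tau(1)=1$. After clearing, the $2\times2$ minor $\left|\begin{pmatrix}1,2\\1,2\end{pmatrix}\right|$ should vanish while the minor with columns $\{1,3\}$ does not, which forces $\tau(2)=3$ and $\tau(3)=2$. Concretely, I try $N$ whose first row is $(1,1,1)$ and whose other rows are $(0,1,\gamma)$ and $(0,1,\gamma^{2})$ or similar Vandermonde-type rows. I then impose that the second row is Hermitian-orthogonal to itself after projecting off row 1. This reduces to a norm equation of the form $x^{q+1}+y^{q+1}+z^{q+1}=\ldots$ in $\mathbb{F}_{q^2}$, which I will solve using the elements of order dividing $q+1$. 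The hypothesis $e>1$ should be what guarantees enough distinct such elements while keeping all column minors nonzero.

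\textbf{Case (3).} Here I need $(NN^{\dag})_{1,1}=0$ with $(1,2)$-entry nonzero, giving $\tau(1)=2$ and $\tau(2)=1$. I also need the leading $3\times3$ minor with columns $\{2,1,3\}$ to vanish, giving $\tau(3)=4$ and $\tau(4)=3$. A natural choice is a block-like Vandermonde matrix with evaluation points in the subgroup of order $q+1$ (or a Cauchy matrix). With such points the inner products $\sum_s x_s^{i+qj}$ become character sums that vanish for suitable exponents, mimicking the computation in Theorem \ref{theorem-inf1}. The main obstacle is this step: simultaneously securing all the NSC conditions (every minor from the first $i$ rows nonzero) and the required vanishing minors of $NN^{\dag}$. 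I would first search for a parametrized $4\times4$ example for small $e$ (such as $e=3$, $q=8$), then generalize the parameters. The condition $e>2$ presumably ensures $q+1\geq 9$ points so that distinct nonzero evaluation points avoid the finitely many degenerate values.
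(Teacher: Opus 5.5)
\paragraph{Gap: parts (2) and (3) lack verified explicit matrices.}
Your reduction matches the paper's. You exhibit an NSC matrix $N$, compute $NN^{\dag}$, and read off $\tau$ via Theorems \ref{theorem-com} and \ref{theorem-permu} together with Proposition \ref{LND}. Your case (1) is correct: the paper uses $N=\begin{pmatrix}1&1\\0&1\end{pmatrix}$, and $n_{1,2}\neq 0$ is forced by invertibility of $NN^{\dag}$. The problem is that in (2) and (3) the entire content of the statement is an explicit matrix, and you do not supply one with verification.

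\paragraph{Part (2).}
Your candidate with rows $(1,1,1),(0,1,\gamma),(0,1,\gamma^{2})$ does work, but not by the route you describe. In characteristic $2$ one has $n_{1,1}=1$, and
\[
\left|\begin{pmatrix}1,&2\\1,&2\end{pmatrix}_{NN^{\dag}}\right|=\gamma+\gamma^{q},\qquad
\left|\begin{pmatrix}1,&2\\1,&3\end{pmatrix}_{NN^{\dag}}\right|=\gamma+\gamma^{2q}.
\]
So the vanishing condition is the trace condition $\gamma^{q}=\gamma$, i.e.\ $\gamma\in\mathbb{F}_q$, not a norm equation. Your proposed solution set fails: if $\gamma^{q+1}=1$ and $\gamma\neq1$, then $\gamma+\gamma^{q}=(1+\gamma)^{2}/\gamma\neq0$. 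Instead, take $\gamma\in\mathbb{F}_q\setminus\{0,1\}$, which exists exactly when $e>1$. Then $N$ is NSC and the second minor equals $\gamma(1+\gamma)\neq0$, so $\tau=(2,3)$. This is essentially the paper's choice: rows $(1,1,1),(0,1,\alpha),(0,1,1)$ with $\alpha=g^{2^{e}+1}\in\mathbb{F}_q\setminus\{0,1\}$.

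\paragraph{Part (3).}
``Search for a small example, then generalize'' is not a proof, and you have no candidate. The missing idea is short.
\begin{itemize}
\item In characteristic $2$ the row $(1,1,1,1)$ is isotropic, which gives $n_{1,1}=0$.
\item Choose a third row that is isotropic and orthogonal to the first, e.g.\ $(0,1,0,1)$.
\item Then rows $1$ and $3$ of the leading $3\times3$ block of $NN^{\dag}$ are $(0,n_{1,2},0)$ and $(0,n_{3,2},0)$. They are proportional, so the minor with columns $\{1,2,3\}$ vanishes identically.
\end{itemize}
The paper takes $N$ with rows $(1,1,1,1),(0,1,\alpha,\alpha^{2}),(0,1,0,1),(0,0,0,1)$ and $\alpha=g^{2^{e}+1}$. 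NSC only needs $\alpha\notin\{0,1\}$. The remaining requirement is $n_{1,2}=1+\alpha+\alpha^{2}\neq0$; otherwise $\tau(1)=4$. That is where $e>2$ enters: a primitive element $\alpha$ of $\mathbb{F}_q$ must not be a primitive cube root of unity. It has nothing to do with needing $q+1\geq 9$ evaluation points.
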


\begin{IEEEproof}
(1) Let
\begin{align*}
N=\begin{pmatrix}
1&1\\
0&1\\
\end{pmatrix}.
\end{align*}
Then, $N$ is an NSC matrix and
\begin{align*}
NN^{\dag}=\begin{pmatrix}
0&1\\
1&1\\
\end{pmatrix}.
\end{align*}

In light of Theorem \ref{theorem-com}, there exists a unit lower triangular matrix $L$ such that $LNN^{\dag}L^{\dag}$ is a $\tau$-monomial matrix
over $\mathbb{F}_{2^{2e}}$ for some $\tau\in{\rm Sym}_{2}$. Hence, $LN$ is a $\tau$-OD matrix. By Theorem \ref{theorem-permu}, we derive $\tau=(1,2)$. Therefore, $LN$ is a $2 \times 2$ $\tau$-OD matrix with $\tau=(1,2)$, which completes the proof of statement (1).

\vspace{6pt}

(2) When $e > 1$, let $\alpha=g^{2^{e}+1}$, where $g$ is a primitive element of $\mathbb{F}_{2^{2e}}$. Let
\begin{align*}
N=\begin{pmatrix}
1&1&1\\
0&1&\alpha\\
0&1&1\\
\end{pmatrix}.
\end{align*}
Then, $N$ is an NSC matrix. We derive
\begin{align*}
NN^{\dag}=\begin{pmatrix}
1&1+\alpha&0\\
1+\alpha&1+\alpha^{2}&1+\alpha\\
0&1+\alpha&0\\
\end{pmatrix}.
\end{align*}

By Theorem \ref{theorem-com}, there exists a unit lower triangular matrix $L$ such that $LNN^{\dag}L^{\dag}$ is a $\tau$-monomial matrix
over $\mathbb{F}_{2^{2e}}$ for some $\tau\in{\rm Sym}_{3}$. Hence, $LN$ is a $\tau$-OD matrix.
Next, we apply Theorem \ref{theorem-permu} to compute the permutation $\tau$. Since the $(1,1)$-entry of $NN^{\dag}$ is nonzero, we obtain $\tau(1)=1$.
We have
\begin{align*}
\left|\begin{pmatrix}
1, & 2 \\
1, & 2
\end{pmatrix}_{NN^{\dag}}\right|
=0, \ \
\left|\begin{pmatrix}
1, & 2 \\
1, & 3
\end{pmatrix}_{NN^{\dag}}\right|
\neq 0.
\end{align*}
This implies that $\tau=(2,3)$.
Therefore, $LN$ is a $3 \times 3$ $\tau$-OD matrix with $\tau=(2,3)$, which completes the proof of statement (2).

\vspace{6pt}

(3) When $e>2$, let $\alpha=g^{2^{e}+1}$, where $g$ is a primitive element of $\mathbb{F}_{2^{2e}}$. Let
\begin{align*}
N=\begin{pmatrix}
1&1&1&1\\
0&1&\alpha&\alpha^2\\
0&1&0&1\\
0&0&0&1\\
\end{pmatrix}.
\end{align*}
Then, $N$ is an NSC matrix. We derive
\begin{align*}
NN^{\dag}=\begin{pmatrix}
0&1+\alpha+\alpha^{2}&0&1\\
1+\alpha+\alpha^{2}&1+\alpha^{2}+\alpha^{4}&1+\alpha^2&\alpha^2\\
0&1+\alpha^{2}&0&1\\
1&\alpha^{2}&1&1
\end{pmatrix}.
\end{align*}

By Theorem \ref{theorem-com}, there exists a unit lower triangular matrix $L$ such that $LNN^{\dag}L^{\dag}$ is a $\tau$-monomial matrix
over $\mathbb{F}_{2^{2e}}$ for some $\tau\in{\rm Sym}_{4}$. Hence, $LN$ is a $\tau$-OD matrix.
Let us employ Theorem \ref{theorem-permu} to compute the permutation $\tau$. Since the $(1,2)$-entry of $NN^{\dag}$ is $1+\alpha+\alpha^{2}\neq 0$,
we have $\tau(1)=2$. Similarly, $\tau(2)=1$. Since
\begin{align*}
\left|\begin{pmatrix}
1, & 2, &3 \\
2, & 1, &3
\end{pmatrix}_{NN^{\dag}}\right|
=0,\
\left|\begin{pmatrix}
1, & 2, &3 \\
2, & 1, &4
\end{pmatrix}_{NN^{\dag}}\right|
\neq 0,
\end{align*}
we obtain $\tau=(1,2)(3,4)$. Therefore, $LN$ is a $4 \times 4$ $\tau$-OD matrix with $\tau=(1,2)(3,4)$,
which completes the proof of statement (3).
\end{IEEEproof}

\begin{remark}\label{order2}
We make the following remarks regarding Theorem \ref{theorem-inf2}.
\begin{itemize}
\item [(1)] By combining Theorem \ref{theorem-inf2} (1) with the result in \cite[Theorem 6]{CaoZhou2026}, we conclude that $2 \times 2$ $\tau$-OD matrices over
$\mathbb{F}_{q^{2}}$ with $\tau=(1,2)$ always exist for any prime power $q$.

\item [(2)] By combining Theorem \ref{theorem-inf2} (2) with the result in \cite[Theorem 7]{CaoZhou2026}, we conclude that $3 \times 3$ $\tau$-OD matrices over
$\mathbb{F}_{q^{2}}$ with $\tau=(2,3)$ always exist for any prime power $q>2$.

\item [(3)] In \cite[Theorem 8]{CaoZhou2026}, it was shown that $4 \times 4$ $\tau$-OD matrices over $\mathbb{F}_{q^{2}}$ with $\tau=(1,4)$ always exist for any odd
prime power $q$. In contrast, Theorem \ref{theorem-inf2} (3) verifies the existence of an infinite family of $4 \times 4$ $\tau$-OD matrices over $\mathbb{F}_{2^{2e}}$ with a distinct permutation $\tau=(1,2)(3,4)$.
\end{itemize}
\end{remark}

\section{Several Infinite Families of Quantum Codes and Some Record-Breaking Quantum Codes}\label{infinitequ}

Based on the $\tau$-OD matrices from the previous section, this section aims to construct quantum codes with flexible parameters and to obtain some new record-breaking quantum codes.

\subsection{Several Infinite Families of Quantum Codes with Flexible Parameters}\label{twothreefour}

For any row vector $\mathbf{x}=(x_{1},x_{2},\ldots,x_{n})$, we define $\mathbf{x}^{q}:=(x_{1}^{q},x_{2}^{q},\ldots,x_{n}^{q})$.
Utilizing the $\tau$-OD matrices established in Theorem \ref{theorem-inf1}, we derive an infinite family of quantum codes as follows.

\begin{theorem}\label{theorem-qcode1}
Let $q=2^{e}>2$ be an even prime power. Suppose $(k-1)\mid(q^{2}-1)$, with $k\geq 3$ and $\mathrm{gcd}(k-1,q+1)=1$.
Let $i+qt_{i}\equiv 0 \pmod{k-1}$, where $1\leq t_{i}\leq k-2$ for $1\leq i\leq k-2$.
Put $n=mq-r$ with $1 \leq m\leq q$ and $0\leq r \leq q-1$. Then, there exists an infinite family of quantum codes with parameters
\begin{align}\label{lengthkn}
\bigg[\mspace{-4mu}\bigg[kn,2\sum_{i=1}^{k}r_{i}-kn,\geq \min\{(k-i+1)(n-r_{i}+1):1\leq i\leq k\}\bigg]\mspace{-4mu}\bigg]_{q},
\end{align}
where $1\leq r_{i}\leq n$ for $1\leq i\leq k$, $n-(q-1-\lfloor r/m\rfloor)/2\leq r_1,r_{k}\leq n-2$, and $r_{j+1}+r_{t_{j}+1}\geq n$ for $1\leq j\leq k-2$.
\end{theorem}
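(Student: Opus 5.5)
The plan is to take the defining matrix to be the $k\times k$ $\tau$-OD matrix $A=LN$ from Theorem \ref{theorem-inf1}. Its constituent codes will be nested GRS codes $\mathcal{C}_i=\mathrm{GRS}_{r_i}(\mathbf{a},\mathbf{v})$ of length $n=mq-r$ over $\mathbb{F}_{q^2}$, all sharing the same evaluation vector $\mathbf{a}$ and column multiplier $\mathbf{v}$. We show that $\mathcal{C}(A)$ is Hermitian dual-containing and then apply Lemma \ref{proposition6}.

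The length and dimension then follow directly. Since $A$ is NSC and each $\mathcal{C}_i$ is an $[n,r_i,n-r_i+1]_{q^2}$ MDS code, Lemma \ref{proposition2} shows that $\mathcal{C}(A)$ is a
\[
\Big[kn,\textstyle\sum_i r_i,\geq \min_i\{(k-i+1)(n-r_i+1)\}\Big]_{q^2}
\]
code. Lemma \ref{proposition6} then produces the quantum code in \eqref{lengthkn}.

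\textbf{Step 1 (dual of the MP code).} Write $AA^{\dag}=DP_\tau$. Then $A^{\dag}=A^{-1}DP_\tau$, so
\[
(A^{-1})^{\dag}=(A^{\dag})^{-1}=P_\tau^{-1}D^{-1}A .
\]
Thus the rows of $(A^{-1})^{\dag}$ are nonzero multiples of the rows of $A$, permuted according to $\tau$. Substituting into Lemma \ref{proposition4}, and absorbing the diagonal scalars into the (linear) constituent codes, gives
\[
\mathcal{C}(A)^{\perp_{\mathrm H}}=\big[\mathcal{C}_{\tau'(1)}^{\perp_{\mathrm H}},\ldots,\mathcal{C}_{\tau'(k)}^{\perp_{\mathrm H}}\big]\cdot A,
\]
where $\tau'$ is $\tau$ or $\tau^{-1}$; the exact convention is to be fixed from the definition of $P_\tau$. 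Consequently $\mathcal{C}(A)$ is Hermitian dual-containing as soon as $\mathcal{C}_{\tau'(i)}^{\perp_{\mathrm H}}\subseteq\mathcal{C}_i$ for every $i$.

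\textbf{Step 2 (the three kinds of index).} From \eqref{equ-perm} we have $\tau(1)=1$, $\tau(k)=k$ and $\tau(j+1)=t_j+1$ for $1\le j\le k-2$. The hypothesis $\gcd(k-1,q+1)=1$ should make the map $i\mapsto t_i$ compatible with its inverse. Indeed, $i\equiv -qt_i$ and $q^2\equiv1 \pmod{k-1}$ give $t_i\equiv -qi$, which should force $t_{t_i}=i$, so $\tau$ is an involution and $\tau'=\tau$. I expect the gcd hypothesis to be needed to exclude degenerate cases in this bookkeeping. With this in hand, the conditions of Step 1 split into three kinds.
\begin{itemize}
\item[(a)] For $i=1$ and $i=k$ we need $\mathcal{C}_1^{\perp_{\mathrm H}}\subseteq\mathcal{C}_1$ and $\mathcal{C}_k^{\perp_{\mathrm H}}\subseteq\mathcal{C}_k$, i.e. $\mathcal{C}_1$ and $\mathcal{C}_k$ are Hermitian dual-containing GRS codes.
\item[(b)] For the pairs $(j+1,t_j+1)$ we need $\mathcal{C}_{t_j+1}^{\perp_{\mathrm H}}\subseteq\mathcal{C}_{j+1}$.
\end{itemize}

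\textbf{Step 3 (choice of GRS family).} I would invoke the known construction of Hermitian self-orthogonal GRS codes of length $n=mq-r$ with $1\le m\le q$, $0\le r\le q-1$. In that construction, $\mathbf{a}$ is a union of $m$ cosets of $\mathbb{F}_q$ (minus $r$ points) and there is a single $\mathbf{v}$ such that $\mathrm{GRS}_s(\mathbf{a},\mathbf{v})$ is Hermitian self-orthogonal for every $s\le (q-1-\lfloor r/m\rfloor)/2$. The key point to extract is the following. For this fixed pair $(\mathbf{a},\mathbf{v})$, the Hermitian duals of the whole nested chain are themselves nested, with $\mathrm{GRS}_{r_i}(\mathbf{a},\mathbf{v})^{\perp_{\mathrm H}}\subseteq \mathrm{GRS}_{s}(\mathbf{a},\mathbf{v})$ whenever $s\ge n-r_i$ and $n-r_i$ lies in the self-orthogonal range.
\begin{itemize}
\item With $s=r_i$, this gives (a) exactly under $n-(q-1-\lfloor r/m\rfloor)/2\le r_1,r_k$.
\item With $s=r_{j+1}$ and $r_i=r_{t_j+1}$, it gives (b) under $r_{j+1}+r_{t_j+1}\ge n$. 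This may require the middle codes to have codimension also in the admissible range, or a direct dimension-based argument via a common parity-check description.
\end{itemize}

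I expect Step 3 to be the main obstacle: establishing that one common $(\mathbf{a},\mathbf{v})$ makes all these cross-containments hold simultaneously for arbitrary $n=mq-r$, rather than only self-orthogonality of a single code. The upper bound $r_1,r_k\le n-2$ only serves to make the codes proper and the parameters meaningful. Once Step 3 is settled, the dimension $2\sum_i r_i-kn$ and the distance bound follow from Lemma \ref{proposition2} and Lemma \ref{proposition6}.
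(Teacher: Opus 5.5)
\textbf{The gap is in Step 3.} It comes from your choice to draw all $\mathcal{C}_i$ from one nested family $\mathrm{GRS}_{s}(\mathbf{a},\mathbf{v})$.

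Steps 1 and 2 are sound. The identity $(A^{-1})^{\dag}=P_\tau^{-1}D^{-1}A$ turns Lemma \ref{proposition4} into $\mathcal{C}(A)^{\perp_{\mathrm H}}=[\mathcal{C}_{\tau^{-1}(1)}^{\perp_{\mathrm H}},\ldots,\mathcal{C}_{\tau^{-1}(k)}^{\perp_{\mathrm H}}]\cdot A$. Also $t_i\equiv -qi\pmod{k-1}$ together with $q^2\equiv 1$ makes $\tau$ an involution. The gcd hypothesis is not what gives the involution. Since $t_j=j$ iff $(q+1)j\equiv 0\pmod{k-1}$, it says exactly that $t_j\neq j$. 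Hence $\tau$ splits $\{2,\ldots,k-1\}$ into disjoint 2-cycles, and no middle code ever has to be Hermitian dual-containing.

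\textbf{Why your key claim fails.} The code $\mathrm{GRS}_{r_i}(\mathbf{a},\mathbf{v})^{\perp_{\mathrm H}}=\mathrm{GRS}_{n-r_i}(\mathbf{a}^q,\mathbf{w})$ lives on the evaluation points $\mathbf{a}^q$. Its inclusion in $\mathrm{GRS}_{n-r_i}(\mathbf{a},\mathbf{v})$ would be an equality of two GRS codes of the same dimension. For dimensions between $2$ and $n-2$, such an equality forces $x\mapsto x^q$ to agree with a M\"obius transformation on all $n$ coordinates. That is impossible once $n>q+1$, since a M\"obius relation $x^q(\gamma x+\delta)=\alpha x+\beta$ has at most $q+1$ roots. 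So the claim is false in general.

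At best, self-orthogonality of the small members of the family in \cite{Jin2010Application} gives $\mathcal{C}_{t_j+1}^{\perp_{\mathrm H}}\subseteq\mathcal{C}_{j+1}$ when both codimensions $n-r_{j+1}$ and $n-r_{t_j+1}$ are at most $(q-1-\lfloor r/m\rfloor)/2$. The theorem assumes only $r_{j+1}+r_{t_j+1}\geq n$.

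\textbf{A concrete failure.} Take $q=8$, which forces $k=8$ and $t_i=7-i$, and take $n=64$ ($m=8$, $r=0$). The theorem allows $r_2=r_7=32$. Your condition (b) would then demand a Hermitian self-dual $[64,32]_{64}$ GRS code, which does not exist by the argument above. The same obstruction arises for every pair with $r_{j+1}+r_{t_j+1}=n$ and $2\le r_{j+1}\le n-2$ when $n>q+1$. So no ``common parity-check'' argument can rescue a single nested family.

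\textbf{The missing observation is that nesting is never needed.} Lemma \ref{proposition2} gives the length, dimension and distance bound for arbitrary constituent codes once $A$ is NSC. So the codes can be chosen independently, which is what the paper does:
\begin{itemize}
\item Take $\mathcal{C}_1$ and $\mathcal{C}_k$ to be Hermitian dual-containing GRS codes from \cite{Jin2010Application}.
\item For each 2-cycle $\{j+1,t_j+1\}$, take an arbitrary $\mathcal{C}_{t_j+1}=\mathrm{GRS}_{r_{t_j+1}}(\mathbf{x},\mathbf{y})$ and write $\mathcal{C}_{t_j+1}^{\perp_{\mathrm H}}=\mathrm{GRS}_{n-r_{t_j+1}}(\mathbf{x}^q,\mathbf{z})$.
\item Then \emph{define} $\mathcal{C}_{j+1}=\mathrm{GRS}_{r_{j+1}}(\mathbf{x}^q,\mathbf{z})$.
\end{itemize}
Now $\mathcal{C}_{t_j+1}^{\perp_{\mathrm H}}\subseteq\mathcal{C}_{j+1}$ is ordinary nesting inside one family on $\mathbf{x}^q$. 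It holds exactly when $r_{j+1}\geq n-r_{t_j+1}$. The companion inclusion $\mathcal{C}_{j+1}^{\perp_{\mathrm H}}\subseteq\mathcal{C}_{t_j+1}$ follows by taking Hermitian duals. Because the 2-cycles are disjoint, these choices never conflict. Your Step 1 then gives dual-containment, and Lemma \ref{proposition6} finishes the proof.
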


\begin{IEEEproof}
Since $n=mq-r$ with $1 \leq m\leq q$ and $0\leq r \leq q-1$, it follows from \cite[Theorem 3.4]{Jin2010Application} that there exist Hermitian dual-containing GRS codes $\mathcal{C}_1$ and $\mathcal{C}_k$ with parameters $[n,r_1,n-r_1+1]_{q^2}$ and $[n,r_k,n-r_k+1]_{q^2}$, respectively, where $n-(q-1-\lfloor r/m\rfloor)/2\leq r_1,r_k\leq n-2$.

Since $1\leq r_{t_{j}+1}\leq n\leq q^2$, there exists a GRS code $\mathcal{C}_{t_{j}+1}=\mathrm{GRS}_{r_{t_{j}+1}}(\mathbf{x},\mathbf{y})$ with parameters $[n,r_{t_{j}+1},n-r_{t_{j}+1}+1]_{q^2}$ for some vectors $\mathbf{x}$ and $\mathbf{y}$ in $\mathbb{F}_{q^{2}}^{n}$.
Clearly, $\mathcal{C}_{t_{j}+1}^{\bot_{\mathrm{H}}}=\mathrm{GRS}_{n-r_{t_{j}+1}}(\mathbf{x}^{q},\mathbf{z})$ is also a GRS code whose parameters are $[n,n-r_{t_{j}+1},r_{t_{j}+1}+1]_{q^2}$ for some vector $\mathbf{z}$ in $\mathbb{F}_{q^{2}}^{n}$. We set $\mathcal{C}_{j+1}=\mathrm{GRS}_{r_{j+1}}(\mathbf{x}^{q},\mathbf{z})$.
Then, $\mathcal{C}_{t_{j}+1}^{\bot_{\mathrm{H}}}\subseteq \mathcal{C}_{j+1}$ for $1\leq j\leq k-2$.

According to Theorem \ref{theorem-inf1}, there exist $k\times k$ $\tau$-OD matrices over $\mathbb{F}_{q^{2}}$ for any even prime power $q=2^e>2$, where $\tau$ is given in Eq. \eqref{equ-perm}.
Define an MP code $\mathcal{C}(A)=[\mathcal{C}_{1},\mathcal{C}_{2},\ldots,\mathcal{C}_{k}]\cdot A$.
It follows from Lemma \ref{proposition4} and Eq. \eqref{equ-perm} that
\begin{align*}
\mathcal{C}(A)^{\bot_{\mathrm{H}}}\subseteq\mathcal{C}(A).
\end{align*}

Consequently, $\mathcal{C}(A)$ is a Hermitian dual-containing MP code, and it has parameters $[kn,\sum_{i=1}^k r_i,\geq \min\{(k-i+1)(n-r_i+1):1\leq i\leq k\}]_{q^{2}}$.
By Lemma \ref{proposition6}, $\mathcal{C}(A)$ yields a quantum code with parameters as presented in Eq. \eqref{lengthkn}.

This completes the proof.
\end{IEEEproof}

\vspace{6pt}

Next, we employ $2\times 2$, $3\times 3$, and $4\times 4$ $\tau$-OD matrices to derive several infinite families of quantum codes with flexible parameters, as presented in Theorems \ref{theorem-qcode2}, \ref{theorem-qcode3} and \ref{theorem-qcode4}.

\begin{theorem}\label{theorem-qcode2}
Let $q$ be a prime power. Suppose $2\leq n\leq q^2$, $1\leq r_1,r_2\leq n$ and $r_1+r_2\geq n$. Then, there exists an infinite family of quantum codes with parameters
\begin{align}\label{2n}
\bigg[\mspace{-4mu}\bigg[2n,2\sum_{i=1}^2 r_i-2n,\geq \min\{(3-i)(n-r_i+1):1\leq i\leq 2\}\bigg]\mspace{-4mu}\bigg]_{q}.
\end{align}
\end{theorem}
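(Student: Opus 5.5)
The plan is to build a Hermitian dual-containing MP code of length $2n$ over $\mathbb{F}_{q^2}$ whose defining matrix is a $2\times 2$ $\tau$-OD matrix with $\tau=(1,2)$, and then apply Lemma \ref{proposition6}. By Remark \ref{order2}(1), which combines Theorem \ref{theorem-inf2}(1) with \cite[Theorem 6]{CaoZhou2026}, such a matrix $A\in\mathbb{F}_{q^2}^{2\times 2}$ exists for every prime power $q$. It is NSC and satisfies $AA^{\dag}=DP_{(1,2)}$ for some invertible diagonal $D$.

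\textbf{Constituent codes.} Since $n\le q^2$, we can pick distinct $x_1,\dots,x_n\in\mathbb{F}_{q^2}$ and nonzero $y_i$, and set $\mathcal{C}_1=\mathrm{GRS}_{r_1}(\mathbf{x},\mathbf{y})$, an $[n,r_1,n-r_1+1]_{q^2}$ MDS code. As in the proof of Theorem \ref{theorem-qcode1}, its Hermitian dual is $\mathcal{C}_1^{\bot_{\mathrm H}}=\mathrm{GRS}_{n-r_1}(\mathbf{x}^q,\mathbf{z})$ for a suitable $\mathbf{z}$. We then take $\mathcal{C}_2=\mathrm{GRS}_{r_2}(\mathbf{x}^q,\mathbf{z})$. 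The hypothesis $r_1+r_2\ge n$ gives $n-r_1\le r_2$, so $\mathcal{C}_1^{\bot_{\mathrm H}}\subseteq\mathcal{C}_2$, and taking Hermitian duals yields $\mathcal{C}_2^{\bot_{\mathrm H}}\subseteq\mathcal{C}_1$. The edge case $r_1=n$ (so $n-r_1=0$) is trivial, since then $\mathcal{C}_1^{\bot_{\mathrm H}}=\{0\}$.

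\textbf{Dual containment.} By Lemma \ref{proposition4}, $\mathcal{C}(A)^{\bot_{\mathrm H}}=[\mathcal{C}_1^{\bot_{\mathrm H}},\mathcal{C}_2^{\bot_{\mathrm H}}]\cdot(A^{-1})^{\dag}$. From $AA^{\dag}=DP_\tau$ we get $(A^{-1})^{\dag}=(A^{\dag})^{-1}=P_\tau^{-1}D^{-1}A$. Hence
\begin{align*}
[\mathbf{u}_1,\mathbf{u}_2]\cdot(A^{-1})^{\dag}=\big([\mathbf{u}_1,\mathbf{u}_2]P_\tau^{-1}D^{-1}\big)A,
\end{align*}
where $[\mathbf{u}_1,\mathbf{u}_2]P_\tau^{-1}D^{-1}$ swaps the two blocks and rescales each by a nonzero constant. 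So every codeword of $\mathcal{C}(A)^{\bot_{\mathrm H}}$ equals $[\lambda\mathbf{u}_2,\mu\mathbf{u}_1]\cdot A$ with $\mathbf{u}_2\in\mathcal{C}_2^{\bot_{\mathrm H}}\subseteq\mathcal{C}_1$ and $\mathbf{u}_1\in\mathcal{C}_1^{\bot_{\mathrm H}}\subseteq\mathcal{C}_2$. Therefore $\mathcal{C}(A)^{\bot_{\mathrm H}}\subseteq\mathcal{C}(A)$. The only point needing care is the convention for $P_\tau$; for a transposition $P_\tau=P_\tau^{-1}$ just swaps coordinates, so there is no ambiguity.

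\textbf{Parameters.} Since $A$ is NSC, Lemma \ref{proposition2} gives that $\mathcal{C}(A)$ is a $[2n,r_1+r_2,\ge\min\{(3-i)(n-r_i+1):i=1,2\}]_{q^2}$ code. Lemma \ref{proposition6} then produces the quantum code in Eq. \eqref{2n}. The main obstacle is only the existence of the $\tau$-OD matrix for all $q$, which Remark \ref{order2}(1) supplies. Everything else is the routine GRS duality bookkeeping above.
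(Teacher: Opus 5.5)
Your proposal is correct and follows essentially the same route as the paper: a GRS pair with $\mathcal{C}_2^{\bot_{\mathrm H}}\subseteq\mathcal{C}_1$, the $2\times 2$ $\tau$-OD matrix with $\tau=(1,2)$ from Remark \ref{order2}, then Lemmas \ref{proposition4}, \ref{proposition2} and \ref{proposition6}. The only differences are that you build $\mathcal{C}_2$ from $\mathcal{C}_1^{\bot_{\mathrm H}}$ rather than $\mathcal{C}_1$ from $\mathcal{C}_2^{\bot_{\mathrm H}}$ (an equivalent choice), and that your explicit identity $(A^{-1})^{\dag}=P_\tau^{-1}D^{-1}A$ spells out the dual-containment step, which the paper only asserts as ``similar to the proof of Theorem \ref{theorem-qcode1}''.
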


\begin{IEEEproof}
Since $2\leq n\leq q^2$ and $1\leq r_2\leq n$, there exists a GRS code $\mathcal{C}_2=\mathrm{GRS}_{r_{2}}(\mathbf{u},\mathbf{v})$ with parameters $[n,r_2,n-r_2+1]_{q^2}$ for some vectors $\mathbf{u}$ and $\mathbf{v}$ in $\mathbb{F}_{q^{2}}^{n}$. Clearly,
$\mathcal{C}_{2}^{\bot_{\mathrm{H}}}=\mathrm{GRS}_{n-r_{2}}(\mathbf{u}^{q},\mathbf{w})$ is also a GRS code whose parameters are $[n,n-r_2,r_2+1]_{q^2}$ for some vector
$\mathbf{w}$ in $\mathbb{F}_{q^{2}}^{n}$. We set $\mathcal{C}_1=\mathrm{GRS}_{r_{1}}(\mathbf{u}^{q},\mathbf{w})$.
Then, $\mathcal{C}_{2}^{\bot_{\mathrm{H}}}\subseteq \mathcal{C}_{1}$.

According to Remark \ref{order2}, there exist $2\times 2$ $\tau$-OD matrices over $\mathbb{F}_{q^{2}}$ with $\tau=(1,2)$ for any prime power $q$.
Define an MP code $\mathcal{C}(A)=[\mathcal{C}_{1},\mathcal{C}_{2}]\cdot A$. Similar to the proof of Theorem \ref{theorem-qcode1}, we derive
\begin{align*}
\mathcal{C}(A)^{\bot_{\mathrm{H}}}\subseteq\mathcal{C}(A).
\end{align*}

Consequently, $\mathcal{C}(A)$ is a Hermitian dual-containing MP code with parameters $[2n,\sum_{i=1}^2 r_i,\geq \min\{(3-i)(n-r_i+1):1\leq i\leq 2\}]_{q^{2}}$.
By Lemma \ref{proposition6}, $\mathcal{C}(A)$ yields a quantum code with parameters as presented in Eq. \eqref{2n}.

This completes the proof.
\end{IEEEproof}

\begin{theorem}\label{theorem-qcode3}
Let $q>2$ be a prime power. Put $n=mq-r$ with $1 \leq m\leq q$ and $0\leq r \leq q-1$. Suppose $n-(q-1-\lfloor r/m\rfloor)/2\leq r_1\leq n-2$,
$1\leq r_2,r_3\leq n$ and $r_2+r_3\geq n$. Then, there exists an infinite family of quantum codes with parameters
\begin{align}\label{3n}
\bigg[\mspace{-4mu}\bigg[3n,2\sum_{i=1}^3 r_i-3n,\geq \min\{(4-i)(n-r_i+1):1\leq i\leq 3\}\bigg]\mspace{-4mu}\bigg]_{q},
\end{align}
\end{theorem}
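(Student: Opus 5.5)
We follow the template of the proofs of Theorems \ref{theorem-qcode1} and \ref{theorem-qcode2}, now using a $3\times 3$ $\tau$-OD matrix $A$ over $\mathbb{F}_{q^2}$ with $\tau=(2,3)$. By Remark \ref{order2} (2), such a matrix exists for every prime power $q>2$; this is why we assume $q>2$. Then $AA^{\dag}=DP_\tau$ with $D$ invertible diagonal. The plan is to choose constituent codes $\mathcal{C}_1,\mathcal{C}_2,\mathcal{C}_3$ for which $\mathcal{C}(A)=[\mathcal{C}_1,\mathcal{C}_2,\mathcal{C}_3]\cdot A$ is Hermitian dual-containing. We then apply Lemma \ref{proposition2} for the distance and Lemma \ref{proposition6} for the quantum code.

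\emph{Choice of constituents.} Write $n=mq-r$ with $1\le m\le q$ and $0\le r\le q-1$. By \cite[Theorem 3.4]{Jin2010Application}, for $n-(q-1-\lfloor r/m\rfloor)/2\le r_1\le n-2$ there is a Hermitian dual-containing GRS code $\mathcal{C}_1$ with parameters $[n,r_1,n-r_1+1]_{q^2}$. For the transposed pair, take $\mathcal{C}_3=\mathrm{GRS}_{r_3}(\mathbf{u},\mathbf{v})$ of length $n\le q^2$. Its Hermitian dual is $\mathcal{C}_3^{\perp_{\mathrm H}}=\mathrm{GRS}_{n-r_3}(\mathbf{u}^q,\mathbf{w})$ for a suitable $\mathbf{w}$. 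Set $\mathcal{C}_2=\mathrm{GRS}_{r_2}(\mathbf{u}^q,\mathbf{w})$. Since $r_2\ge n-r_3$, we get $\mathcal{C}_3^{\perp_{\mathrm H}}\subseteq\mathcal{C}_2$. Taking Hermitian duals gives $\mathcal{C}_2^{\perp_{\mathrm H}}\subseteq\mathcal{C}_3$ as well.

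\emph{Dual-containment.} By Lemma \ref{proposition4},
\[
\mathcal{C}(A)^{\perp_{\mathrm H}}=[\mathcal{C}_1^{\perp_{\mathrm H}},\mathcal{C}_2^{\perp_{\mathrm H}},\mathcal{C}_3^{\perp_{\mathrm H}}]\cdot(A^{-1})^{\dag}.
\]
From $AA^{\dag}=DP_\tau$ we get $(A^{-1})^{\dag}=(AA^{\dag})^{-1}A=P_\tau^{-1}D^{-1}A$. Hence a codeword of $\mathcal{C}(A)^{\perp_{\mathrm H}}$ has the form $[\mathbf{c}_1',\mathbf{c}_2',\mathbf{c}_3']P_\tau^{-1}D^{-1}A$ with $\mathbf{c}_i'\in\mathcal{C}_i^{\perp_{\mathrm H}}$. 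Multiplying the row vector $[\mathbf{c}_1',\mathbf{c}_2',\mathbf{c}_3']$ by $P_\tau^{-1}$ permutes the blocks according to $\tau=(2,3)$, giving $[\mathbf{c}_1',\mathbf{c}_3',\mathbf{c}_2']$. Multiplying by $D^{-1}$ scales each block by a nonzero constant. The result lies in $[\mathcal{C}_1^{\perp_{\mathrm H}},\mathcal{C}_3^{\perp_{\mathrm H}},\mathcal{C}_2^{\perp_{\mathrm H}}]\cdot A$. By the inclusions above, this is contained in $[\mathcal{C}_1,\mathcal{C}_2,\mathcal{C}_3]\cdot A=\mathcal{C}(A)$. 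The step needing the most care is this bookkeeping: one must check that $P_\tau^{-1}$ sends block $i$ to block $\tau(i)$, consistent with the convention for $P_\tau$. Since $\tau$ is an involution, either convention gives the same swap, so no real obstacle arises.

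\emph{Parameters.} Because $A$ is NSC, Lemma \ref{proposition2} shows that $\mathcal{C}(A)$ is a $[3n,\sum_{i=1}^3 r_i,\ge\min\{(4-i)(n-r_i+1):1\le i\le 3\}]_{q^2}$ code. It is Hermitian dual-containing, so Lemma \ref{proposition6} yields a quantum code with parameters \eqref{3n}. The distance bound of the quantum code comes from $d(\mathcal{Q})\ge d(\mathcal{C}(A))$.
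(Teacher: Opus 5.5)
Your proposal is correct and takes essentially the same route as the paper. You use a Hermitian dual-containing GRS code $\mathcal{C}_1$ from \cite[Theorem 3.4]{Jin2010Application}, the pair $\mathcal{C}_3=\mathrm{GRS}_{r_3}(\mathbf{u},\mathbf{v})$ and $\mathcal{C}_2=\mathrm{GRS}_{r_2}(\mathbf{u}^q,\mathbf{w})$ with $\mathcal{C}_3^{\perp_{\mathrm{H}}}\subseteq\mathcal{C}_2$, a $3\times 3$ $\tau$-OD matrix with $\tau=(2,3)$, and then Lemmas \ref{proposition2} and \ref{proposition6}. Your computation $(A^{-1})^{\dag}=P_\tau^{-1}D^{-1}A$ and the resulting block swap spell out the dual-containment step that the paper only describes as ``similar to the proof of Theorem \ref{theorem-qcode1}''.
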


\begin{IEEEproof}
Since $n=mq-r$ with $1 \leq m\leq q$ and $0\leq r \leq q-1$, it follows from \cite[Theorem 3.4]{Jin2010Application} that there exists a Hermitian dual-containing GRS code $\mathcal{C}_1=\mathrm{GRS}_{r_{1}}(\mathbf{u},\mathbf{v})$ with parameters $[n,r_1,n-r_1+1]_{q^2}$ for some vectors $\mathbf{u}$ and $\mathbf{v}$ in $\mathbb{F}_{q^{2}}^{n}$, where $n-(q-1-\lfloor r/m\rfloor)/2\leq r_1\leq n-2$.
Since $n\leq q^2$ and $1\leq r_3\leq n$, there exists a GRS code $\mathcal{C}_3=\mathrm{GRS}_{r_{3}}(\mathbf{x},\mathbf{y})$ with parameters $[n,r_3,n-r_3+1]_{q^2}$
for some vectors $\mathbf{x}$ and $\mathbf{y}$.
Clearly, $\mathcal{C}_{3}^{\bot_{\mathrm{H}}}=\mathrm{GRS}_{n-r_{3}}(\mathbf{x}^{q},\mathbf{z})$ is also a GRS code whose parameters are $[n,n-r_3,r_3+1]_{q^2}$ for some vector $\mathbf{z}$. We set $\mathcal{C}_2=\mathrm{GRS}_{r_{2}}(\mathbf{x}^{q},\mathbf{z})$.
Then, $\mathcal{C}_{3}^{\bot_{\mathrm{H}}}\subseteq \mathcal{C}_{2}$.

According to Remark \ref{order2}, there exist $3\times 3$ $\tau$-OD matrices over $\mathbb{F}_{q^{2}}$ with $\tau=(2,3)$ for any prime power $q>2$.
Define an MP code $\mathcal{C}(A)=[\mathcal{C}_{1},\mathcal{C}_{2},\mathcal{C}_{3}]\cdot A$. Similar to the proof of Theorem \ref{theorem-qcode1}, we derive
\begin{align*}
\mathcal{C}(A)^{\bot_{\mathrm{H}}}\subseteq\mathcal{C}(A).
\end{align*}

Consequently, $\mathcal{C}(A)$ is a Hermitian dual-containing MP code, and it has parameters $[3n,\sum_{i=1}^3 r_i,\geq \min\{(4-i)(n-r_i+1):1\leq i\leq 3\}]_{q^{2}}$.
By Lemma \ref{proposition6}, $\mathcal{C}(A)$ yields a quantum code with parameters as presented in Eq. \eqref{3n}.

This completes the proof.
\end{IEEEproof}

\begin{theorem}\label{theorem-qcode4}
Let $q>4$ be an even prime power. Suppose $1\leq r_1,r_2,r_3,r_4\leq n$, $r_1+r_2\geq n$ and $r_3+r_4\geq n$. Then, there exists an infinite family of quantum codes with parameters
\begin{align}\label{4n}
\bigg[\mspace{-4mu}\bigg[4n,2\sum_{i=1}^4 r_i-4n,\geq \min\{(5-i)(n-r_i+1):1\leq i\leq 4\}\bigg]\mspace{-4mu}\bigg]_{q},
\end{align}
\end{theorem}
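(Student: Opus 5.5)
The plan follows the template of Theorems \ref{theorem-qcode2} and \ref{theorem-qcode3}. The defining matrix will be the $4\times 4$ $\tau$-OD matrix with $\tau=(1,2)(3,4)$ supplied by Theorem \ref{theorem-inf2}~(3). Writing $q=2^e$, the hypothesis $q>4$ means $e>2$, and this matrix lives over $\mathbb{F}_{q^2}=\mathbb{F}_{2^{2e}}$ with $e>2$ as required. I take $n\leq q^2$ implicitly, so that GRS codes of length $n$ over $\mathbb{F}_{q^2}$ exist.

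\textbf{Constituent codes.} I choose GRS codes in pairs, matched to the transpositions $(1,2)$ and $(3,4)$.
\begin{itemize}
\item Take $\mathcal{C}_2=\mathrm{GRS}_{r_2}(\mathbf{u},\mathbf{v})$. Its Hermitian dual is $\mathcal{C}_2^{\bot_{\mathrm{H}}}=\mathrm{GRS}_{n-r_2}(\mathbf{u}^q,\mathbf{w})$ for a suitable $\mathbf{w}$. Set $\mathcal{C}_1=\mathrm{GRS}_{r_1}(\mathbf{u}^q,\mathbf{w})$. Since $r_1\geq n-r_2$, we get $\mathcal{C}_2^{\bot_{\mathrm{H}}}\subseteq\mathcal{C}_1$.
\item Symmetrically, set $\mathcal{C}_3=\mathrm{GRS}_{r_3}(\mathbf{x}^q,\mathbf{z})$ and $\mathcal{C}_4=\mathrm{GRS}_{r_4}(\mathbf{x},\mathbf{y})$, so that $\mathcal{C}_4^{\bot_{\mathrm{H}}}\subseteq\mathcal{C}_3$ because $r_3+r_4\geq n$.
\item Taking Hermitian duals of these inclusions gives $\mathcal{C}_1^{\bot_{\mathrm{H}}}\subseteq\mathcal{C}_2$ and $\mathcal{C}_3^{\bot_{\mathrm{H}}}\subseteq\mathcal{C}_4$.
\end{itemize}
So $\mathcal{C}_{\tau(i)}^{\bot_{\mathrm{H}}}\subseteq\mathcal{C}_i$ holds for every $i$.

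\textbf{Dual containment.} This is the step needing care, since the earlier proofs only say ``similar to''. Let $A$ be the $\tau$-OD matrix, so $AA^{\dag}=DP_\tau$. Then
\[
(A^{-1})^{\dag}=(AA^{\dag})^{-1}A=P_\tau^{-1}D^{-1}A,
\]
so row $i$ of $(A^{-1})^{\dag}$ is a nonzero scalar multiple of a single row of $A$. With the row convention for $P_\tau$, that row is row $\tau(i)$ up to the inverse, and since $\tau$ is an involution this does not matter. Therefore, by Lemma \ref{proposition4},
\[
\mathcal{C}(A)^{\bot_{\mathrm{H}}}=\sum_i \mathcal{C}_i^{\bot_{\mathrm{H}}}\,\lambda_i\,\mathbf{a}_{\tau(i)}=\sum_j \mathcal{C}_{\tau(j)}^{\bot_{\mathrm{H}}}\,\mathbf{a}_j,
\]
where the $\lambda_i$ are nonzero scalars and $\mathbf{a}_j$ is row $j$ of $A$. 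By the inclusions above, $\mathcal{C}_{\tau(j)}^{\bot_{\mathrm{H}}}\subseteq\mathcal{C}_j$ for each $j$, hence $\mathcal{C}(A)^{\bot_{\mathrm{H}}}\subseteq\mathcal{C}(A)$. The only point to verify is the index convention, which is harmless here because $\tau$ is an involution.

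\textbf{Conclusion.} The matrix $A$ is NSC, so Lemma \ref{proposition2} shows that $\mathcal{C}(A)$ is a $[4n,\sum_i r_i,\geq\min\{(5-i)(n-r_i+1)\}]_{q^2}$ code. Here $\dim\mathcal{C}(A)=\sum_i r_i$ because $A$ is invertible, and each $\mathcal{C}_i$ is MDS with minimum distance $n-r_i+1$. Applying Lemma \ref{proposition6} to this Hermitian dual-containing code then yields the quantum code with parameters \eqref{4n}.
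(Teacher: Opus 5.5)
Your proposal is correct and follows the paper's proof. You pair GRS codes so that $\mathcal{C}_2^{\bot_{\mathrm{H}}}\subseteq\mathcal{C}_1$ and $\mathcal{C}_4^{\bot_{\mathrm{H}}}\subseteq\mathcal{C}_3$, take the $\tau$-OD matrix with $\tau=(1,2)(3,4)$ from Theorem \ref{theorem-inf2}~(3) (valid since $q>4$ means $e>2$), and conclude via Lemmas \ref{proposition4}, \ref{proposition2} and \ref{proposition6}, exactly as the paper does.

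Your explicit computation $(A^{-1})^{\dag}=P_\tau^{-1}D^{-1}A$ fills in the dual-containment step that the paper only calls ``similar'' to earlier proofs. Your caveat $n\leq q^2$ is indeed an implicit hypothesis of the statement. The index-convention worry is never an issue for $\tau$-OD matrices: $AA^{\dag}=DP_\tau$ is Hermitian, so $\tau$ is automatically an involution.
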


\begin{IEEEproof}
Similar to the proof of Theorem \ref{theorem-qcode1}, under the conditions $1\leq r_1,r_2,r_3,r_4\leq n$, $r_1+r_2\geq n$ and $r_3+r_4\geq n$,
we can find GRS codes $\mathcal{C}_{i}$ with parameters $[n,r_i,n-r_i+1]_{q^2}$ for $1\leq i \leq 4$ such that
$\mathcal{C}_{2}^{\bot_{\mathrm{H}}}\subseteq \mathcal{C}_{1}$ and $\mathcal{C}_{4}^{\bot_{\mathrm{H}}}\subseteq \mathcal{C}_{3}$.

According to Theorem \ref{theorem-inf2} (3), there exist $4\times 4$ $\tau$-OD matrices over $\mathbb{F}_{q^{2}}$ with $\tau=(1,2)(3,4)$ for any even prime power $q>4$. Define an MP code $\mathcal{C}(A)=[\mathcal{C}_{1},\mathcal{C}_{2},\mathcal{C}_{3},\mathcal{C}_{4}]\cdot A$. Similar to the proof of Theorem \ref{theorem-qcode1}, we derive
\begin{align*}
\mathcal{C}(A)^{\bot_{\mathrm{H}}}\subseteq\mathcal{C}(A).
\end{align*}

Consequently, $\mathcal{C}(A)$ is a Hermitian dual-containing MP code, and it has parameters $[4n,\sum_{i=1}^4 r_i,\geq \min\{(5-i)(n-r_i+1):1\leq i\leq 4\}]_{q^{2}}$.
By Lemma \ref{proposition6}, $\mathcal{C}(A)$ yields a quantum code with parameters as presented in Eq. \eqref{4n}.

This completes the proof.
\end{IEEEproof}

\subsection{Examples of Record-Breaking Quantum Codes}

In this subsection, we first provide several representative examples of $q$-ary record-breaking quantum codes for $q=4,5,7,8$ derived from the constructions
in Subsection \ref{twothreefour}. Subsequently, we present a total of $222$ record-breaking quantum codes in Tables \ref{table1} and \ref{table2}.

\begin{example}\label{example1-x}
Consider $q=4$, $n=15$, $r_1=11$ and $r_2=7$ in Theorem \ref{theorem-qcode2}. Then,
\begin{align*}
\min\{(3-i)(n-r_i+1):1\leq i\leq 2\}=9.
\end{align*}
By Theorem \ref{theorem-qcode2}, we obtain a $[[30,6,\geq9]]_{4}$ quantum code, which matches the best-known parameters in Grassl's database \cite{Grassl2025Bounds}. Furthermore, the generator matrix $G$ of the MP code in Theorem \ref{theorem-qcode2} can be taken as
\begin{align*}
G=\begin{pmatrix}
G_1 & G_1\\
G_2& g^3G_2
\end{pmatrix},
\end{align*}
where $g$ is a primitive element of $\mathbb{F}_{4^2}$, and $G_1$ and $G_2$ are respectively given by
\begin{align*}
G_1=\begin{pmatrix}
1 & 1 & 1&\cdots& 1\\
1 & g & g^2& \cdots& g^{14}\\
\vdots & \vdots & \vdots& \ddots& \vdots\\
1 & g^{10} & g^{2\times 10}& \cdots& g^{14\times 10}
\end{pmatrix}\ \mathrm{and} \  G_2=\begin{pmatrix}
1 & g^{4} & g^{4\times 2}& \cdots& g^{4\times 14}\\
1 & g^{8} & g^{8\times 2}& \cdots& g^{8\times 14}\\
\vdots & \vdots & \vdots& \ddots& \vdots\\
1 & g^{28} & g^{28\times 2}& \cdots& g^{28\times 14}
\end{pmatrix}.
\end{align*}

After verification via the Magma algebra software \cite{Bosma1997The}, this MP code is confirmed to be a Hermitian dual-containing code with parameters
$[30,18,10]_{4^2}$. Hence, it produces a record-breaking quantum code with parameters
\begin{align*}
[[30,6,\geq\textbf{10}]]_{4},
\end{align*}
which improves the minimum distance of the best-known $[[30,6,9]]_{4}$ quantum code maintained in Grassl's database \cite{Grassl2025Bounds}.
Applying the lengthening construction, it produces a record-breaking $[[31,6,\geq\textbf{10}]]_{4}$ quantum code, which improves the best-known $[[31,6,9]]_{4}$ quantum code maintained in \cite{Grassl2025Bounds}.
Utilizing the subcode construction, it also yields a record-breaking $[[30,5,\geq\textbf{10}]]_{4}$ quantum code, which improves the best-known $[[30,5,9]]_{4}$ quantum code maintained in \cite{Grassl2025Bounds}.
\end{example}

\begin{example}
Consider $q=5$, $n=25$, $r_1=18$ and $r_2=11$ in Theorem \ref{theorem-qcode2}. Then,
\begin{align*}
\min\{(3-i)(n-r_i+1):1\leq i\leq 2\}=15.
\end{align*}
By Theorem \ref{theorem-qcode2}, we obtain a $[[50,8,\geq\textbf{15}]]_{5}$ quantum code, which improves the minimum distance of the best-known $[[50,8,14]]_{5}$
quantum code maintained in Grassl's database \cite{Grassl2025Bounds}.
Applying the lengthening construction, it produces a record-breaking $[[51,8,\geq\textbf{15}]]_{5}$ quantum code, which improves the best-known $[[51,8,14]]_{5}$
quantum code maintained in \cite{Grassl2025Bounds}.
Utilizing the subcode construction, it also yields a record-breaking $[[50,7,\geq\textbf{15}]]_{5}$ quantum code, which improves the best-known $[[50,7,14]]_{5}$
quantum code maintained in \cite{Grassl2025Bounds}.
\end{example}

\begin{example}
Consider $q=7$, $n=39$, $r_1=34$ and $r_2=28$ in Theorem \ref{theorem-qcode2}. Then,
\begin{align*}
\min\{(3-i)(n-r_i+1):1\leq i\leq 2\}=12.
\end{align*}
By Theorem \ref{theorem-qcode2}, we obtain a $[[78,46,\geq\textbf{12}]]_{7}$ quantum code, which improves the minimum distance of the best-known $[[78,46,10]]_{7}$ quantum code maintained in Grassl's database \cite{Grassl2025Bounds}.
Applying the lengthening construction, it produces a record-breaking $[[79,46,\geq\textbf{12}]]_{7}$ quantum code, which improves the best-known $[[79,46,10]]_{7}$ quantum code maintained in \cite{Grassl2025Bounds}.
Utilizing the subcode construction, it also yields a record-breaking $[[78,45,\geq\textbf{12}]]_{7}$ quantum code, which improves the best-known $[[78,45,10]]_{7}$ quantum code maintained in \cite{Grassl2025Bounds}.
\end{example}

\begin{example}\label{example2-x}
Consider $q=8$, $m=5$ and $r=7$ in Theorem \ref{theorem-qcode3}. Then, $n=33$. Setting $r_1=31$, $r_{2}=30$ and $r_{3}=26$, we have
\begin{align*}
\min\{(4-i)(n-r_i+1):1\leq i\leq 3\}=8.
\end{align*}
By Theorem \ref{theorem-qcode3}, we obtain a $[[99,75,\geq\textbf{8}]]_{8}$ quantum code, which improves the minimum distance of the best-known $[[99,75,7]]_{8}$ quantum code maintained in Grassl's database \cite{Grassl2025Bounds}.
Applying the lengthening construction, it produces a record-breaking $[[100,75,\geq\textbf{8}]]_{8}$ quantum code, which improves the best-known $[[100,75,7]]_{8}$ quantum code maintained in \cite{Grassl2025Bounds}.
Utilizing the subcode construction, it also yields a record-breaking $[[99,74,\geq\textbf{8}]]_{8}$ quantum code, which improves the best-known $[[99,74,7]]_{8}$ quantum code maintained in \cite{Grassl2025Bounds}.
\end{example}

\begin{remark}
In addition to the quantum codes shown in Examples \ref{example1-x}-\ref{example2-x}, Tables \ref{table1} and \ref{table2} exhibit a total of $240$ quantum codes,
which are compared with the best-known records maintained in Grassl's database \cite{Grassl2025Bounds}.
We highlight that $222$ of these quantum codes break the best-known records maintained in \cite{Grassl2025Bounds}. Specifically,
\begin{itemize}
\item [(1)] In Table \ref{table1}, we provide $80$ quantum codes derived from Theorems \ref{theorem-qcode2} and \ref{theorem-qcode3},
all of which improve the best-known lower bounds on the minimum distances of quantum codes maintained in Grassl's database \cite{Grassl2025Bounds}.

\vspace{4pt}

\item [(2)] In Table \ref{table2}, we present $160$ quantum codes by applying the lengthening and subcode constructions to the $80$ quantum
codes in Table \ref{table1}. As shown in Table \ref{table2}, $142$ of these quantum codes (whose minimum distance lower bounds are marked in bold) outperform the best-known records maintained in Grassl's database \cite{Grassl2025Bounds}.
\end{itemize}
\end{remark}

\begin{table}[!htbp]
\renewcommand\arraystretch{1.3}
\centering	
\footnotesize
\setlength{\abovecaptionskip}{0.cm}
\setlength{\belowcaptionskip}{0.1cm}
\caption{Comparison of our $q$-ary quantum codes with those in Grassl's database \cite{Grassl2025Bounds} for $q=4,5,7,8$}\label{table1}
\vspace{3pt}
\begin{tabular}{cccc|cccc}
\hline
No.&Our quantum code&Reference&Record in \cite{Grassl2025Bounds}&No.&Our quantum code&Reference&Record in \cite{Grassl2025Bounds}\\
\hline
$1$&$[[30,6,\geq\textbf{10}]]_{4}$&Theorem \ref{theorem-qcode2}&$[[30,6,9]]_{4}$&$41$&$[[81,57,\geq\textbf{8}]]_{8}$&Theorem \ref{theorem-qcode3}&$[[81,57,7]]_{8}$\\

$2$&$[[44,18,\geq\textbf{10}]]_{5}$&Theorem \ref{theorem-qcode2}&$[[44,18,9]]_{5}$&$42$&$[[82,50,\geq\textbf{12}]]_{8}$&Theorem \ref{theorem-qcode2}&$[[82,50,10]]_{8}$\\

$3$&$[[44,8,\geq\textbf{13}]]_{5}$&Theorem \ref{theorem-qcode2}&$[[44,8,12]]_{5}$&$43$&$[[82,46,\geq\textbf{13}]]_{8}$&Theorem \ref{theorem-qcode2}&$[[82,46,11]]_{8}$\\

$4$&$[[44,6,\geq\textbf{14}]]_{5}$&Theorem \ref{theorem-qcode2}&$[[44,6,13]]_{5}$&$44$&$[[82,44,\geq\textbf{14}]]_{8}$&Theorem \ref{theorem-qcode2}&$[[82,44,12]]_{8}$\\

$5$&$[[50,20,\geq\textbf{11}]]_{5}$&Theorem \ref{theorem-qcode2}&$[[50,20,10]]_{5}$&$45$&$[[82,40,\geq\textbf{15}]]_{8}$&Theorem \ref{theorem-qcode2}&$[[82,40,13]]_{8}$\\

$6$&$[[50,14,\geq\textbf{13}]]_{5}$&Theorem \ref{theorem-qcode2}&$[[50,14,12]]_{5}$&$46$&$[[82,38,\geq\textbf{16}]]_{8}$&Theorem \ref{theorem-qcode2}&$[[82,38,14]]_{8}$\\

$7$&$[[50,8,\geq\textbf{15}]]_{5}$&Theorem \ref{theorem-qcode2}&$[[50,8,14]]_{5}$&$47$&$[[82,34,\geq\textbf{17}]]_{8}$&Theorem \ref{theorem-qcode2}&$[[82,34,16]]_{8}$\\

$8$&$[[50,2,\geq\textbf{17}]]_{5}$&Theorem \ref{theorem-qcode2}&$[[50,2,16]]_{5}$&$48$&$[[82,32,\geq\textbf{18}]]_{8}$&Theorem \ref{theorem-qcode2}&$[[82,32,16]]_{8}$\\

$9$&$[[66,34,\geq\textbf{12}]]_{7}$&Theorem \ref{theorem-qcode2}&$[[66,34,11]]_{7}$&$49$&$[[82,28,\geq\textbf{19}]]_{8}$&Theorem \ref{theorem-qcode2}&$[[82,28,18]]_{8}$\\

$10$&$[[66,28,\geq\textbf{14}]]_{7}$&Theorem \ref{theorem-qcode2}&$[[66,28,13]]_{7}$&$50$&$[[82,26,\geq\textbf{20}]]_{8}$&Theorem \ref{theorem-qcode2}&$[[82,26,19]]_{8}$\\

$11$&$[[66,22,\geq\textbf{16}]]_{7}$&Theorem \ref{theorem-qcode2}&$[[66,22,15]]_{7}$&$51$&$[[82,22,\geq\textbf{21}]]_{8}$&Theorem \ref{theorem-qcode2}&$[[82,22,20]]_{8}$\\

$12$&$[[66,16,\geq\textbf{18}]]_{7}$&Theorem \ref{theorem-qcode2}&$[[66,16,17]]_{7}$&$52$&$[[82,20,\geq\textbf{22}]]_{8}$&Theorem \ref{theorem-qcode2}&$[[82,20,21]]_{8}$\\

$13$&$[[68,36,\geq\textbf{12}]]_{7}$&Theorem \ref{theorem-qcode2}&$[[68,36,11]]_{7}$&$53$&$[[87,63,\geq\textbf{8}]]_{8}$&Theorem \ref{theorem-qcode3}&$[[87,63,7]]_{8}$\\

$14$&$[[68,30,\geq\textbf{14}]]_{7}$&Theorem \ref{theorem-qcode2}&$[[68,30,13]]_{7}$&$54$&$[[88,38,\geq\textbf{18}]]_{8}$&Theorem \ref{theorem-qcode2}&$[[88,38,16]]_{8}$\\

$15$&$[[68,24,\geq\textbf{16}]]_{7}$&Theorem \ref{theorem-qcode2}&$[[68,24,15]]_{7}$&$55$&$[[88,34,\geq\textbf{19}]]_{8}$&Theorem \ref{theorem-qcode2}&$[[88,34,18]]_{8}$\\

$16$&$[[70,26,\geq\textbf{16}]]_{7}$&Theorem \ref{theorem-qcode2}&$[[70,26,15]]_{7}$&$56$&$[[88,32,\geq\textbf{20}]]_{8}$&Theorem \ref{theorem-qcode2}&$[[88,32,18]]_{8}$\\

$17$&$[[70,20,\geq\textbf{18}]]_{7}$&Theorem \ref{theorem-qcode2}&$[[70,20,17]]_{7}$&$57$&$[[88,28,\geq\textbf{21}]]_{8}$&Theorem \ref{theorem-qcode2}&$[[88,28,20]]_{8}$\\

$18$&$[[72,22,\geq\textbf{18}]]_{7}$&Theorem \ref{theorem-qcode2}&$[[72,22,17]]_{7}$&$58$&$[[88,26,\geq\textbf{22}]]_{8}$&Theorem \ref{theorem-qcode2}&$[[88,26,21]]_{8}$\\

$19$&$[[72,16,\geq\textbf{20}]]_{7}$&Theorem \ref{theorem-qcode2}&$[[72,16,19]]_{7}$&$59$&$[[88,22,\geq\textbf{23}]]_{8}$&Theorem \ref{theorem-qcode2}&$[[88,22,22]]_{8}$\\

$20$&$[[74,42,\geq\textbf{12}]]_{7}$&Theorem \ref{theorem-qcode2}&$[[74,42,11]]_{7}$&$60$&$[[88,20,\geq\textbf{24}]]_{8}$&Theorem \ref{theorem-qcode2}&$[[88,20,23]]_{8}$\\

$21$&$[[74,36,\geq\textbf{14}]]_{7}$&Theorem \ref{theorem-qcode2}&$[[74,36,13]]_{7}$&$61$&$[[93,65,\geq\textbf{9}]]_{8}$&Theorem \ref{theorem-qcode3}&$[[93,65,8]]_{8}$\\

$22$&$[[74,30,\geq\textbf{16}]]_{7}$&Theorem \ref{theorem-qcode2}&$[[74,30,15]]_{7}$&$62$&$[[94,44,\geq\textbf{18}]]_{8}$&Theorem \ref{theorem-qcode2}&$[[94,44,16]]_{8}$\\

$23$&$[[74,24,\geq\textbf{18}]]_{7}$&Theorem \ref{theorem-qcode2}&$[[74,24,17]]_{7}$&$63$&$[[94,38,\geq\textbf{20}]]_{8}$&Theorem \ref{theorem-qcode2}&$[[94,38,18]]_{8}$\\

$24$&$[[74,18,\geq\textbf{20}]]_{7}$&Theorem \ref{theorem-qcode2}&$[[74,18,19]]_{7}$&$64$&$[[94,32,\geq\textbf{22}]]_{8}$&Theorem \ref{theorem-qcode2}&$[[94,32,20]]_{8}$\\

$25$&$[[76,44,\geq\textbf{12}]]_{7}$&Theorem \ref{theorem-qcode2}&$[[76,44,10]]_{7}$&$65$&$[[94,26,\geq\textbf{24}]]_{8}$&Theorem \ref{theorem-qcode2}&$[[94,26,23]]_{8}$\\

$26$&$[[76,38,\geq\textbf{14}]]_{7}$&Theorem \ref{theorem-qcode2}&$[[76,38,13]]_{7}$&$66$&$[[96,52,\geq\textbf{16}]]_{8}$&Theorem \ref{theorem-qcode2}&$[[96,52,14]]_{8}$\\

$27$&$[[76,32,\geq\textbf{16}]]_{7}$&Theorem \ref{theorem-qcode2}&$[[76,32,15]]_{7}$&$67$&$[[96,46,\geq\textbf{18}]]_{8}$&Theorem \ref{theorem-qcode2}&$[[96,46,16]]_{8}$\\

$28$&$[[76,26,\geq\textbf{18}]]_{7}$&Theorem \ref{theorem-qcode2}&$[[76,26,17]]_{7}$&$68$&$[[96,40,\geq\textbf{20}]]_{8}$&Theorem \ref{theorem-qcode2}&$[[96,40,18]]_{8}$\\

$29$&$[[76,20,\geq\textbf{20}]]_{7}$&Theorem \ref{theorem-qcode2}&$[[76,20,19]]_{7}$&$69$&$[[96,34,\geq\textbf{22}]]_{8}$&Theorem \ref{theorem-qcode2}&$[[96,34,20]]_{8}$\\

$30$&$[[78,46,\geq\textbf{12}]]_{7}$&Theorem \ref{theorem-qcode2}&$[[78,46,10]]_{7}$&$70$&$[[96,28,\geq\textbf{24}]]_{8}$&Theorem \ref{theorem-qcode2}&$[[96,28,23]]_{8}$\\

$31$&$[[78,40,\geq\textbf{14}]]_{7}$&Theorem \ref{theorem-qcode2}&$[[78,40,12]]_{7}$&$71$&$[[96,22,\geq\textbf{26}]]_{8}$&Theorem \ref{theorem-qcode2}&$[[96,22,25]]_{8}$\\

$32$&$[[78,34,\geq\textbf{16}]]_{7}$&Theorem \ref{theorem-qcode2}&$[[78,34,15]]_{7}$&$72$&$[[98,60,\geq\textbf{14}]]_{8}$&Theorem \ref{theorem-qcode2}&$[[98,60,12]]_{8}$\\

$33$&$[[78,28,\geq\textbf{18}]]_{7}$&Theorem \ref{theorem-qcode2}&$[[78,28,17]]_{7}$&$73$&$[[98,54,\geq\textbf{16}]]_{8}$&Theorem \ref{theorem-qcode2}&$[[98,54,14]]_{8}$\\

$34$&$[[78,22,\geq\textbf{20}]]_{7}$&Theorem \ref{theorem-qcode2}&$[[78,22,19]]_{7}$&$74$&$[[98,48,\geq\textbf{18}]]_{8}$&Theorem \ref{theorem-qcode2}&$[[98,48,16]]_{8}$\\

$35$&$[[80,48,\geq\textbf{12}]]_{7}$&Theorem \ref{theorem-qcode2}&$[[80,48,10]]_{7}$&$75$&$[[98,42,\geq\textbf{20}]]_{8}$&Theorem \ref{theorem-qcode2}&$[[98,42,18]]_{8}$\\

$36$&$[[80,42,\geq\textbf{14}]]_{7}$&Theorem \ref{theorem-qcode2}&$[[80,42,12]]_{7}$&$76$&$[[98,36,\geq\textbf{22}]]_{8}$&Theorem \ref{theorem-qcode2}&$[[98,36,20]]_{8}$\\

$37$&$[[80,36,\geq\textbf{16}]]_{7}$&Theorem \ref{theorem-qcode2}&$[[80,36,15]]_{7}$&$77$&$[[98,30,\geq\textbf{24}]]_{8}$&Theorem \ref{theorem-qcode2}&$[[98,30,23]]_{8}$\\

$38$&$[[80,30,\geq\textbf{18}]]_{7}$&Theorem \ref{theorem-qcode2}&$[[80,30,17]]_{7}$&$78$&$[[98,24,\geq\textbf{26}]]_{8}$&Theorem \ref{theorem-qcode2}&$[[98,24,25]]_{8}$\\

$39$&$[[80,24,\geq\textbf{20}]]_{7}$&Theorem \ref{theorem-qcode2}&$[[80,24,19]]_{7}$&$79$&$[[99,75,\geq\textbf{8}]]_{8}$&Theorem \ref{theorem-qcode3}&$[[99,75,7]]_{8}$\\

$40$&$[[80,18,\geq\textbf{22}]]_{7}$&Theorem \ref{theorem-qcode2}&$[[80,18,21]]_{7}$&$80$&$[[99,71,\geq\textbf{9}]]_{8}$&Theorem \ref{theorem-qcode3}&$[[99,71,8]]_{8}$\\

\hline
\end{tabular}
\end{table}

\begin{table}[!htbp]
\renewcommand\arraystretch{1.3}
\centering	
\scriptsize
\setlength{\abovecaptionskip}{0.cm}
\setlength{\belowcaptionskip}{0.2cm}
\caption{Comparison of our $q$-ary quantum codes using the lengthening/subcode construction with those in Grassl's database \cite{Grassl2025Bounds}}\label{table2}
\vspace{3pt}
\begin{tabular}{ccc|ccc}
\hline
No.&Lengthening/Subcode&Record in \cite{Grassl2025Bounds}&No.&Lengthening/Subcode&Record in \cite{Grassl2025Bounds}\\
\hline
$1/2$&$[[31,6,\geq\textbf{10}]]_{4}/[[30,5,\geq\textbf{10}]]_{4}$&$[[31,6,9]]_{4}/[[30,5,9]]_{4}$&$81/82$&
$[[82,57,\geq8]]_{8}/[[81,56,\geq8]]_{8}$&$[[82,57,8]]_{8}/[[81,56,8]]_{8}$\\

$3/4$&$[[45,18,\geq\textbf{10}]]_{5}/[[44,17,\geq\textbf{10}]]_{5}$&$[[45,18,9]]_{5}/[[44,17,9]]_{5}$&$83/84$&
$[[83,50,\geq\textbf{12}]]_{8}/[[82,49,\geq\textbf{12}]]_{8}$&$[[83,50,10]]_{8}/[[82,49,10]]_{8}$\\

$5/6$&$[[45,8,\geq\textbf{13}]]_{5}/[[44,7,\geq13]]_{5}$&$[[45,8,12]]_{5}/[[44,7,13]]_{5}$&$85/86$&
$[[83,46,\geq\textbf{13}]]_{8}/[[82,45,\geq\textbf{13}]]_{8}$&$[[83,46,12]]_{8}/[[82,45,12]]_{8}$\\

$7/8$&$[[45,6,\geq\textbf{14}]]_{5}/[[44,5,\geq14]]_{5}$&$[[45,6,13]]_{5}/[[44,5,14]]_{5}$&$87/88$&
$[[83,44,\geq\textbf{14}]]_{8}/[[82,43,\geq\textbf{14}]]_{8}$&$[[83,44,12]]_{8}/[[82,43,12]]_{8}$\\

$9/10$&$[[51,20,\geq\textbf{11}]]_{5}/[[50,19,\geq\textbf{11}]]_{5}$&$[[51,20,10]]_{5}/[[50,19,10]]_{5}$&$89/90$&
$[[83,40,\geq\textbf{15}]]_{8}/[[82,39,\geq\textbf{15}]]_{8}$&$[[83,40,14]]_{8}/[[82,39,14]]_{8}$\\

$11/12$&$[[51,14,\geq\textbf{13}]]_{5}/[[50,13,\geq\textbf{13}]]_{5}$&$[[51,14,12]]_{5}/[[50,13,12]]_{5}$&$91/92$&
$[[83,38,\geq\textbf{16}]]_{8}/[[82,37,\geq\textbf{16}]]_{8}$&$[[83,38,15]]_{8}/[[82,37,15]]_{8}$\\

$13/14$&$[[51,8,\geq\textbf{15}]]_{5}/[[50,7,\geq\textbf{15}]]_{5}$&$[[51,8,14]]_{5}/[[50,7,14]]_{5}$&$93/94$&
$[[83,34,\geq\textbf{17}]]_{8}/[[82,33,\geq\textbf{17}]]_{8}$&$[[83,34,16]]_{8}/[[82,33,16]]_{8}$\\

$15/16$&$[[51,2,\geq17]]_{5}/[[50,1,\geq17]]_{5}$&$[[51,2,17]]_{5}/[[50,1,17]]_{5}$&$95/96$&
$[[83,32,\geq\textbf{18}]]_{8}/[[82,31,\geq\textbf{18}]]_{8}$&$[[83,32,17]]_{8}/[[82,31,17]]_{8}$\\

$17/18$&$[[67,34,\geq\textbf{12}]]_{7}/[[66,33,\geq\textbf{12}]]_{7}$&$[[67,34,11]]_{7}/[[66,33,11]]_{7}$&$97/98$&
$[[83,28,\geq\textbf{19}]]_{8}/[[82,27,\geq\textbf{19}]]_{8}$&$[[83,28,18]]_{8}/[[82,27,18]]_{8}$\\

$19/20$&$[[67,28,\geq\textbf{14}]]_{7}/[[66,27,\geq\textbf{14}]]_{7}$&$[[67,28,13]]_{7}/[[66,27,13]]_{7}$&$99/100$&
$[[83,26,\geq\textbf{20}]]_{8}/[[82,25,\geq\textbf{20}]]_{8}$&$[[83,26,19]]_{8}/[[82,25,19]]_{8}$\\

$21/22$&$[[67,22,\geq\textbf{16}]]_{7}/[[66,21,\geq\textbf{16}]]_{7}$&$[[67,22,15]]_{7}/[[66,21,15]]_{7}$&$101/102$&
$[[83,22,\geq21]]_{8}/[[82,21,\geq21]]_{8}$&$[[83,22,21]]_{8}/[[82,21,21]]_{8}$\\

$23/24$&$[[67,16,\geq\textbf{18}]]_{7}/[[66,15,\geq18]]_{7}$&$[[67,16,17]]_{7}/[[66,15,18]]_{7}$&$103/104$&
$[[83,20,\geq\textbf{22}]]_{8}/[[82,19,\geq\textbf{22}]]_{8}$&$[[83,20,21]]_{8}/[[82,19,21]]_{8}$\\

$25/26$&$[[69,36,\geq\textbf{12}]]_{7}/[[68,35,\geq\textbf{12}]]_{7}$&$[[69,36,11]]_{7}/[[68,35,11]]_{7}$&$105/106$&
$[[88,63,\geq8]]_{8}/[[87,62,\geq8]]_{8}$&$[[88,63,8]]_{8}/[[87,62,8]]_{8}$\\

$27/28$&$[[69,30,\geq\textbf{14}]]_{7}/[[68,29,\geq\textbf{14}]]_{7}$&$[[69,30,13]]_{7}/[[68,29,13]]_{7}$&$107/108$&
$[[89,38,\geq\textbf{18}]]_{8}/[[88,37,\geq\textbf{18}]]_{8}$&$[[89,38,17]]_{8}/[[88,37,17]]_{8}$\\

$29/30$&$[[69,24,\geq\textbf{16}]]_{7}/[[68,23,\geq\textbf{16}]]_{7}$&$[[69,24,15]]_{7}/[[68,23,15]]_{7}$&$109/110$&
$[[89,34,\geq\textbf{19}]]_{8}/[[88,33,\geq\textbf{19}]]_{8}$&$[[89,34,18]]_{8}/[[88,33,18]]_{8}$\\

$31/32$&$[[71,26,\geq\textbf{16}]]_{7}/[[70,25,\geq\textbf{16}]]_{7}$&$[[71,26,15]]_{7}/[[70,25,15]]_{7}$&$111/112$&
$[[89,32,\geq\textbf{20}]]_{8}/[[88,31,\geq\textbf{20}]]_{8}$&$[[89,32,19]]_{8}/[[88,31,19]]_{8}$\\

$33/34$&$[[71,20,\geq\textbf{18}]]_{7}/[[70,19,\geq\textbf{18}]]_{7}$&$[[71,20,17]]_{7}/[[70,19,17]]_{7}$&$113/114$&
$[[89,28,\geq\textbf{21}]]_{8}/[[88,27,\geq\textbf{21}]]_{8}$&$[[89,28,20]]_{8}/[[88,27,20]]_{8}$\\

$35/36$&$[[73,22,\geq\textbf{18}]]_{7}/[[72,21,\geq\textbf{18}]]_{7}$&$[[73,22,17]]_{7}/[[72,21,17]]_{7}$&$115/116$&
$[[89,26,\geq\textbf{22}]]_{8}/[[88,25,\geq\textbf{22}]]_{8}$&$[[89,26,21]]_{8}/[[88,25,21]]_{8}$\\

$37/38$&$[[73,16,\geq\textbf{20}]]_{7}/[[72,15,\geq20]]_{7}$&$[[73,16,19]]_{7}/[[72,15,20]]_{7}$&$117/118$&
$[[89,22,\geq23]]_{8}/[[88,21,\geq23]]_{8}$&$[[89,22,23]]_{8}/[[88,21,23]]_{8}$\\

$39/40$&$[[75,42,\geq\textbf{12}]]_{7}/[[74,41,\geq\textbf{12}]]_{7}$&$[[75,42,11]]_{7}/[[74,41,11]]_{7}$&$119/120$&
$[[89,20,\geq\textbf{24}]]_{8}/[[88,19,\geq\textbf{24}]]_{8}$&$[[89,20,23]]_{8}/[[88,19,23]]_{8}$\\

$41/42$&$[[75,36,\geq\textbf{14}]]_{7}/[[74,35,\geq\textbf{14}]]_{7}$&$[[75,36,13]]_{7}/[[74,35,13]]_{7}$&$121/122$&
$[[94,65,\geq9]]_{8}/[[93,64,\geq9]]_{8}$&$[[94,65,9]]_{8}/[[93,64,9]]_{8}$\\

$43/44$&$[[75,30,\geq\textbf{16}]]_{7}/[[74,29,\geq\textbf{16}]]_{7}$&$[[75,30,15]]_{7}/[[74,29,15]]_{7}$&$123/124$&
$[[95,44,\geq\textbf{18}]]_{8}/[[94,43,\geq\textbf{18}]]_{8}$&$[[95,44,16]]_{8}/[[94,43,16]]_{8}$\\

$45/46$&$[[75,24,\geq\textbf{18}]]_{7}/[[74,23,\geq\textbf{18}]]_{7}$&$[[75,24,17]]_{7}/[[74,23,17]]_{7}$&$125/126$&
$[[95,38,\geq\textbf{20}]]_{8}/[[94,37,\geq\textbf{20}]]_{8}$&$[[95,38,19]]_{8}/[[94,37,19]]_{8}$\\

$47/48$&$[[75,18,\geq\textbf{20}]]_{7}/[[74,17,\geq\textbf{20}]]_{7}$&$[[75,18,19]]_{7}/[[74,17,19]]_{7}$&$127/128$&
$[[95,32,\geq\textbf{22}]]_{8}/[[94,31,\geq\textbf{22}]]_{8}$&$[[95,32,21]]_{8}/[[94,31,21]]_{8}$\\

$49/50$&$[[77,44,\geq\textbf{12}]]_{7}/[[76,43,\geq\textbf{12}]]_{7}$&$[[77,44,10]]_{7}/[[76,43,11]]_{7}$&$129/130$&
$[[95,26,\geq\textbf{24}]]_{8}/[[94,25,\geq\textbf{24}]]_{8}$&$[[95,26,23]]_{8}/[[94,25,23]]_{8}$\\

$51/52$&$[[77,38,\geq\textbf{14}]]_{7}/[[76,37,\geq\textbf{14}]]_{7}$&$[[77,38,13]]_{7}/[[76,37,13]]_{7}$&$131/132$&
$[[97,52,\geq\textbf{16}]]_{8}/[[96,51,\geq\textbf{16}]]_{8}$&$[[97,52,14]]_{8}/[[96,51,14]]_{8}$\\

$53/54$&$[[77,32,\geq\textbf{16}]]_{7}/[[76,31,\geq\textbf{16}]]_{7}$&$[[77,32,15]]_{7}/[[76,31,15]]_{7}$&$133/134$&
$[[97,46,\geq\textbf{18}]]_{8}/[[96,45,\geq\textbf{18}]]_{8}$&$[[97,46,16]]_{8}/[[96,45,16]]_{8}$\\

$55/56$&$[[77,26,\geq\textbf{18}]]_{7}/[[76,25,\geq\textbf{18}]]_{7}$&$[[77,26,17]]_{7}/[[76,25,17]]_{7}$&$135/136$&
$[[97,40,\geq\textbf{20}]]_{8}/[[96,39,\geq\textbf{20}]]_{8}$&$[[97,40,18]]_{8}/[[96,39,18]]_{8}$\\

$57/58$&$[[77,20,\geq\textbf{20}]]_{7}/[[76,19,\geq\textbf{20}]]_{7}$&$[[77,20,19]]_{7}/[[76,19,19]]_{7}$&$137/138$&
$[[97,34,\geq\textbf{22}]]_{8}/[[96,33,\geq\textbf{22}]]_{8}$&$[[97,34,21]]_{8}/[[96,33,21]]_{8}$\\

$59/60$&$[[79,46,\geq\textbf{12}]]_{7}/[[78,45,\geq\textbf{12}]]_{7}$&$[[79,46,10]]_{7}/[[78,45,10]]_{7}$&$139/140$&
$[[97,28,\geq\textbf{24}]]_{8}/[[96,27,\geq\textbf{24}]]_{8}$&$[[97,28,23]]_{8}/[[96,27,23]]_{8}$\\

$61/62$&$[[79,40,\geq\textbf{14}]]_{7}/[[78,39,\geq\textbf{14}]]_{7}$&$[[79,40,12]]_{7}/[[78,39,13]]_{7}$&$141/142$&
$[[97,22,\geq\textbf{26}]]_{8}/[[96,21,\geq\textbf{26}]]_{8}$&$[[97,22,25]]_{8}/[[96,21,25]]_{8}$\\

$63/64$&$[[79,34,\geq\textbf{16}]]_{7}/[[78,33,\geq\textbf{16}]]_{7}$&$[[79,34,15]]_{7}/[[78,33,15]]_{7}$&$143/144$&
$[[99,60,\geq\textbf{14}]]_{8}/[[98,59,\geq\textbf{14}]]_{8}$&$[[99,60,12]]_{8}/[[98,59,12]]_{8}$\\

$65/66$&$[[79,28,\geq\textbf{18}]]_{7}/[[78,27,\geq\textbf{18}]]_{7}$&$[[79,28,17]]_{7}/[[78,27,17]]_{7}$&$145/146$&
$[[99,54,\geq\textbf{16}]]_{8}/[[98,53,\geq\textbf{16}]]_{8}$&$[[99,54,14]]_{8}/[[98,53,14]]_{8}$\\

$67/68$&$[[79,22,\geq\textbf{20}]]_{7}/[[78,21,\geq\textbf{20}]]_{7}$&$[[79,22,19]]_{7}/[[78,21,19]]_{7}$&$147/148$&
$[[99,48,\geq\textbf{18}]]_{8}/[[98,47,\geq\textbf{18}]]_{8}$&$[[99,48,16]]_{8}/[[98,47,16]]_{8}$\\

$69/70$&$[[81,48,\geq\textbf{12}]]_{7}/[[80,47,\geq\textbf{12}]]_{7}$&$[[81,48,11]]_{7}/[[80,47,10]]_{7}$&$149/150$&
$[[99,42,\geq\textbf{20}]]_{8}/[[98,41,\geq\textbf{20}]]_{8}$&$[[99,42,18]]_{8}/[[98,41,18]]_{8}$\\

$71/72$&$[[81,42,\geq\textbf{14}]]_{7}/[[80,41,\geq\textbf{14}]]_{7}$&$[[81,42,12]]_{7}/[[80,41,12]]_{7}$&$151/152$&
$[[99,36,\geq\textbf{22}]]_{8}/[[98,35,\geq\textbf{22}]]_{8}$&$[[99,36,21]]_{8}/[[98,35,21]]_{8}$\\

$73/74$&$[[81,36,\geq\textbf{16}]]_{7}/[[80,35,\geq\textbf{16}]]_{7}$&$[[81,36,15]]_{7}/[[80,35,15]]_{7}$&$153/154$&
$[[99,30,\geq\textbf{24}]]_{8}/[[98,29,\geq\textbf{24}]]_{8}$&$[[99,30,23]]_{8}/[[98,29,23]]_{8}$\\

$75/76$&$[[81,30,\geq\textbf{18}]]_{7}/[[80,29,\geq\textbf{18}]]_{7}$&$[[81,30,17]]_{7}/[[80,29,17]]_{7}$&$155/156$&
$[[99,24,\geq\textbf{26}]]_{8}/[[98,23,\geq\textbf{26}]]_{8}$&$[[99,24,25]]_{8}/[[98,23,25]]_{8}$\\

$77/78$&$[[81,24,\geq\textbf{20}]]_{7}/[[80,23,\geq\textbf{20}]]_{7}$&$[[81,24,19]]_{7}/[[80,23,19]]_{7}$&$157/158$&
$[[100,75,\geq\textbf{8}]]_{8}/[[99,74,\geq\textbf{8}]]_{8}$&$[[100,75,7]]_{8}/[[99,74,7]]_{8}$\\

$79/80$&$[[81,18,\geq\textbf{22}]]_{7}/[[80,17,\geq\textbf{22}]]_{7}$&$[[81,18,21]]_{7}/[[80,17,21]]_{7}$&$159/160$&
$[[100,71,\geq9]]_{8}/[[99,70,\geq9]]_{8}$&$[[100,71,9]]_{8}/[[99,70,9]]_{8}$\\
\hline
\end{tabular}
\vspace{4pt}
\begin{tablenotes}
\item $\bullet$ The lengthening construction reveals that an $[[n,k,d]]_q$ quantum code can yield an $[[n+1,k,\geq d]]_q$ quantum code if $k>0$,
as illustrated in Lemma \ref{proposition66} (1). The odd-numbered quantum codes in Table \ref{table2} are obtained by applying the lengthening construction to the quantum codes in Table \ref{table1}.

\vspace{2pt}

\item $\bullet$ The subcode construction reveals that an $[[n,k,d]]_q$ quantum code can yield an $[[n,k-1,\geq d]]_q$ quantum code if $k>1$,
as illustrated in Lemma \ref{proposition66} (2). The even-numbered quantum codes in Table \ref{table2} are obtained by applying the subcode construction to the quantum codes in Table \ref{table1}.

\vspace{2pt}

\item $\bullet$ These $142$ quantum codes (whose minimum distance lower bounds are marked in bold in Table \ref{table2}) outperform the best-known records maintained
in Grassl's database \cite{Grassl2025Bounds}.
\end{tablenotes}
\end{table}

\section{Optimal Pure Quantum $(r,\delta)$-LRCs}\label{qopt}

In this section, we propose two effective schemes for constructing optimal pure quantum $(r,\delta)$-LRCs from MP codes.
Based on these schemes, we explicitly construct four new infinite families of optimal pure quantum $(r,\delta)$-LRCs with flexible parameters.
In addition, we report an interesting phenomenon regarding optimal pure quantum $(r,\delta)$-LRCs derived from our framework.
To the best of our knowledge, the new discovery that quantum codes are simultaneously optimal pure quantum $(r,\delta)$-LRCs and record-breaking quantum codes has not been previously reported in the literature.

\subsection{Two Schemes for Constructing Optimal Pure Quantum $(r,\delta)$-LRCs From MP Codes}\label{subs-qopt}

In the following theorem, we propose our first scheme for constructing optimal pure quantum $(r,\delta)$-LRCs from MP codes
whose defining matrix is a $2\times 2$ $\tau$-OD matrix.

\begin{theorem}\label{theorem-optqx}
Let $\mathcal{C}(A)=[\mathcal{C}_1,\mathcal{C}_2]\cdot A$, where $A$ is a $2\times 2$ $\tau$-OD matrix over $\mathbb{F}_{q^{2}}$ with $\tau=(1,2)$, and $\mathcal{C}_i$ is an $[n,k_i,d(\mathcal{C}_i)]_{q^{2}}$ linear code for $i=1,2$. Suppose the following conditions are satisfied:
\begin{itemize}
\item [(i)] $\mathcal{C}_2\subseteq \mathcal{C}_1$ and $\mathcal{C}_2^{\perp_\mathrm{H}}\subseteq \mathcal{C}_1$;

\item [(ii)] $\mathcal{C}_1$ is an optimal classical $(r,\delta)$-LRC with $d(\mathcal{C}_1)=\delta$, and $\mathcal{C}_2$ is an MDS code with $d(\mathcal{C}_2)\leq 2\delta$;

\item [(iii)] $\left\lceil \frac{k_1+k_2}{r}\right\rceil=\left\lceil\frac{k_1}{r}\right\rceil+1$.
\end{itemize}
Then, the following statements hold.
\begin{itemize}
\item [(1)] $\mathcal{C}(A)$ is both a Hermitian dual-containing code and an optimal classical $(r,\delta)$-LRC.

\item [(2)] If $k_1+k_2=n$, or the following inequality holds:
\begin{align}\label{contain}
\mathrm{min}\{2d(\mathcal{C}_2^{\perp_\mathrm{H}}),d(\mathcal{C}_1^{\perp_\mathrm{H}})\}>d(\mathcal{C}_2),
\end{align}
then there exists an optimal pure quantum $(r,\delta)$-LRC with parameters $[[2n,2(k_1+k_2-n),d(\mathcal{C}_2)]]_{q}$.
\end{itemize}
\end{theorem}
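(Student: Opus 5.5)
The plan is to identify $\mathcal{C}(A)^{\perp_{\mathrm{H}}}$ as another MP code with the same defining matrix $A$ but with the constituents swapped. Dual containment then reduces to condition (i). Next, I compute the parameters of $\mathcal{C}(A)$ from the nestedness $\mathcal{C}_2\subseteq\mathcal{C}_1$, check that they meet the Singleton-type bound \eqref{rdelta-singleton}, and finally handle purity to pass to the quantum code.

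\emph{Step 1 (dual containment).} Since $A$ is $\tau$-OD, $AA^{\dag}=DP_{\tau}$ with $\tau=(1,2)$, so $A^{-1}=A^{\dag}(DP_\tau)^{-1}$. Hence
\[
(A^{-1})^{\dag}=\big((DP_\tau)^{-1}\big)^{\dag}A=MA,
\]
where $M$ is an anti-diagonal invertible $2\times 2$ monomial matrix. By Lemma \ref{proposition4},
\[
\mathcal{C}(A)^{\perp_{\mathrm{H}}}=[\mathcal{C}_1^{\perp_{\mathrm{H}}},\mathcal{C}_2^{\perp_{\mathrm{H}}}]\cdot MA=[\lambda\mathcal{C}_2^{\perp_{\mathrm{H}}},\mu\mathcal{C}_1^{\perp_{\mathrm{H}}}]\cdot A=[\mathcal{C}_2^{\perp_{\mathrm{H}}},\mathcal{C}_1^{\perp_{\mathrm{H}}}]\cdot A,
\]
because scaling a linear code by a nonzero constant does not change it. Therefore $\mathcal{C}(A)^{\perp_{\mathrm{H}}}\subseteq\mathcal{C}(A)$ as soon as $\mathcal{C}_2^{\perp_{\mathrm{H}}}\subseteq\mathcal{C}_1$ and $\mathcal{C}_1^{\perp_{\mathrm{H}}}\subseteq\mathcal{C}_2$. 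The second inclusion is the Hermitian dual of the first, so both follow from (i). The step needing most care is tracking the monomial factor and the convention for $P_\tau$; the swap of constituents is the essential point.

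\emph{Step 2 (classical optimality).} $A$ is NSC and $\mathcal{C}_1\supseteq\mathcal{C}_2$, so Lemma \ref{nested-distance} gives
\[
d(\mathcal{C}(A))=\min\{2\delta,d(\mathcal{C}_2)\}=d(\mathcal{C}_2)
\]
by (ii), and $\dim\mathcal{C}(A)=k_1+k_2$. Lemma \ref{local} shows that $\mathcal{C}(A)$ is an $(r,\delta)$-LRC. For optimality I substitute (iii) into the right-hand side of \eqref{rdelta-singleton} for length $2n$, which gives
\[
2n-k_1-k_2+1-\Big\lceil \tfrac{k_1}{r}\Big\rceil(\delta-1).
\]
I then split this as $(n-k_2+1)+\big[n-k_1+1-(\lceil k_1/r\rceil-1)(\delta-1)\big]-1-(\delta-1)$. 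Optimality of $\mathcal{C}_1$ makes the bracket equal to $\delta$, and the MDS property of $\mathcal{C}_2$ makes $n-k_2+1=d(\mathcal{C}_2)$. The total is therefore $d(\mathcal{C}_2)=d(\mathcal{C}(A))$, so equality holds and statement (1) follows.

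\emph{Step 3 (purity and quantum optimality).} Lemma \ref{quan-LRC} applied to $\mathcal{C}(A)$, with $\dim=\frac{2n+k}{2}$ and $k=2(k_1+k_2-n)$, gives a quantum $(r,\delta)$-LRC $[[2n,2(k_1+k_2-n),\geq d(\mathcal{C}_2)]]_q$. It remains to show purity, i.e.\ $\mathrm{wt}(\mathcal{C}(A)\setminus\mathcal{C}(A)^{\perp_{\mathrm{H}}})=d(\mathcal{C}(A))$.

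If $k_1+k_2=n$, then $\dim\mathcal{C}(A)=n$, which is half the length $2n$. Dual containment then forces $\mathcal{C}(A)^{\perp_{\mathrm{H}}}=\mathcal{C}(A)$, so the code is pure by Lemma \ref{proposition6}.

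Otherwise I use \eqref{contain}. By Step 1 and Lemma \ref{proposition2},
\[
d(\mathcal{C}(A)^{\perp_{\mathrm{H}}})\geq\min\{2d(\mathcal{C}_2^{\perp_{\mathrm{H}}}),d(\mathcal{C}_1^{\perp_{\mathrm{H}}})\}>d(\mathcal{C}_2).
\]
Hence every minimum-weight codeword of $\mathcal{C}(A)$ lies outside $\mathcal{C}(A)^{\perp_{\mathrm{H}}}$, and the code is pure with distance exactly $d(\mathcal{C}_2)$.

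Finally, Remark \ref{optimal} together with statement (1) yields that this pure quantum $(r,\delta)$-LRC is optimal.
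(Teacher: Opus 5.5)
Your proposal is correct and follows essentially the same route as the paper. You get dual containment from Lemma \ref{proposition4}, making explicit the constituent swap $\mathcal{C}(A)^{\perp_{\mathrm{H}}}=[\mathcal{C}_2^{\perp_{\mathrm{H}}},\mathcal{C}_1^{\perp_{\mathrm{H}}}]\cdot A$ that the paper leaves implicit, then the distance $d(\mathcal{C}(A))=\min\{2\delta,d(\mathcal{C}_2)\}=d(\mathcal{C}_2)$, the Singleton-type equality check via (iii), and purity from self-duality or \eqref{contain}; along the way you correctly attribute the distance formula to Lemma \ref{nested-distance} (the paper cites Lemma \ref{local}) and spell out the arithmetic showing the bound is attained.
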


\begin{IEEEproof}
It follows from Lemma \ref{proposition4} that
\begin{align*}
\mathcal{C}(A)^{\bot_{\mathrm{H}}}\subseteq\mathcal{C}(A).
\end{align*}
Hence, $\mathcal{C}(A)$ is a Hermitian dual-containing code.
By Lemma \ref{local}, we derive $d(\mathcal{C}(A))=d(\mathcal{C}_2)$. Consequently, $\mathcal{C}(A)$ has parameters $[2n,k_1+k_2,d(\mathcal{C}_2)]_{q^{2}}$.
As $\mathcal{C}_2$ is an MDS code, we have
\begin{align}\label{dd}
d(\mathcal{C}_2)=n-k_2+1.
\end{align}
Moreover, we have
\begin{align*}
d(\mathcal{C}_2)=2n-(k_1+k_2)+1-\left(\left\lceil\frac{k_1+k_2}{r}\right\rceil-1\right)(\delta-1),
\end{align*}
which implies that $\mathcal{C}(A)$ is an optimal classical $(r,\delta)$-LRC. This completes the proof of the statement (1).

Let us now prove that the statement (2) holds.
If $k_1+k_2=n$, then $\mathcal{C}(A)^{\perp_{\mathrm{H}}}=\mathcal{C}(A)$. Thus, $\mathcal{C}(A)$ induces a pure quantum code $\mathcal{Q}_{1}$ with parameters
$[[2n,0,d(\mathcal{C}_2)]]_{q}$. Suppose that inequality \eqref{contain} holds. It follows from Lemma \ref{nested-distance} that
\begin{align*}
d(\mathcal{C}(A)^{\bot_{\mathrm{H}}})>d(\mathcal{C}(A)).
\end{align*}
As a result, $\mathcal{C}(A)$ induces a pure quantum code $\mathcal{Q}_{2}$ with parameters $[[2n,2(k_1+k_2-n),d(\mathcal{C}_2)]]_{q}$.
Since $\mathcal{C}(A)$ is an optimal classical $(r,\delta)$-LRC, it follows that $\mathcal{Q}_{1}$ and $\mathcal{Q}_{2}$ are optimal pure quantum $(r,\delta)$-LRCs. This completes the proof of the statement (2).

Therefore, we complete the proof.
\end{IEEEproof}

\begin{corollary}\label{cor-optq}
Let $\mathcal{C}(A)=[\mathcal{C}_1,\mathcal{C}_2]\cdot A$, where $A$ is a $2\times 2$ $\tau$-OD matrix over $\mathbb{F}_{q^{2}}$ with $\tau=(1,2)$, $\mathcal{C}_1$ is an $[r+\delta-1,r,\delta]_{q^{2}}$ MDS code, and $\mathcal{C}_2$ is an $[r+\delta-1,k_{2},r+\delta-k_2]_{q^{2}}$ MDS code. Suppose the following conditions are satisfied:
\begin{itemize}
\item [(i)] $\mathcal{C}_2\subseteq \mathcal{C}_1$ and $\mathcal{C}_2^{\perp_\mathrm{H}}\subseteq \mathcal{C}_1$;

\item [(ii)] $r\leq\delta+k_2$.
\end{itemize}
Then, the following statements hold.
\begin{itemize}
\item [(1)] $\mathcal{C}(A)$ is both a Hermitian dual-containing code and an optimal classical $(r,\delta)$-LRC.

\item [(2)] If one of the following conditions holds:
\begin{itemize}
\item $k_2=\delta-1$;

\item $k_2>\delta-1$ and $3k_2>r+\delta-2$,
\end{itemize}
then there exists an optimal pure quantum $(r,\delta)$-LRC with parameters $[[2(r+\delta-1),2(k_2-\delta+1),r+\delta-k_2]]_{q}$.
\end{itemize}
\end{corollary}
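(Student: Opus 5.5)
The plan is to derive the corollary directly from Theorem \ref{theorem-optqx} with $n=r+\delta-1$, $k_1=r$. The work is to check hypotheses (ii) and (iii) there, and then the alternative in statement (2).

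\textbf{Hypotheses of Theorem \ref{theorem-optqx}.} Condition (i) is assumed verbatim.

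For (ii), $\mathcal{C}_1$ is an $[r+\delta-1,r,\delta]_{q^2}$ MDS code. Its length is $r+\delta-1$, so taking $S_i=[n]$ for every coordinate $i$ gives $(r,\delta)$-locality. The bound \eqref{rdelta-singleton} with $k=r$ has $\lceil k/r\rceil-1=0$, so it reads $d\le n-r+1=\delta$. Hence $\mathcal{C}_1$ is an optimal classical $(r,\delta)$-LRC with $d(\mathcal{C}_1)=\delta$. Next, $\mathcal{C}_2$ is MDS, and $\mathcal{C}_2\subseteq\mathcal{C}_1$ forces $k_2\le r$. Then $d(\mathcal{C}_2)=r+\delta-k_2\le 2\delta$ is exactly the assumption $r\le \delta+k_2$.

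For (iii), we need $\lceil (r+k_2)/r\rceil=2=\lceil r/r\rceil+1$. This holds because $1\le k_2\le r$; here $k_2\ge1$ is implicit, otherwise $\mathcal{C}_2$ is trivial.

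Statement (1) then follows immediately from Theorem \ref{theorem-optqx}(1).

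\textbf{Statement (2).} I apply Theorem \ref{theorem-optqx}(2). The quantum dimension is $2(k_1+k_2-n)=2(k_2-\delta+1)$ and the distance is $d(\mathcal{C}_2)=r+\delta-k_2$, which matches the claimed parameters.

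If $k_2=\delta-1$, then $k_1+k_2=r+\delta-1=n$, so the first alternative of Theorem \ref{theorem-optqx}(2) applies.

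If $k_2>\delta-1$, I verify inequality \eqref{contain}. The Hermitian dual of an MDS code is MDS, so:
\begin{itemize}
\item $d(\mathcal{C}_2^{\perp_{\mathrm H}})=k_2+1$;
\item $d(\mathcal{C}_1^{\perp_{\mathrm H}})=r+1$.
\end{itemize}
We need $\min\{2(k_2+1),\,r+1\}>r+\delta-k_2$.

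For the second term, $r+1>r+\delta-k_2$ is equivalent to $k_2\ge\delta$, i.e. $k_2>\delta-1$.

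For the first term, $2k_2+2>r+\delta-k_2$ is equivalent to $3k_2>r+\delta-2$, which is the stated condition.

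\textbf{Expected obstacle.} The only delicate point is the identity $d(\mathcal{C}(A))=d(\mathcal{C}_2)$ used inside Theorem \ref{theorem-optqx}. It relies on $\min\{2d(\mathcal{C}_1),d(\mathcal{C}_2)\}=d(\mathcal{C}_2)$, which is precisely what $d(\mathcal{C}_2)\le2\delta$ supplies. Since that is hypothesis (ii) of Theorem \ref{theorem-optqx}, it is already handled there. Everything else is routine arithmetic.
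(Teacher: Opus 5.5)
Your proposal is correct and matches the paper's intended route: the paper states this corollary without a separate proof, as a direct specialization of Theorem \ref{theorem-optqx} with $n=r+\delta-1$ and $k_1=r$. Your checks of hypotheses (ii) and (iii), and your reduction of the two alternatives in statement (2) to $k_1+k_2=n$ and to inequality \eqref{contain}, are exactly the required verifications.
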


\vspace{6pt}

The following theorem presents our second scheme for constructing optimal pure quantum $(r,\delta)$-LRCs from MP codes whose defining matrix is a $3\times 3$
$\tau$-OD matrix.

\begin{theorem}\label{theorem-optqx-x1}
Let $\mathcal{C}(A)=[\mathcal{C}_1,\mathcal{C}_1,\mathcal{C}_2]\cdot A$, where $A$ is a $3\times 3$ $\tau$-OD matrix over $\mathbb{F}_{q^{2}}$ with $\tau=(2,3)$,
and $\mathcal{C}_i$ is an $[n,k_i,d(\mathcal{C}_i)]_{q^{2}}$ linear code for $i=1,2$. Suppose the following conditions are satisfied:
\begin{itemize}
\item [(i)] $\mathcal{C}_2\subseteq \mathcal{C}_1$ and $\mathcal{C}_2^{\perp_\mathrm{H}}\subseteq \mathcal{C}_1$;

\item [(ii)] $\mathcal{C}_1$ is an optimal classical $(r,\delta)$-LRC with $d(\mathcal{C}_1)=\delta$, and $\mathcal{C}_2$ is an MDS code with
$d(\mathcal{C}_2)\leq 2\delta$;

\item [(iii)] $\left\lceil \frac{2k_1+k_2}{r}\right\rceil=2\left\lceil\frac{k_1}{r}\right\rceil+1$.
\end{itemize}
Then, the following statements hold.
\begin{itemize}
\item [(1)] $\mathcal{C}(A)$ is both a Hermitian dual-containing code and an optimal classical $(r,\delta)$-LRC.

\item [(2)] If $4k_1+2k_2=3n$, or the following inequality holds:
\begin{align}\label{2inequ-pure}
\mathrm{min}\{2d(\mathcal{C}_2^{\perp_\mathrm{H}}),d(\mathcal{C}_1^{\perp_\mathrm{H}})\}>d(\mathcal{C}_2),
\end{align}
then there exists an optimal pure quantum $(r,\delta)$-LRC with parameters $[[3n,4k_1+2k_2-3n,d(\mathcal{C}_2)]]_{q}$.
\end{itemize}
\end{theorem}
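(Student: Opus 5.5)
Following the proof of Theorem \ref{theorem-optqx}, the first step is to get the Hermitian dual. Since $A$ is invertible, Lemma \ref{proposition4} gives $\mathcal{C}(A)^{\perp_{\mathrm{H}}}=[\mathcal{C}_1^{\perp_{\mathrm{H}}},\mathcal{C}_1^{\perp_{\mathrm{H}}},\mathcal{C}_2^{\perp_{\mathrm{H}}}]\cdot(A^{-1})^{\dag}$. Because $AA^{\dag}=DP_{\tau}$ with $\tau=(2,3)$, we have $(A^{-1})^{\dag}=(AA^{\dag})^{-1}A=P_{\tau}^{-1}D^{-1}A$. So the $i$-th row of $(A^{-1})^{\dag}$ is a nonzero multiple of the $\tau(i)$-th row of $A$. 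Hence
\[
\mathcal{C}(A)^{\perp_{\mathrm{H}}}=[\mathcal{C}_1^{\perp_{\mathrm{H}}},\mathcal{C}_2^{\perp_{\mathrm{H}}},\mathcal{C}_1^{\perp_{\mathrm{H}}}]\cdot A,
\]
after absorbing the scalars into the (linear) constituent codes. Dual-containment then needs $\mathcal{C}_1^{\perp_{\mathrm{H}}}\subseteq\mathcal{C}_1$, $\mathcal{C}_2^{\perp_{\mathrm{H}}}\subseteq\mathcal{C}_1$ and $\mathcal{C}_1^{\perp_{\mathrm{H}}}\subseteq\mathcal{C}_2$. The second is hypothesis (i). The third follows from (i) by taking duals of $\mathcal{C}_2^{\perp_{\mathrm{H}}}\subseteq\mathcal{C}_1$. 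The first follows from $\mathcal{C}_1^{\perp_{\mathrm{H}}}\subseteq\mathcal{C}_2\subseteq\mathcal{C}_1$.

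Next, the parameters. The constituents $\mathcal{C}_1\supseteq\mathcal{C}_1\supseteq\mathcal{C}_2$ are nested and $A$ is NSC. Lemma \ref{nested-distance} then gives
\[
d(\mathcal{C}(A))=\min\{3\delta,2\delta,d(\mathcal{C}_2)\}=d(\mathcal{C}_2)=n-k_2+1,
\]
using $d(\mathcal{C}_2)\le 2\delta$ and that $\mathcal{C}_2$ is MDS; the dimension is $2k_1+k_2$. Locality comes from Lemma \ref{local}, since $\mathcal{C}_1$ is an $(r,\delta)$-LRC. For optimality I check the bound \eqref{rdelta-singleton} with equality. Optimality of $\mathcal{C}_1$ with $d(\mathcal{C}_1)=\delta$ gives
\[
\left(\left\lceil k_1/r\right\rceil-1\right)(\delta-1)=n-k_1+1-\delta.
\]
Using (iii), the right-hand side of the bound for $\mathcal{C}(A)$ is
\[
3n-2k_1-k_2+1-2\left\lceil k_1/r\right\rceil(\delta-1).
\]
Substituting $2\lceil k_1/r\rceil(\delta-1)=2(n-k_1+1-\delta)+2(\delta-1)=2n-2k_1$, this becomes $n-k_2+1=d(\mathcal{C}(A))$. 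That proves (1).

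For (2), Lemma \ref{quan-LRC} gives a quantum $(r,\delta)$-LRC of length $3n$ and dimension $2(2k_1+k_2)-3n$. If $4k_1+2k_2=3n$, the dimensions force $\mathcal{C}(A)^{\perp_{\mathrm{H}}}=\mathcal{C}(A)$, so the code is pure by Lemma \ref{proposition6}. Otherwise I bound the dual's distance by Lemma \ref{proposition2}:
\[
d(\mathcal{C}(A)^{\perp_{\mathrm{H}}})\ge\min\{3d(\mathcal{C}_1^{\perp_{\mathrm{H}}}),2d(\mathcal{C}_2^{\perp_{\mathrm{H}}}),d(\mathcal{C}_1^{\perp_{\mathrm{H}}})\},
\]
which exceeds $d(\mathcal{C}_2)$ by \eqref{2inequ-pure}. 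Thus every vector of minimum weight in $\mathcal{C}(A)$ lies outside $\mathcal{C}(A)^{\perp_{\mathrm{H}}}$, and the quantum code is pure with distance $d(\mathcal{C}_2)$. Optimality then follows from (1) and Remark \ref{optimal}.

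The main obstacle is the dual computation in the first step: getting the order of the constituents in $\mathcal{C}(A)^{\perp_{\mathrm{H}}}$ right under the convention that the $\tau(i)$-th row of $P_\tau$ is $e_i$. Here this is harmless because $\tau=(2,3)$ is an involution, so $P_\tau^{-1}=P_\tau$. The dual-constituent sequence $\mathcal{C}_1^{\perp},\mathcal{C}_2^{\perp},\mathcal{C}_1^{\perp}$ is not nested, so I use Lemma \ref{proposition2} rather than Lemma \ref{nested-distance} for the dual's distance.
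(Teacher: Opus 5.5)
Your proof is correct and follows the same route as the paper. You use Lemma \ref{proposition4} with $AA^{\dag}=DP_{\tau}$ for dual-containment, Lemma \ref{nested-distance} for $d(\mathcal{C}(A))=d(\mathcal{C}_2)$, the optimality of $\mathcal{C}_1$ together with (iii) to reach equality in \eqref{rdelta-singleton}, and a dual-distance comparison for purity. You also supply details the paper leaves implicit: the explicit form $\mathcal{C}(A)^{\perp_{\mathrm{H}}}=[\mathcal{C}_1^{\perp_{\mathrm{H}}},\mathcal{C}_2^{\perp_{\mathrm{H}}},\mathcal{C}_1^{\perp_{\mathrm{H}}}]\cdot A$, and the correct observation that, since these constituents are not nested, it is Lemma \ref{proposition2} (a lower bound) that turns \eqref{2inequ-pure} into $d(\mathcal{C}(A)^{\perp_{\mathrm{H}}})>d(\mathcal{C}_2)$.
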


\begin{IEEEproof}
It follows from Lemma \ref{proposition4} that
\begin{align*}
\mathcal{C}(A)^{\bot_{\mathrm{H}}}\subseteq\mathcal{C}(A).
\end{align*}
Hence, $\mathcal{C}(A)$ is a Hermitian dual-containing code. Moreover, we derive $d(\mathcal{C}(A))=d(\mathcal{C}_2)$. Consequently, $\mathcal{C}(A)$ has parameters $[3n,2k_1+k_2,d(\mathcal{C}_2)]_{q^{2}}$. As $\mathcal{C}_2$ is an MDS code, we have
\begin{align}\label{2dd}
d(\mathcal{C}_2)=n-k_2+1.
\end{align}
Moreover, we derive
\begin{align*}
d(\mathcal{C}_2)=3n-(2k_1+k_2)+1-\left(\left\lceil\frac{2k_1+k_2}{r}\right\rceil-1\right)(\delta-1),
\end{align*}
which implies that $\mathcal{C}(A)$ is an optimal classical $(r,\delta)$-LRC. This completes the proof of the statement (1).

Let us now prove that the statement (2) holds.
If $4k_1+2k_2=3n$, then $\mathcal{C}(A)^{\perp_{\mathrm{H}}}=\mathcal{C}(A)$. Thus, $\mathcal{C}(A)$ induces a pure quantum code $\mathcal{Q}_{1}$ with parameters
$[[3n,0,d(\mathcal{C}_2)]]_{q}$. Suppose that inequality \eqref{2inequ-pure} holds. It follows that
\begin{align*}
d(\mathcal{C}(A)^{\bot_{\mathrm{H}}})>d(\mathcal{C}(A)).
\end{align*}
As a result, $\mathcal{C}(A)$ induces a pure quantum code $\mathcal{Q}_{2}$ with parameters $[[3n,4k_1+2k_2-3n,d(\mathcal{C}_2)]]_{q}$.
Since $\mathcal{C}(A)$ is an optimal classical $(r,\delta)$-LRC, it follows that $\mathcal{Q}_{1}$ and $\mathcal{Q}_{2}$ are optimal pure quantum $(r,\delta)$-LRCs. This completes the proof of the statement (2).

Therefore, we complete the proof.
\end{IEEEproof}

\begin{corollary}\label{cor-optq-x1}
Let $\mathcal{C}(A)=[\mathcal{C}_1,\mathcal{C}_1,\mathcal{C}_2]\cdot A$, where $A$ is a $3\times 3$ $\tau$-OD matrix over $\mathbb{F}_{q^{2}}$ with $\tau=(2,3)$, $\mathcal{C}_1$ is an $[r+\delta-1,r,\delta]_{q^{2}}$ MDS code, and $\mathcal{C}_2$ is an $[r+\delta-1,k_{2},r+\delta-k_2]_{q^{2}}$ MDS code. Suppose the following conditions are satisfied:
\begin{itemize}
\item [(i)] $\mathcal{C}_2\subseteq \mathcal{C}_1$ and $\mathcal{C}_2^{\perp_\mathrm{H}}\subseteq \mathcal{C}_1$;

\item [(ii)] $r\leq\delta+k_2$.
\end{itemize}
Then, the following statements hold.
\begin{itemize}
\item [(1)] $\mathcal{C}(A)$ is both a Hermitian dual-containing code and an optimal classical $(r,\delta)$-LRC.

\item [(2)] If one of the following conditions holds:
\begin{itemize}
\item $2k_2=3\delta-r-3$;

\item $k_2>\delta-1$ and $3k_2>r+\delta-2$,
\end{itemize}
then there exists an optimal pure quantum $(r,\delta)$-LRC with parameters $[[3(r+\delta-1),2k_2-3\delta+r+3,r+\delta-k_2]]_{q}$.
\end{itemize}
\end{corollary}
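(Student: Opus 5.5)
The plan is to obtain the corollary as the special case $n=r+\delta-1$, $k_1=r$ of Theorem \ref{theorem-optqx-x1}. I will check the three hypotheses (i)--(iii) of that theorem, and then translate the two sufficient conditions in its statement (2) into the stated conditions on $k_2$. Throughout, $\mathcal{C}_1$ is an $[r+\delta-1,r,\delta]_{q^2}$ MDS code and $\mathcal{C}_2$ is an $[r+\delta-1,k_2,r+\delta-k_2]_{q^2}$ MDS code.

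\emph{Hypotheses of Theorem \ref{theorem-optqx-x1}.}
\begin{itemize}
\item Hypothesis (i) is assumed verbatim.
\item For (ii), $\mathcal{C}_1$ is a classical $(r,\delta)$-LRC: take $S_i=[n]$ for every coordinate. Then $|S_i|=r+\delta-1$ and $d(\mathcal{C}_1|_{S_i})=\delta$.
\item It is optimal because $\lceil r/r\rceil-1=0$, so the bound \eqref{rdelta-singleton} reduces to the Singleton bound $d\le n-r+1=\delta$, which $\mathcal{C}_1$ attains.
\item Also $d(\mathcal{C}_2)=r+\delta-k_2\le 2\delta$ is exactly the assumption $r\le\delta+k_2$.
\item For (iii), $\mathcal{C}_2\subseteq\mathcal{C}_1$ gives $k_2\le r$, and $k_2\ge 1$ since $\mathcal{C}_2$ is a nonzero MDS code. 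Hence $\lceil (2r+k_2)/r\rceil = 2+\lceil k_2/r\rceil = 3 = 2\lceil r/r\rceil+1$.
\end{itemize}
Statement (1) of the corollary then follows directly from statement (1) of the theorem.

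\emph{First condition of statement (2).} With $n=r+\delta-1$ and $k_1=r$, the condition $4k_1+2k_2=3n$ reads $4r+2k_2=3r+3\delta-3$, that is, $2k_2=3\delta-r-3$.

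\emph{Second condition of statement (2).} I will use inequality \eqref{2inequ-pure}. The needed input is that the Hermitian dual of an MDS code is again MDS; this holds because $\mathcal{C}^{\perp_{\mathrm H}}=(\mathcal{C}^{\perp_{\mathrm E}})^q$ and the coordinatewise $q$-power preserves weights. Consequently:
\begin{itemize}
\item $d(\mathcal{C}_2^{\perp_{\mathrm H}})=k_2+1$ and $d(\mathcal{C}_1^{\perp_{\mathrm H}})=r+1$.
\item $2(k_2+1)>r+\delta-k_2$ is equivalent to $3k_2>r+\delta-2$.
\item $r+1>r+\delta-k_2$ is equivalent to $k_2>\delta-1$.
\end{itemize}
Thus the second bullet of the corollary is exactly \eqref{2inequ-pure}.

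\emph{Parameters.} Theorem \ref{theorem-optqx-x1} yields parameters $[[3n,4k_1+2k_2-3n,d(\mathcal{C}_2)]]_q$. Substituting gives length $3(r+\delta-1)$, dimension $4r+2k_2-3(r+\delta-1)=2k_2-3\delta+r+3$, and distance $r+\delta-k_2$, matching the statement.

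There is no real obstacle; the only points needing care are the bound $1\le k_2\le r$ used in (iii) and the fact that Hermitian duals of MDS codes are MDS.
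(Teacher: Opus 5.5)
Your proposal is correct and takes the route the paper intends: the paper states the corollary without proof, as the specialization $n=r+\delta-1$, $k_1=r$ of Theorem \ref{theorem-optqx-x1}, and you check hypotheses (i)--(iii) and translate both conditions of statement (2) exactly as required. Your explicit checks supply the details the paper leaves implicit: locality via $S_i=[n]$, the bound $1\le k_2\le r$ for (iii), and $\mathcal{C}^{\perp_{\mathrm H}}=(\mathcal{C}^{\perp_{\mathrm E}})^q$ to get the dual distances $k_2+1$ and $r+1$.
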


\subsection{New Infinite Families of Optimal Pure Quantum $(r,\delta)$-LRCs}\label{infinitequantum}

In this subsection, we apply Theorems \ref{theorem-optqx} and \ref{theorem-optqx-x1} to construct four new infinite families of optimal pure quantum $(r,\delta)$-LRCs with flexible parameters.

\begin{lemma}\label{lem-contain}
Let $(n-1)\mid(q^2-1)$ and $\mathrm{char}(\mathbb{F}_{q^2})\mid n$. Let $\alpha\in\mathbb{F}_{q^2}$ be a primitive $(n-1)$-th root of unity.
Denote $\mathbf{a}=(0,1,\alpha,\ldots,\alpha^{n-2})$ and $\mathbf{1}=(1,1,\ldots,1)$ as two row vectors of length $n$.
Then, $\mathrm{GRS}_{k_2}(\mathbf{a},\mathbf{1})^{\bot_{\mathrm{H}}}\subseteq \mathrm{GRS}_{k_1}(\mathbf{a},\mathbf{1})$ if and only if the following relation holds in $\mathbb{Z}_{n-1}$:
\begin{align*}
\{q,2q,\ldots,(n-k_2-1)q\}\subseteq \{1,2,\ldots,k_1-1\}.
\end{align*}
\end{lemma}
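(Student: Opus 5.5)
The plan is to compute $\mathrm{GRS}_{k_2}(\mathbf{a},\mathbf{1})^{\bot_{\mathrm{H}}}$ explicitly as a span of monomial evaluation vectors. Containment then becomes a statement about exponents. Throughout, write $\mathbf{ev}(f)=(f(0),f(1),f(\alpha),\ldots,f(\alpha^{n-2}))$. The $n$ entries of $\mathbf{a}$ are pairwise distinct, so $\mathbf{ev}(x^0),\ldots,\mathbf{ev}(x^{n-1})$ form a basis of $\mathbb{F}_{q^2}^n$, since their matrix is Vandermonde. By definition, $\mathrm{GRS}_{k}(\mathbf{a},\mathbf{1})$ is the span of $\mathbf{ev}(x^0),\ldots,\mathbf{ev}(x^{k-1})$.

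\emph{Step 1 (Euclidean dual).} I will show that $\mathrm{GRS}_{k}(\mathbf{a},\mathbf{1})^{\bot_{\mathrm{E}}}=\mathrm{GRS}_{n-k}(\mathbf{a},\mathbf{1})$. The key computation is the power sum $S_j=\sum_{i}a_i^{j}$.
\begin{itemize}
\item For $j=0$ it equals $n$, which is $0$ because $\mathrm{char}(\mathbb{F}_{q^2})\mid n$.
\item For $j\geq 1$ the coordinate $0$ contributes nothing. The sum over the cyclic group $\langle\alpha\rangle$ of order $n-1$ is $0$ unless $(n-1)\mid j$, in which case it is $n-1$.
\end{itemize}
Hence $S_j=0$ for $0\leq j\leq n-2$. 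Now take $i\leq k-1$ and $j\leq n-k-1$, so that $i+j\leq n-2$. Then $\langle \mathbf{ev}(x^i),\mathbf{ev}(x^j)\rangle_{\mathrm{E}}=S_{i+j}=0$. Comparing dimensions ($k$ and $n-k$) gives equality.

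\emph{Step 2 (Hermitian dual).} Since $\mathcal{C}^{\bot_{\mathrm{H}}}=(\mathcal{C}^{\bot_{\mathrm{E}}})^{q}$, Step 1 shows that $\mathrm{GRS}_{k_2}(\mathbf{a},\mathbf{1})^{\bot_{\mathrm{H}}}$ is spanned by the vectors $\mathbf{ev}(x^{jq})$ for $0\leq j\leq n-k_2-1$. Next I reduce exponents.
\begin{itemize}
\item For $j=0$ the vector is $\mathbf{ev}(x^0)=\mathbf{1}$.
\item For $j\geq 1$, the first coordinate is $0^{jq}=0$, and on $\langle\alpha\rangle$ we may reduce $jq$ modulo $n-1$. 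Let $e_j\in\{1,\ldots,n-1\}$ be the representative of $jq$ modulo $n-1$. Then $\mathbf{ev}(x^{jq})=\mathbf{ev}(x^{e_j})$.
\end{itemize}
The point requiring care is a $j\geq1$ with $jq\equiv 0\pmod{n-1}$. The correct representative there is $n-1$, not $0$, because the $0$-coordinate distinguishes $\mathbf{ev}(x^{n-1})$ from $\mathbf{1}$. This is the step I expect to be the main obstacle: it is where the answer is recorded as a set in $\mathbb{Z}_{n-1}$ that excludes the class $0$.

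\emph{Step 3 (containment).} Because $\{\mathbf{ev}(x^e)\}_{0\leq e\leq n-1}$ is a basis, a single $\mathbf{ev}(x^{e})$ with $0\leq e\leq n-1$ lies in $\mathrm{GRS}_{k_1}(\mathbf{a},\mathbf{1})=\langle \mathbf{ev}(x^0),\ldots,\mathbf{ev}(x^{k_1-1})\rangle$ if and only if $e\leq k_1-1$. The vector $\mathbf{1}$ (the case $j=0$) always lies in it. So the containment holds if and only if $e_j\in\{1,\ldots,k_1-1\}$ for every $1\leq j\leq n-k_2-1$.

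It remains to translate this into $\mathbb{Z}_{n-1}$. The code has length $n$, so $k_1\leq n$ and hence $k_1-1\leq n-1$. I split into two cases.
\begin{itemize}
\item If $k_1\leq n-1$, the integers $1,\ldots,k_1-1$ are distinct nonzero classes modulo $n-1$. Then $e_j\in\{1,\ldots,k_1-1\}$ is exactly $jq\in\{1,\ldots,k_1-1\}\subseteq\mathbb{Z}_{n-1}$, and $e_j=n-1$ (that is, $jq\equiv 0$) is correctly excluded.
\item If $k_1=n$, both sides are trivially true, since $\{1,\ldots,n-1\}$ is all of $\mathbb{Z}_{n-1}$.
\end{itemize}
This yields the stated equivalence.
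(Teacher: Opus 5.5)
Your proposal is correct and follows the paper's route: you identify $\mathrm{GRS}_{k_2}(\mathbf{a},\mathbf{1})^{\bot_{\mathrm{H}}}$ with $\mathrm{GRS}_{n-k_2}(\mathbf{a}^{q},\mathbf{1})$ (your $(\mathrm{GRS}_{n-k_2}(\mathbf{a},\mathbf{1}))^{q}$) using the vanishing power sums and a dimension count, and then compare monomial exponents. You also supply the exponent-reduction step that the paper compresses into ``equivalently'', including the correct handling of $jq\equiv 0$ via the representative $n-1$ and the case $k_1=n$; the only implicit assumption is $k_1\geq 1$ (needed for $\mathbf{1}\in\mathrm{GRS}_{k_1}(\mathbf{a},\mathbf{1})$), which the paper's definition of GRS codes guarantees.
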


\begin{IEEEproof}
Let $\mathbf{u}_{i}$ and $\mathbf{v}_j$ be the $i$-th row of $G_{n-k}(\mathbf{a}^{q},\mathbf{1})$ and the $j$-th row of $G_{k}(\mathbf{a}^{q},\mathbf{1})$, respectively, where $1\leq i\leq n-k$ and $1\leq j\leq k$. It follows that
\begin{align*}
\mathrm{GRS}_{n-k}(\mathbf{a}^{q},\mathbf{1})\subseteq \mathrm{GRS}_{k}(\mathbf{a}^{q},\mathbf{1})^{\bot_{\mathrm{E}}}
=\mathrm{GRS}_{k}(\mathbf{a},\mathbf{1})^{\bot_{\mathrm{H}}}.
\end{align*}

Then,
$\mathrm{GRS}_{n-k}(\mathbf{a}^{q},\mathbf{1})=\mathrm{GRS}_{k}(\mathbf{a},\mathbf{1})^{\bot_{\mathrm{H}}}$. Hence,
$\mathrm{GRS}_{k_2}(\mathbf{a},\mathbf{1})^{\bot_{\mathrm{H}}}\subseteq \mathrm{GRS}_{k_1}(\mathbf{a},\mathbf{1})$ if and only if
$\mathrm{GRS}_{n-k_2}(\mathbf{a}^{q},\mathbf{1})\subseteq \mathrm{GRS}_{k_1}(\mathbf{a},\mathbf{1})$. Equivalently, the following relation holds in $\mathbb{Z}_{n-1}$:
\begin{align*}
\{q,2q,\ldots,(n-k_2-1)q\}\subseteq \{1,2,\ldots,k_1-1\},
\end{align*}
which completes the proof.
\end{IEEEproof}

\begin{remark}\label{remark1x}
It is worth noting that when $\mathrm{char}(\mathbb{F}_{q^2})=2$, the condition $(n-1)\mid(q^2-1)$ implies that $\mathrm{char}(\mathbb{F}_{q^2})\mid n$ is automatically satisfied. Consequently, in this case, the conditions $(n-1)\mid(q^2-1)$ and $\mathrm{char}(\mathbb{F}_{q^2})\mid n$ in Lemma \ref{lem-contain} can be simplified to
$(n-1)\mid(q^2-1)$.
\end{remark}

\begin{lemma}\label{cor-sepecial}
With the same conditions and notation as in Lemma \ref{lem-contain}, the following two statements hold.
\begin{itemize}
\item [(1)] If $n=q$, then $\mathrm{GRS}_{k_2}(\mathbf{a},\mathbf{1})^{\bot_{\mathrm{H}}}\subseteq \mathrm{GRS}_{k_1}(\mathbf{a},\mathbf{1})$ if and only if
$k_1+k_2\geq q$.

\item [(2)] If $n=q^2$, $0\leq l_1\leq q-2$ and $0 \leq l_2\leq 2q-2$ , then
$\mathrm{GRS}_{q^2-l_2-1}(\mathbf{a},\mathbf{1})^{\bot_{\mathrm{H}}}\subseteq \mathrm{GRS}_{q^2-l_1-1}(\mathbf{a},\mathbf{1})$.
\end{itemize}
\end{lemma}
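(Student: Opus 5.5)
Both statements reduce to the containment criterion of Lemma~\ref{lem-contain}. In each case I will check that every residue $jq \bmod (n-1)$, for $1\leq j\leq n-k_2-1$ (here the relevant $k_2$), is congruent to some integer in $\{1,\ldots,k_1-1\}$. I read the residues in $\mathbb{Z}_{n-1}$ and use the representatives $0,1,\ldots,n-2$. Note that $0$ is not in $\{1,\ldots,k_1-1\}$ unless it coincides with $n-1\equiv 0$, which requires $k_1\ge n$. So I must also make sure no $jq$ is $\equiv 0$, or otherwise handle it.

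\textbf{Statement (1).} Here $n=q$, so $n-1=q-1$ and $q\equiv 1 \pmod{q-1}$. The set $\{q,2q,\ldots,(q-k_2-1)q\}$ therefore becomes $\{1,2,\ldots,q-k_2-1\}$ in $\mathbb{Z}_{q-1}$. Since $q-k_2-1\leq q-2$, these are distinct nonzero residues with representatives exactly $1,\ldots,q-k_2-1$.

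The condition of Lemma~\ref{lem-contain} asks that this set lie in $\{1,\ldots,k_1-1\}$ taken mod $q-1$. For the reverse direction I need care when $k_1-1\geq q-1$: then the right-hand set covers all of $\mathbb{Z}_{q-1}$ and the inclusion is automatic, and $k_1+k_2\ge q$ also holds. Otherwise $k_1-1\le q-2$ and the right set is exactly the representatives $1,\ldots,k_1-1$. In that case the inclusion holds iff $q-k_2-1\leq k_1-1$, i.e. $k_1+k_2\geq q$. The boundary case $k_2=q$ gives an empty left set and is consistent.

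\textbf{Statement (2).} Here $n=q^2$, so $n-1=q^2-1$, and I need
\[
\{q,2q,\ldots,l_2 q\}\subseteq\{1,\ldots,q^2-l_1-2\}\pmod{q^2-1}.
\]
Since $l_2\leq 2q-2$, I write $j=a q+b$ with $a\in\{0,1\}$, using $q^2\equiv 1$. For $1\le j\le q-1$, we have $jq\le q^2-q$, which is its own reduced residue. For $q\le j\le 2q-2$, write $j=q+b$ with $0\le b\le q-2$; then $jq=q^2+bq\equiv 1+bq$, and $1+bq\le 1+(q-2)q=q^2-2q+1$. So every residue is nonzero and at most $\max\{q^2-q,\,q^2-2q+1\}=q^2-q$. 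The required bound is $q^2-l_1-2\geq q^2-q$, which follows from $l_1\leq q-2$. Hence the inclusion holds by Lemma~\ref{lem-contain}.

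The hypothesis $\mathrm{char}\mid n$ is satisfied in both cases because $n$ is a power of $q$. The only delicate point is the bookkeeping of residues versus representatives in (1) when $k_1\geq q$, and confirming in (2) that no residue is $0$. Both are routine once the residues are written out explicitly as above.
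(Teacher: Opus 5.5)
Your proposal is correct and takes the same route as the paper, which simply says the result follows from Lemma~\ref{lem-contain}. Your residue computations supply the verification the paper leaves implicit: in (1) you use $q\equiv 1 \pmod{q-1}$, and in (2) you use $jq\equiv 1+bq \pmod{q^2-1}$ for $j=q+b$, with every residue nonzero and at most $q^2-q\leq q^2-l_1-2$.
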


\begin{IEEEproof}
It follows from Lemma \ref{lem-contain}.
\end{IEEEproof}

\begin{remark}
Hermitian self-orthogonal GRS codes (or, Hermitian dual-containing GRS codes) are very useful for constructing quantum MDS codes
(see, e.g., \cite{Fang2019Some,Grassl2004On,Jin2010Application,Jin2014A,Li2008Quantum}). Lemma \ref{cor-sepecial} characterizes the conditions under which the GRS codes $\mathcal{C}_1$ and $\mathcal{C}_2$, with dimensions determined by the parameters $k_1$ (resp. $l_1$) and $k_2$ (resp. $l_2$), constitute a Hermitian dual-containing code pair $(\mathcal{C}_1,\mathcal{C}_2)$, namely, $\mathcal{C}_2^{\perp_\mathrm{H}}\subseteq \mathcal{C}_1$.
\end{remark}

Let us present our first infinite family of optimal pure quantum $(r,\delta)$-LRCs in the following theorem.

\begin{theorem}\label{cor-sepoql}
Let $1\leq k_2\leq  k_1\leq q-1$, $k_1+k_2\geq q$ and $2k_{1}-k_{2}\leq q+1$, where $q$ is a prime power. Then, there exists an infinite family of optimal pure quantum $(k_1,q-k_1+1)$-LRCs with parameters
\begin{align*}
[[2q,2(k_1+k_2-q),q-k_2+1]]_{q}.
\end{align*}
\end{theorem}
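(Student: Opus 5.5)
We apply Theorem \ref{theorem-optqx} (via Corollary \ref{cor-optq}) with $n=q$, $r=k_1$ and $\delta=q-k_1+1$, so that $r+\delta-1=q$. Take $\mathbf{a}=(0,1,\alpha,\ldots,\alpha^{q-2})$ with $\alpha$ a primitive $(q-1)$-th root of unity in $\mathbb{F}_{q^2}$, which is legitimate since $(q-1)\mid(q^2-1)$ and $\mathrm{char}(\mathbb{F}_{q^2})\mid q$. Set $\mathcal{C}_1=\mathrm{GRS}_{k_1}(\mathbf{a},\mathbf{1})$ and $\mathcal{C}_2=\mathrm{GRS}_{k_2}(\mathbf{a},\mathbf{1})$. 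The $2\times 2$ $\tau$-OD matrix $A$ with $\tau=(1,2)$ exists for every prime power $q$ by Remark \ref{order2}(1).

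The hypotheses are checked as follows.

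\begin{itemize}
\item Since $k_2\leq k_1$, we have $\mathcal{C}_2\subseteq\mathcal{C}_1$.
\item Since $k_1+k_2\geq q$, Lemma \ref{cor-sepecial}(1) gives $\mathcal{C}_2^{\perp_{\mathrm H}}\subseteq\mathcal{C}_1$.
\item $\mathcal{C}_1$ is a $[q,k_1,q-k_1+1]_{q^2}$ MDS code. With $r=k_1$ and $\delta=q-k_1+1$ it is trivially an optimal classical $(r,\delta)$-LRC: the whole coordinate set of size $r+\delta-1$ serves as the recovery set, and the bound \eqref{rdelta-singleton} with $\lceil k_1/r\rceil=1$ is attained.
\item $\mathcal{C}_2$ is MDS with $d(\mathcal{C}_2)=q-k_2+1$.
\item The requirement $d(\mathcal{C}_2)\leq 2\delta$ reads $q-k_2+1\leq 2q-2k_1+2$, i.e.\ $2k_1-k_2\leq q+1$, which is exactly the hypothesis.
\item Corollary \ref{cor-optq}(ii), $r\leq\delta+k_2$, reads $k_1\leq q-k_1+1+k_2$, which is the same inequality.
\item Condition (iii) of Theorem \ref{theorem-optqx} follows because $1\leq k_2\leq k_1$ gives $\lceil (k_1+k_2)/k_1\rceil=2$.
\end{itemize}

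Hence Corollary \ref{cor-optq}(1) shows that $\mathcal{C}(A)$ is Hermitian dual-containing and an optimal classical $(r,\delta)$-LRC.

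The main obstacle is purity, i.e.\ part (2) of Theorem \ref{theorem-optqx}. Corollary \ref{cor-optq}(2) only covers $k_2=\delta-1$ or $3k_2>r+\delta-2$, and these need not hold throughout the stated range. I would instead verify inequality \eqref{contain} directly. The first term is $d(\mathcal{C}_2^{\perp_{\mathrm H}})=k_2+1$, since the Hermitian dual of a GRS code is again MDS. So $2d(\mathcal{C}_2^{\perp_{\mathrm H}})>d(\mathcal{C}_2)$ becomes $3k_2+2>q+1$. The second term is $d(\mathcal{C}_1^{\perp_{\mathrm H}})=k_1+1$, so $d(\mathcal{C}_1^{\perp_{\mathrm H}})>d(\mathcal{C}_2)$ becomes $k_1+k_2>q$. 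Neither inequality is automatic; for instance, $k_1+k_2=q$ is allowed by the hypotheses.

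I would handle the boundary case $k_1+k_2=q$ through the self-dual branch of Theorem \ref{theorem-optqx}(2): then $\dim\mathcal{C}(A)=q=n$, so $\mathcal{C}(A)$ is Hermitian self-dual and yields a $[[2q,0,q-k_2+1]]_q$ code, which is pure by Lemma \ref{proposition6}.

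For $k_1+k_2>q$ with $3k_2\leq q-1$, the crude bound in \eqref{contain} is insufficient. Here I would bound $d(\mathcal{C}(A)^{\perp_{\mathrm H}})$ more precisely. By Lemma \ref{proposition4}, $\mathcal{C}(A)^{\perp_{\mathrm H}}=[\mathcal{C}_1^{\perp_{\mathrm H}},\mathcal{C}_2^{\perp_{\mathrm H}}]\cdot(A^{-1})^{\dag}$. The $\tau=(1,2)$ structure of $A$ makes $(A^{-1})^{\dag}$ a row-swapped scalar multiple of $A$. After reordering constituents this gives a nested pair $\mathcal{C}_1^{\perp_{\mathrm H}}\subseteq\mathcal{C}_2^{\perp_{\mathrm H}}$ with an NSC matrix, and Lemma \ref{nested-distance} yields the distance $\min\{2(k_1+1),\,k_2+1\}$. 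That is still not enough when $k_2$ is small. So I would aim to show directly that every codeword of $\mathcal{C}(A)$ of weight $q-k_2+1$ lies outside $\mathcal{C}(A)^{\perp_{\mathrm H}}$, which is what purity in the sense of Lemma \ref{proposition6} requires. If that fails, I would fall back to restricting the parameter range to the cases covered by Corollary \ref{cor-optq}(2).

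Once purity is settled, Theorem \ref{theorem-optqx}(2) gives the $[[2q,2(k_1+k_2-q),q-k_2+1]]_q$ optimal pure quantum $(k_1,q-k_1+1)$-LRC. Letting $q$ vary over prime powers gives the infinite family.
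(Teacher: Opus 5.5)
Your route is the paper's: you apply Corollary~\ref{cor-optq} with $n=q$, $r=k_1$, $\delta=q-k_1+1$ and $\mathcal{C}_i=\mathrm{GRS}_{k_i}(\mathbf{a},\mathbf{1})$, and your checks of conditions (i)--(iii) are correct. The gap is in the purity step. There you leave the case $k_1+k_2>q$, $3k_2\le q-1$ open. For it you offer either an unspecified direct weight argument or a restriction of the parameter range, and the latter would prove a weaker theorem.

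That case is empty under the hypotheses. If $k_1+k_2>q$ then $k_1\ge q-k_2+1$. Substituting into $2k_1-k_2\le q+1$ gives $2(q-k_2+1)-k_2\le q+1$, i.e.\ $3k_2\ge q+1$. So whenever $k_1+k_2>q$ you have both $k_2>\delta-1$ and $3k_2>q-1=r+\delta-2$, and the second condition of Corollary~\ref{cor-optq}(2) holds. Equivalently, both parts of \eqref{contain} hold:
\begin{itemize}
\item $d(\mathcal{C}_1^{\perp_{\mathrm H}})=k_1+1>q-k_2+1$;
\item $2d(\mathcal{C}_2^{\perp_{\mathrm H}})=2k_2+2>q-k_2+1$.
\end{itemize}
Together with the self-dual branch for $k_1+k_2=q$, which you handle correctly, the two conditions of Corollary~\ref{cor-optq}(2) cover the whole hypothesis range. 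This contradicts your claim that they need not hold, and it is exactly how the paper concludes.

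Separately, your refined dual-distance computation has the indices swapped. From $AA^{\dag}=DP_{(1,2)}$ you get $(A^{-1})^{\dag}=P_{(1,2)}^{-1}D^{-1}A$. Hence $\mathcal{C}(A)^{\perp_{\mathrm H}}=[\mathcal{C}_2^{\perp_{\mathrm H}},\mathcal{C}_1^{\perp_{\mathrm H}}]\cdot A$, with $\mathcal{C}_1^{\perp_{\mathrm H}}\subseteq\mathcal{C}_2^{\perp_{\mathrm H}}$. Lemma~\ref{nested-distance} then gives $d(\mathcal{C}(A)^{\perp_{\mathrm H}})=\min\{2(k_2+1),\,k_1+1\}$, not $\min\{2(k_1+1),\,k_2+1\}$. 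This value is exactly the left-hand side of \eqref{contain}, so that route could never improve on \eqref{contain}. By the computation above, no improvement is needed.
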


\begin{IEEEproof}
In Corollary \ref{cor-optq}, we set $\mathcal{C}_1=\mathrm{GRS}_{k_1}(\mathbf{a},\mathbf{1})$, $\mathcal{C}_2=\mathrm{GRS}_{k_2}(\mathbf{a},\mathbf{1})$, $n=q$, $r=k_1$ and $\delta=q-k_1+1$, where $\mathbf{a}=(0,1,\alpha,\ldots,\alpha^{n-2})$ and $\mathbf{1}=(1,1,\ldots,1)$ are row vectors of length $n$ with $\alpha\in\mathbb{F}_{q^2}$ being a primitive $(n-1)$-th root of unity.

The conditions (i) and (ii) in Corollary \ref{cor-optq} are equivalent to the following conditions
\begin{align}\label{cond0}
k_2\leq  k_1, \ k_1+k_2\geq q \ \mathrm{and}\ 2k_{1}-k_{2}\leq q+1.
\end{align}
Besides, the two conditions in the statement (2) of Corollary \ref{cor-optq} are respectively equivalent to the conditions
\begin{align}\label{cond1}
k_1+k_2=q,
\end{align}
\begin{align}\label{cond2}
k_1+k_2>q\ \mathrm{and} \ 3k_2>q-1.
\end{align}

Therefore, under the conditions \eqref{cond0}, it follows from Corollary \ref{cor-optq} that the following MP code
\begin{align*}
\mathcal{C}(A):=[\mathrm{GRS}_{k_1}(\mathbf{a},\mathbf{1}),\mathrm{GRS}_{k_2}(\mathbf{a},\mathbf{1})]\cdot A
\end{align*}
is both a Hermitian dual-containing code and an optimal classical $(k_1,q-k_1+1)$-LRC. Moreover, it induces an optimal pure quantum $(k_1,q-k_1+1)$-LRC with parameters $[[2q,2(k_1+k_2-q),q-k_2+1]]_{q}$.

Therefore, we complete the proof.
\end{IEEEproof}

\vspace{6pt}

Next, applying Lemma \ref{cor-sepecial} (2) and Corollary \ref{cor-optq}, we show our second infinite family of optimal pure quantum $(r,\delta)$-LRCs in the following theorem.

\begin{theorem}\label{qsquare}
Let $0\leq l_1\leq q-2$ and $l_1 \leq l_2\leq 2l_1+2$, where $q>2$ is a prime power. Then, there exists an infinite family of optimal pure quantum $(q^2-l_1-1,l_1+2)$-LRCs with parameters
\begin{align*}
[[2q^2,2(q^2-l_1-l_2-2),l_2+2]]_{q}.
\end{align*}
\end{theorem}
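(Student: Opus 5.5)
We follow the proof of Theorem \ref{cor-sepoql}, now with length $n=q^2$. In Corollary \ref{cor-optq} we take $n=q^2$, $\mathcal{C}_1=\mathrm{GRS}_{k_1}(\mathbf{a},\mathbf{1})$ and $\mathcal{C}_2=\mathrm{GRS}_{k_2}(\mathbf{a},\mathbf{1})$, where $k_1=q^2-l_1-1$, $k_2=q^2-l_2-1$, and $\mathbf{a}=(0,1,\alpha,\ldots,\alpha^{q^2-2})$ with $\alpha$ a primitive element of $\mathbb{F}_{q^2}$. Here $(n-1)\mid(q^2-1)$ trivially and $\mathrm{char}(\mathbb{F}_{q^2})\mid q^2$, so Lemma \ref{lem-contain} applies. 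We set $r=k_1=q^2-l_1-1$ and $\delta=l_1+2$. Then $r+\delta-1=q^2=n$, so $\mathcal{C}_1$ is an $[r+\delta-1,r,\delta]_{q^2}$ MDS code and $\mathcal{C}_2$ is an $[r+\delta-1,k_2,r+\delta-k_2]_{q^2}$ MDS code with $r+\delta-k_2=l_2+2$. The $2\times 2$ $\tau$-OD matrix with $\tau=(1,2)$ exists for every $q$ by Remark \ref{order2}(1).

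\textbf{Checking condition (i).} Since $l_1\le l_2$, we have $k_2\le k_1$, and nested GRS codes on the same evaluation vector give $\mathcal{C}_2\subseteq\mathcal{C}_1$. The containment $\mathcal{C}_2^{\perp_{\mathrm H}}\subseteq\mathcal{C}_1$ is exactly Lemma \ref{cor-sepecial}(2). That lemma needs $0\le l_1\le q-2$, which is assumed, and $0\le l_2\le 2q-2$, which follows from $l_2\le 2l_1+2\le 2q-2$.

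\textbf{Checking condition (ii).} We need $r\le\delta+k_2$, that is, $q^2-l_1-1\le l_1+2+q^2-l_2-1$. This is equivalent to $l_2\le 2l_1+2$, which is exactly the hypothesis. Part (1) of Corollary \ref{cor-optq} now gives a Hermitian dual-containing optimal classical $(r,\delta)$-LRC.

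\textbf{Checking statement (2).} We split into two cases.
\begin{itemize}
\item[(a)] If $l_2=l_1$, then $k_2=\delta-1$ would require $q^2-l_1-1=l_1+1$, which is impossible for $q>2$ and $l_1\le q-2$. So instead we rely on the second bullet of Corollary \ref{cor-optq}(2).
\item[(b)] In general we use the second bullet: $k_2>\delta-1$ and $3k_2>r+\delta-2$. The first inequality reads $q^2-l_2-1>l_1+1$, which holds since $l_1+l_2\le 3q-4<q^2-2$ for $q>2$. The second reads $3(q^2-l_2-1)>q^2-1$, i.e.\ $2q^2-2>3l_2$, which holds since $l_2\le 2q-2$ and $q\ge 3$.
\end{itemize}
The hypothesis $q>2$ is used precisely in these elementary inequalities.

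The resulting parameters are $[[2q^2,\,2(k_1+k_2-q^2),\,l_2+2]]_q=[[2q^2,\,2(q^2-l_1-l_2-2),\,l_2+2]]_q$, as claimed.

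The main obstacle is hidden inside Lemma \ref{cor-sepecial}(2) and Corollary \ref{cor-optq}, whose proofs are not displayed. If Lemma \ref{cor-sepecial}(2) must be verified directly, apply Lemma \ref{lem-contain}: one needs $\{q,2q,\ldots,l_2q\}\bmod (q^2-1)$ to avoid the residues $\{q^2-l_1-1,\ldots,q^2-2\}$, i.e.\ $jq\not\equiv -t \pmod{q^2-1}$ for $1\le j\le l_2$ and $1\le t\le l_1+1$. Write $j=aq+b$ with $0\le b\le q-1$ and $a\in\{0,1\}$; then $jq\equiv bq+a$. If $b=0$ then $a=1$, and the residue $1$ is small, so it is not excluded. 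Otherwise $b\ge 1$, so $bq+a\le q^2-q+1$, which stays below $q^2-l_1-1$ because $l_1\le q-2$.
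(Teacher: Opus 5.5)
Your proposal is correct and is essentially the paper's proof. You use the same GRS constituents in Corollary~\ref{cor-optq} and obtain condition (i) from Lemma~\ref{cor-sepecial}(2) using $l_2\le 2q-2$. You obtain the second bullet of statement (2) from $l_1+l_2<q^2-2$ and $3l_2<2q^2-2$. You also make explicit the check $r\le\delta+k_2\iff l_2\le 2l_1+2$, which the paper leaves implicit.

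The only blemish is in your optional direct check of Lemma~\ref{cor-sepecial}(2), which has two off-by-one slips:
\begin{itemize}
\item The forbidden residues are $0$ and $-t$ for $1\le t\le l_1$ only. The residue $-(l_1+1)\equiv k_1-1$ is allowed, and it is actually attained, e.g.\ by $j=q-1$ when $l_1=q-2<l_2$.
\item Your bound $bq+a\le q^2-q+1$ does not lie strictly below $q^2-l_1-1$ when $l_1=q-2$. Use $j\le l_2\le 2q-2$ instead: this forces $b\le q-2$ when $a=1$, which gives $bq+a\le q^2-q\le k_1-1$.
\end{itemize}
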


\begin{IEEEproof}
In Corollary \ref{cor-optq}, we set $\mathcal{C}_1=\mathrm{GRS}_{q^2-l_1-1}(\mathbf{a},\mathbf{1})$, $\mathcal{C}_2=\mathrm{GRS}_{q^2-l_2-1}(\mathbf{a},\mathbf{1})$, $n=q^2$, $k_2=q^2-l_2-1$, $r=q^2-l_1-1$ and $\delta=l_1+2$, where $\mathbf{a}=(0,1,\alpha,\ldots,\alpha^{n-2})$ and $\mathbf{1}=(1,1,\ldots,1)$ are row vectors of length $n$ with $\alpha\in\mathbb{F}_{q^2}$ being a primitive $(n-1)$-th root of unity.

We have $\mathcal{C}_2\subseteq \mathcal{C}_1$, $l_2\leq 2q-2$, and thus $\mathcal{C}_2^{\bot_{\mathrm{H}}}\subseteq \mathcal{C}_1$.
Thus, the condition (i) of Corollary \ref{cor-optq} is satisfied.

Note also that the conditions $0\leq l_1\leq q-2$, $l_2\leq 2l_1+2$ and $q>2$ imply that
\begin{align}\label{20261}
l_1+l_2<q^2-2,
\end{align}
\begin{align}\label{20262}
3l_2<2q^{2}-2.
\end{align}

Therefore, the following MP code
\begin{align*}
\mathcal{C}(A):=[\mathrm{GRS}_{q^2-l_1-1}(\mathbf{a},\mathbf{1}),\mathrm{GRS}_{q^2-l_2-1}(\mathbf{a},\mathbf{1})]\cdot A
\end{align*}
is both a Hermitian dual-containing code and an optimal classical $(q^2-l_1-1,l_1+2)$-LRC. Furthermore, it induces an optimal pure quantum $(q^2-l_1-1,l_1+2)$-LRC with parameters $[[2q^2,2(q^2-l_1-l_2-2),l_2+2]]_{q}$.

Therefore, we complete the proof.
\end{IEEEproof}

\vspace{6pt}

In the following theorem, we construct our third infinite family of optimal pure quantum $(r,\delta)$-LRCs.

\begin{theorem}\label{qsquaree}
Let $1\leq k_2\leq  k_1\leq q-1$ and $2k_{1}-k_{2}\leq q+1$, where $q>2$ is a prime power. If one of the following conditions holds:
\begin{itemize}
\item $k_1=k_2=\frac{q}{2}$, where $q$ is even;

\item $k_1+k_2>q$,
\end{itemize}
then there exists an infinite family of optimal pure quantum $(k_1,q-k_1+1)$-LRCs with parameters
\begin{align*}
[[3q,4k_1+2k_2-3q,q-k_2+1]]_{q}.
\end{align*}
\end{theorem}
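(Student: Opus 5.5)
The plan is to mirror the proof of Theorem \ref{cor-sepoql}, replacing Corollary \ref{cor-optq} by Corollary \ref{cor-optq-x1}. In Corollary \ref{cor-optq-x1} we set
\begin{align*}
n=q,\quad r=k_1,\quad \delta=q-k_1+1,\quad \mathcal{C}_1=\mathrm{GRS}_{k_1}(\mathbf{a},\mathbf{1}),\quad \mathcal{C}_2=\mathrm{GRS}_{k_2}(\mathbf{a},\mathbf{1}),
\end{align*}
where $\mathbf{a}=(0,1,\alpha,\ldots,\alpha^{q-2})$ and $\alpha$ is a primitive $(q-1)$-th root of unity. Then $\mathcal{C}_1$ is an $[r+\delta-1,r,\delta]_{q^2}$ MDS code and $\mathcal{C}_2$ is an $[r+\delta-1,k_2,r+\delta-k_2]_{q^2}$ MDS code. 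The $3\times 3$ $\tau$-OD matrix with $\tau=(2,3)$ is supplied by Remark \ref{order2}(2), which requires $q>2$. The length is $3(r+\delta-1)=3q$. The dimension is $2k_2-3\delta+r+3=2k_2-3(q-k_1+1)+k_1+3=4k_1+2k_2-3q$. The distance is $r+\delta-k_2=q-k_2+1$. These are exactly the claimed parameters.

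Next, verify the hypotheses of Corollary \ref{cor-optq-x1}.
\begin{itemize}
\item[(i)] The nesting $\mathcal{C}_2\subseteq\mathcal{C}_1$ is immediate from $k_2\le k_1$, since both codes use the same evaluation vector.
\item[(i)] The containment $\mathcal{C}_2^{\perp_{\mathrm H}}\subseteq\mathcal{C}_1$ holds by Lemma \ref{cor-sepecial}(1) exactly when $k_1+k_2\ge q$. This is where the hypotheses are used: in the first case $k_1+k_2=q$, and in the second $k_1+k_2>q$.
\item[(ii)] The condition $r\le \delta+k_2$ reads $k_1\le q-k_1+1+k_2$, which is precisely $2k_1-k_2\le q+1$.
\end{itemize}
Statement (1) of the corollary then gives that $\mathcal{C}(A)=[\mathcal{C}_1,\mathcal{C}_1,\mathcal{C}_2]\cdot A$ is Hermitian dual-containing and an optimal classical $(k_1,q-k_1+1)$-LRC.

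It remains to match one of the two bullet conditions in statement (2) of Corollary \ref{cor-optq-x1}.

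\textbf{First bullet.} It reads $2k_2=3\delta-r-3=3(q-k_1+1)-k_1-3=3q-4k_1$, that is, $4k_1+2k_2=3q$, the self-dual case. Under $k_1=k_2=q/2$ with $q$ even, this holds: $4(q/2)+2(q/2)=3q$. This yields a $[[3q,0,q/2+1]]_q$ code.

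\textbf{Second bullet.} It reads $k_2>\delta-1$ and $3k_2>r+\delta-2$. Here $k_2>\delta-1=q-k_1$ is equivalent to $k_1+k_2>q$, and $3k_2>r+\delta-2=q-1$ is equivalent to $3k_2\ge q$.

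The step I expect to be the real obstacle is this last inequality $3k_2\ge q$, since it is not stated among the hypotheses. I plan to derive it from $k_1+k_2>q$ and $2k_1-k_2\le q+1$. From the latter, $k_1\le (q+1+k_2)/2$. Substituting into the former gives
\begin{align*}
q<k_1+k_2\le \frac{q+1+3k_2}{2},
\end{align*}
so $3k_2>q-1$, i.e.\ $3k_2\ge q$, as required. With both bullets covered, statement (2) of Corollary \ref{cor-optq-x1} gives the optimal pure quantum $(k_1,q-k_1+1)$-LRC with parameters $[[3q,4k_1+2k_2-3q,q-k_2+1]]_q$.
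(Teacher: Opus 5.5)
Your proposal is correct and follows the paper's proof essentially verbatim. You use the same specialization $n=q$, $r=k_1$, $\delta=q-k_1+1$ with the two GRS codes in Corollary \ref{cor-optq-x1}, reduce hypotheses (i)--(ii) in the same way to $k_2\le k_1$, $k_1+k_2\ge q$, $2k_1-k_2\le q+1$, and match the two bullets to $4k_1+2k_2=3q$ and to $k_1+k_2>q$, $3k_2>q-1$, exactly as the paper does; your explicit derivation of $3k_2>q-1$ from $k_1+k_2>q$ and $2k_1-k_2\le q+1$ supplies a step the paper leaves implicit.
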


\begin{IEEEproof}
In Corollary \ref{cor-optq-x1}, we set $\mathcal{C}_1=\mathrm{GRS}_{k_1}(\mathbf{a},\mathbf{1})$, $\mathcal{C}_2=\mathrm{GRS}_{k_2}(\mathbf{a},\mathbf{1})$, $n=q$, $r=k_1$ and $\delta=q-k_1+1$, where $\mathbf{a}=(0,1,\alpha,\ldots,\alpha^{n-2})$ and $\mathbf{1}=(1,1,\ldots,1)$ are row vectors of length $n$ with $\alpha\in\mathbb{F}_{q^2}$ being a primitive $(n-1)$-th root of unity.

The conditions (i) and (ii) in Corollary \ref{cor-optq-x1} are equivalent to the following conditions
\begin{align}\label{cond5}
k_2\leq  k_1, \ k_1+k_2\geq q \ \mathrm{and}\ 2k_{1}-k_{2}\leq q+1.
\end{align}
Furthermore, the two conditions in the statement (2) of Corollary \ref{cor-optq-x1} are respectively equivalent to the conditions
\begin{align}\label{cond6}
4k_1+2k_2=3q,
\end{align}
\begin{align*}
k_1+k_2>q\ \mathrm{and} \ 3k_2>q-1.
\end{align*}

It is verified that the conditions \eqref{cond5} and \eqref{cond6} are equivalent to the following condition
\begin{align}\label{cond8}
k_1=k_2=\frac{q}{2},\ \text{where} \ q \ \text{is even}.
\end{align}
Therefore, the following MP code
\begin{align*}
\mathcal{C}(A):=[\mathrm{GRS}_{k_1}(\mathbf{a},\mathbf{1}),\mathrm{GRS}_{k_1}(\mathbf{a},\mathbf{1}),\mathrm{GRS}_{k_2}(\mathbf{a},\mathbf{1})]\cdot A
\end{align*}
is both a Hermitian dual-containing code and an optimal classical $(k_1,q-k_1+1)$-LRC, and it induces an optimal pure quantum $(k_1,q-k_1+1)$-LRC with parameters $[[3q,4k_1+2k_2-3q,q-k_2+1]]_{q}$.

This completes the proof.
\end{IEEEproof}

\vspace{6pt}

Finally, we present our fourth infinite family of optimal pure quantum $(r,\delta)$-LRCs in the following theorem.

\begin{theorem}\label{3qsquare}
Let $0\leq l_1\leq q-2$ and $l_1 \leq l_2\leq 2l_1+2$, where $q>2$ is a prime power.
Then, there exists an infinite family of optimal pure quantum $(q^2-l_1-1,l_1+2)$-LRCs with parameters
\begin{align*}
[[3q^2,3q^2-4l_1-2l_2-6,l_2+2]]_{q}.
\end{align*}
\end{theorem}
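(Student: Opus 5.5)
This follows the template of Theorem \ref{qsquare}, with Corollary \ref{cor-optq-x1} in place of Corollary \ref{cor-optq}. We take $n=q^2$ and let $\mathbf{a}=(0,1,\alpha,\ldots,\alpha^{n-2})$ and $\mathbf{1}=(1,\ldots,1)$, where $\alpha$ is a primitive $(q^2-1)$-th root of unity. We set
\[
\mathcal{C}_1=\mathrm{GRS}_{q^2-l_1-1}(\mathbf{a},\mathbf{1}),\qquad \mathcal{C}_2=\mathrm{GRS}_{q^2-l_2-1}(\mathbf{a},\mathbf{1}),
\]
together with $r=q^2-l_1-1$, $\delta=l_1+2$ and $k_2=q^2-l_2-1$. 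Then $r+\delta-1=q^2$, so $\mathcal{C}_1$ is a $[q^2,r,\delta]_{q^2}$ MDS code and $\mathcal{C}_2$ is a $[q^2,k_2,l_2+2]_{q^2}$ MDS code, as Corollary \ref{cor-optq-x1} requires. The defining matrix $A$ is a $3\times3$ $\tau$-OD matrix with $\tau=(2,3)$; it exists for every prime power $q>2$ by Remark \ref{order2}(2). This is where the hypothesis $q>2$ is used.

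Next we verify the hypotheses of Corollary \ref{cor-optq-x1}.
\begin{itemize}
\item[(a)] Since $l_1\le l_2$, the codes are nested: $\mathcal{C}_2\subseteq\mathcal{C}_1$.
\item[(b)] From $l_2\le 2l_1+2\le 2q-2$ and $0\le l_1\le q-2$, Lemma \ref{cor-sepecial}(2) gives $\mathcal{C}_2^{\perp_{\mathrm H}}\subseteq\mathcal{C}_1$. Together with (a), this is condition (i).
\item[(c)] Condition (ii) reads $q^2-l_1-1\le l_1+2+q^2-l_2-1$, which is equivalent to $l_2\le 2l_1+2$. This is exactly the hypothesis, and it explains where that hypothesis comes from.
\end{itemize}
Statement (1) of the corollary then says that $\mathcal{C}(A)$ is Hermitian dual-containing and an optimal classical $(r,\delta)$-LRC.

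For statement (2) we use the second bullet, which needs $k_2>\delta-1$ and $3k_2>r+\delta-2$.
\begin{itemize}
\item The first inequality is $q^2-l_2-1>l_1+1$, i.e. $l_1+l_2<q^2-2$. This is inequality \eqref{20261}, which holds because $l_1+l_2\le 3l_1+2\le 3q-4<q^2-2$ for $q>2$.
\item The second inequality is $3(q^2-l_2-1)>q^2-1$, i.e. $3l_2<2q^2-2$. This is inequality \eqref{20262}, which holds because $3l_2\le 6q-6<2q^2-2$ for $q\ge3$.
\end{itemize}
Both are routine integer checks using $q\ge3$.

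Finally we read off the parameters. The length is $3(r+\delta-1)=3q^2$, and the minimum distance is $r+\delta-k_2=l_2+2$. The dimension is
\[
2k_2-3\delta+r+3=2(q^2-l_2-1)-3(l_1+2)+(q^2-l_1-1)+3=3q^2-4l_1-2l_2-6 .
\]
These match the claimed $[[3q^2,3q^2-4l_1-2l_2-6,l_2+2]]_q$ optimal pure quantum $(q^2-l_1-1,l_1+2)$-LRC. The only step that is not pure arithmetic is the Hermitian containment in (b), and there we rely entirely on Lemma \ref{cor-sepecial}(2), whose range $l_2\le 2q-2$ is guaranteed by $l_2\le 2l_1+2$ and $l_1\le q-2$.
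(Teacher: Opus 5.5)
Your proposal is correct and follows the paper's proof. Both apply Corollary~\ref{cor-optq-x1} with $\mathcal{C}_1=\mathrm{GRS}_{q^2-l_1-1}(\mathbf{a},\mathbf{1})$, $\mathcal{C}_2=\mathrm{GRS}_{q^2-l_2-1}(\mathbf{a},\mathbf{1})$, $r=q^2-l_1-1$ and $\delta=l_1+2$, and check conditions (i), (ii) and the second case of (2) via Lemma~\ref{cor-sepecial}(2) and the inequalities \eqref{20261}--\eqref{20262}; you simply spell out the arithmetic that the paper leaves as ``similar to the proof of Theorem~\ref{qsquare}'', including where $q>2$ is needed.
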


\begin{IEEEproof}
In Corollary \ref{cor-optq-x1}, we set $\mathcal{C}_1=\mathrm{GRS}_{q^2-l_1-1}(\mathbf{a},\mathbf{1})$, $\mathcal{C}_2=\mathrm{GRS}_{q^2-l_2-1}(\mathbf{a},\mathbf{1})$, $n=q^2$, $k_2=q^2-l_2-1$, $r=q^2-l_1-1$ and $\delta=l_1+2$ , where $\mathbf{a}=(0,1,\alpha,\ldots,\alpha^{n-2})$ and $\mathbf{1}=(1,1,\ldots,1)$ are row vectors of length $n$ with $\alpha\in\mathbb{F}_{q^2}$ being a primitive $(n-1)$-th root of unity.

Similar to the proof of Theorem \ref{qsquare}, it can be verified that the conditions $l_1\leq q-2$, $l_1 \leq l_2\leq 2l_1+2$ and $q>2$ ensure that conditions (i), (ii), and the second case of condition (2) in Corollary \ref{cor-optq-x1} are satisfied. Therefore, the following MP code
\begin{align*}
\mathcal{C}(A):=[\mathrm{GRS}_{q^2-l_1-1}(\mathbf{a},\mathbf{1}),\mathrm{GRS}_{q^2-l_1-1}(\mathbf{a},\mathbf{1}),\mathrm{GRS}_{q^2-l_2-1}(\mathbf{a},\mathbf{1})]\cdot A
\end{align*}
is both a Hermitian dual-containing code and an optimal classical $(q^2-l_1-1,l_1+2)$-LRC. Furthermore, it induces an optimal pure quantum $(q^2-l_1-1,l_1+2)$-LRC with parameters $[[3q^2,3q^2-4l_1-2l_2-6,l_2+2]]_{q}$.

This completes the proof.
\end{IEEEproof}

\subsection{Comparison of Our Optimal Pure Quantum $(r,\delta)$-LRCs with Existing Ones}\label{compar}

\begin{table}[!htbp]
\renewcommand\arraystretch{1.6}
\centering	
\footnotesize
\setlength{\abovecaptionskip}{0.cm}
\setlength{\belowcaptionskip}{0.3cm}
\caption{A Summary of the Infinite Families of $[[n,k,d]]_{q}$ Optimal Pure Quantum $(r,\delta)$-LRCs from Existing Works and This Paper}\label{table666}
\vspace{3pt}
\begin{tabular}{c>{\centering\arraybackslash}p{6.0cm}c>{\centering\arraybackslash}p{5.2cm}c}
\toprule
No.&$[[n,k,d]]_{q}$&$(r,\delta)$&Conditions&Reference\\
\midrule
$1$&$[[n_{1}n_{2},2(i+1)(j-1)-n_{1}n_{2},(n_{1}-i)(n_{2}-j)]]_{q}$&$(i+1,n_{1}-i)$&\makecell[l]{$q=p^{a}$, $(n_i-1)\mid(q-1)$ for $i=1,2$,\\
$p\mid \mathrm{gcd}(n_1,n_2)$, $i>\frac{n_1}{2}$, $j=n_2-1$}&\cite[Prop. 35]{Galindo2026}\\
\midrule

$2$&$[[n_{1}n_{2},2(i+1)(j-1)-n_{1}n_{2},(n_{1}-i)(n_{2}-j)]]_{q}$&$(j+1,n_{2}-j)$&\makecell[l]{$q=p^{a}$, $(n_i-1)\mid(q-1)$ for $i=1,2$,\\
$p\mid \mathrm{gcd}(n_1,n_2)$, $i=n_1-1$, $j>\frac{n_2}{2}$}&\cite[Prop. 35]{Galindo2026}\\
\midrule

$3$&$[[n_{1}n_{2},2[(i+1)(n_2-1)+s+1]-n_{1}n_{2},n_{1}-s]]_{q}$&$(i+1,n_{1}-i)$&\makecell[l]{$q=p^{a}$, $(n_i-1)\mid(q-1)$ for $i=1,2$,\\
$p\mid \mathrm{gcd}(n_1,n_2)$, $\frac{n_1-1}{2}<i\leq n_1-2$, \\
$i>s\geq \mathrm{max}\{n_1-i-1,2i-n_1\}$}&\cite[Prop. 36]{Galindo2026}\\
\midrule

$4$&$[[n_{1}n_{2},2[(j+1)(n_1-1)+s+1]-n_{1}n_{2},n_{2}-s]]_{q}$&$(j+1,n_{2}-j)$&\makecell[l]{$q=p^{a}$, $(n_i-1)\mid(q-1)$ for $i=1,2$,\\
$p\mid \mathrm{gcd}(n_1,n_2)$, $\frac{n_2-1}{2}<j\leq n_2-2$, \\
$j>s\geq \mathrm{max}\{n_2-j-1,2j-n_2\}$}&\cite[Prop. 36]{Galindo2026}\\
\midrule

$5$&$[[h^{2},2[(h-i)(h-1)+(h-j)]-h^{2},j+1]]_{q}$&$(h-i,i+1)$&\makecell[l]{$(h-1)\mid(q-1)$, $1-h$ is a square in $\mathbb{F}_{q}$,\\
$q$ is odd, $\frac{j-1}{2}\leq i\leq j<\frac{h}{2}$,\\
$h^{2}-2i(h-1)-2j\geq 0$}&\cite[Cor. 38]{Galindo2025Optimal}\\
\midrule

$6$&$[[q^{2},2[q(q-t)-(d-t-1)]-q^{2},d]]_{q}$&$(q-t,t+1)$&\makecell[l]{$t+1\leq d\leq \mathrm{min}\big\{2(t+1),\frac{q^{2}-t(2q-2)+2}{2}\big\}$, \\
$q$ is odd, $1\leq t<\frac{q}{2}$}&\cite[Thm. 39]{Galindo2025Optimal}\\
\midrule

$7$&$[[mh,2[m(h-t)-(d-t-1)]-mh,d]]_{q}$&$(h-t,t+1)$&\makecell[l]{$m,h\leq q$, $(h-1)\mid(q-1)$, $q$ is odd,\\
$1-h$ is a square in $\mathbb{F}_{q}$, $1\leq t<\frac{h}{2}$, \\
$t+1\leq d\leq \mathrm{min}\big\{2(t+1),\frac{mh-t(2m-2)+2}{2}\big\}$}&\cite[Thm. 40]{Galindo2025Optimal}\\
\midrule

$8$&$[[q^{4},2[(q^{2}-a)(q^{2}-1)+(q^{2}-b)]-q^{4},b+1]]_{q}$&$(q^{2}-a,a+1)$&$0\leq a,b<q-1$, $a\leq b\leq 2a$, $q$ is odd
&\cite[Cor. 44]{Galindo2025Optimal}\\
\midrule

$9$&$[[\lambda(q^{2}+1),\lambda(q^{2}+1)-4\lambda u,2u+1]]_{q}$&$(q^2+1-2u,2u+1)$&\makecell[l]{$1\leq u\leq \frac{q}{2}$, $q$ is even,\\
$\lambda(q^{2}+1)\mid (q^{s}-1)$ for some $s=2\varsigma$}&\cite[Thm. 29]{Galindo2026QuantumBCH}\\
\midrule

$10$&$[[\lambda(q-1),\lambda(q-1)-2\lambda v,v+1]]_{q}$&$(q-1-v,v+1)$&\makecell[l]{$1\leq v\leq \frac{q-2}{2}$, $q\geq 3$,\\
$\lambda(q-1)\mid (q^{s}-1)$ for some $s=2\varsigma$}&\cite[Thm. 31]{Galindo2026QuantumBCH}\\
\midrule

$11$&$[[\lambda(q^{2}-1),\lambda(q^{2}-1)-2\lambda v,v+1]]_{q}$&$(q^2-1-v,v+1)$&\makecell[l]{$1\leq v\leq \frac{q^2-2}{q+1}$,\\
$\lambda(q^{2}-1)\mid (q^{s}-1)$ for some $s=2\varsigma$}&\cite[Thm. 31]{Galindo2026QuantumBCH}\\
\midrule

$12$&$[[\lambda(q^{2}+1)n_{2}\cdots n_{w},\lambda(q^{2}+1)n_{2}\cdots n_{w}-4\lambda un_{2}\cdots n_{w},2u+1]]_{q}$&$(q^2+1-2u,2u+1)$
&\makecell[l]{$\lambda(q^{2}+1)\mid (q^{s}-1)$ for some $s=2\varsigma$, \\
$2\leq n_{\ell}<q^2$ for $2\leq \ell\leq w$,\\
$q$ is even, $u\leq \frac{q}{2}$}&\cite[Thm. 34]{Galindo2026QuantumBCH}\\
\midrule

$13$&$[[\lambda(q-1)n_{2}\cdots n_{w},\lambda(q-1)n_{2}\cdots n_{w}-2\lambda vn_{2}\cdots n_{w},v+1]]_{q}$&$(q-1-v,v+1)$
&\makecell[l]{$2\leq n_{\ell}<q$ for $2\leq \ell\leq w$,\\
$\lambda(q-1)\mid (q^{s}-1)$, $v\leq \frac{q-2}{2}$}&\cite[Thm. 34]{Galindo2026QuantumBCH}\\
\midrule

$14$&$[[\lambda(q^{2}-1)n_{2}\cdots n_{w},\lambda(q^{2}-1)n_{2}\cdots n_{w}-2\lambda vn_{2}\cdots n_{w},v+1]]_{q}$&$(q^2-1-v,v+1)$
&\makecell[l]{$\lambda(q^{2}-1)\mid (q^{s}-1)$, $s=2\varsigma>2$, \\
$2\leq n_{\ell}<q^2$ for $2\leq \ell\leq w$, $v\leq \frac{q^2-2}{q+1}$,\\
or, $1\leq v\leq 2q-3$, $\lambda=2$}&\cite[Thm. 34]{Galindo2026QuantumBCH}\\
\midrule

$15$&$[[2q,2(k_1+k_2-q),q-k_2+1]]_{q}$&$(k_1,q-k_1+1)$&\makecell[l]{$1\leq k_2\leq  k_1\leq q-1$, $k_1+k_2\geq q$,\\
$2k_{1}-k_{2}\leq q+1$}&Thm. \ref{cor-sepoql}\\
\midrule

$16$&$[[2q^2,2(q^2-l_1-l_2-2),l_2+2]]_{q}$&$(q^2-l_1-1,l_1+2)$&$q>2$, $0\leq l_1\leq q-2$, $l_1 \leq l_2\leq 2l_1+2$&Thm. \ref{qsquare}\\
\midrule

$17$&$[[3q,4k_1+2k_2-3q,q-k_2+1]]_{q}$&$(k_1,q-k_1+1)$&\makecell[l]{$1\leq k_2\leq  k_1\leq q-1$, $2k_{1}-k_{2}\leq q+1$,\\
$k_1=k_2=\frac{q}{2}$ for even $q$, or $k_1+k_2>q$}&Thm. \ref{qsquaree}\\
\midrule

$18$&$[[3q^2,3q^2-4l_1-2l_2-6,l_2+2]]_{q}$&$(q^2-l_1-1,l_1+2)$&$q>2$, $0\leq l_1\leq q-2$, $l_1 \leq l_2\leq 2l_1+2$&Thm. \ref{3qsquare}\\
\bottomrule
\end{tabular}
\end{table}

Recently, Galindo et al. initiated and further developed the theory of quantum $(r,\delta)$-LRCs in a series of works
\cite{Galindo2026, Galindo2025Optimal, Galindo2026QuantumBCH}. In particular, they constructed infinite families of optimal pure quantum $(r,\delta)$-LRCs using different methods. Table \ref{table666} summarizes the infinite families of optimal pure quantum $(r,\delta)$-LRCs from \cite{Galindo2026, Galindo2025Optimal, Galindo2026QuantumBCH} and this paper. A comparison between these studies, in terms of both construction methods and code parameters, is provided below:
\begin{itemize}
\item In \cite{Galindo2026}, Galindo, Hernando, Mart\'{i}n-Cruz, and Matsumoto constructed several infinite families of optimal pure quantum $(r,\delta)$-LRCs via
$\emptyset$-affine variety codes, as listed in Nos. 1-4 of Table \ref{table666}. In Theorems \ref{cor-sepoql}-\ref{3qsquare}, we employ the matrix product construction technique to derive optimal pure quantum $(r,\delta)$-LRCs, where the constituent codes of MP codes are GRS codes. Note that the code lengths in Nos. 1-4 are of the form $n_{1}n_{2}$ with $(n_i-1)\mid(q-1)$ for $i=1,2$. Therefore, the optimal pure quantum $(r,\delta)$-LRCs presented in Theorems \ref{cor-sepoql}-\ref{3qsquare} are different from those listed in Nos. 1-4 of Table \ref{table666}.

\vspace{3pt}

\item In \cite{Galindo2025Optimal}, Galindo, Hernando, Munuera, and Ruano presented several infinite families of optimal pure quantum $(r,\delta)$-LRCs using MP
codes, as listed in Nos. 5-8 of Table \ref{table666}. Note that the MP codes employed in this work and those in \cite{Galindo2025Optimal} utilize defining matrices of different sizes. As a result, the parameters of our optimal pure quantum $(r,\delta)$-LRCs constructed in Theorems \ref{cor-sepoql}-\ref{3qsquare} are different from those in Nos. 5-8 of Table \ref{table666}.

\vspace{3pt}

\item In \cite{Galindo2026QuantumBCH}, Galindo, Hernando, and Matsumoto provided several infinite families of optimal pure quantum $(r,\delta)$-LRCs from
BCH and homothetic-BCH codes, as listed in Nos. 9-14 of Table \ref{table666}. Note that the code lengths listed in Nos. 9-14 are constant multiples of $q^2 \pm 1$ or $q-1$. In contrast, the lengths of our codes in Theorems \ref{cor-sepoql}-\ref{3qsquare} are constant multiples of $q^2$ or $q$. Consequently, the optimal pure quantum $(r,\delta)$-LRCs established in Theorems \ref{cor-sepoql}-\ref{3qsquare} constitute new code families.
\end{itemize}

\subsection{Optimal Pure Quantum $(r,\delta)$-LRCs Can Also Be Best-Known, Optimal, or Record-Breaking Quantum Codes}

In Theorems \ref{cor-sepoql}-\ref{3qsquare}, we constructed four new infinite families of optimal pure quantum $(r,\delta)$-LRCs.
In this subsection, we report an interesting phenomenon: our scheme can yield quantum codes that are simultaneously optimal pure quantum $(r,\delta)$-LRCs and best-known, optimal, or record-breaking quantum codes.

We begin with the following examples.

\begin{example}\label{ex-qopt4}
Let $q=4$, $l_1=2$ and $l_2=6$ in Theorem \ref{3qsquare}. Then, we obtain an optimal pure quantum $(13,4)$-LRC with parameters
\begin{align*}
[[48,22,8]]_{4}.
\end{align*}
According to Grassl's database \cite{Grassl2025Bounds}, this code is also a best-known quantum code.
\end{example}

\begin{example}\label{ex-qopt2}
Let $q=7$, $l_1=5$ and $l_2=12$ in Theorem \ref{qsquare}. Then, we obtain an optimal pure quantum $(43,7)$-LRC with parameters
\begin{align*}
[[98,60,\textbf{14}]]_{7}.
\end{align*}
According to Grassl's database \cite{Grassl2025Bounds}, this code is also a record-breaking quantum code, which improves the minimum distance of the best-known $[[98,60,13]]_{7}$ quantum code in \cite{Grassl2025Bounds}.
\end{example}

\begin{table}[!htbp]
\renewcommand\arraystretch{1.3}
\centering	
\footnotesize
\setlength{\abovecaptionskip}{0.cm}
\setlength{\belowcaptionskip}{0.1cm}
\caption{Comparison of our optimal pure quantum $(r,\delta)$-LRCs with best-known quantum codes in Grassl's database \cite{Grassl2025Bounds}}\label{table3}
\vspace{3pt}
\begin{tabular}{cccccccccccc}
\hline
No.&$q$&$l_{1}$&$l_{2}$&$k_{1}$&$k_{2}$&Our quantum code&$(r,\delta)$&Reference&Record in \cite{Grassl2025Bounds}\\

\hline

$1$&$3$&$0$&$1$&$-$&$-$&$[[18,12,3]]_{3}^{\circ}$&(8,2)&Theorem \ref{qsquare}&$[[18,12,3]]_{3}$\\
$2$&$3$&$0$&$2$&$-$&$-$&$[[18,10,4]]_{3}^{\circ}$&(8,2)&Theorem \ref{qsquare}&$[[18,10,4]]_{3}$\\
$3$&$3$&$1$&$3$&$-$&$-$&$[[18,6,5]]_{3}^{\ast}$&(7,3)&Theorem \ref{qsquare}&$[[18,6,5]]_{3}$\\
$4$&$3$&$1$&$4$&$-$&$-$&$[[18,4,6]]_{3}^{\ast}$&(7,3)&Theorem \ref{qsquare}&$[[18,4,6]]_{3}$\\
$5$&$3$&$0$&$2$&$-$&$-$&$[[27,17,4]]_{3}^{\ast}$&(8,2)&Theorem \ref{3qsquare}&$[[27,17,4]]_{3}$\\
\hline

$6$&$4$&$-$&$-$&$3$&$1$&$[[8,0,4]]_{4}^{\circ}$&(3,2)&Theorem \ref{cor-sepoql}&$[[8,0,4]]_{4}$\\
$7$&$4$&$0$&$2$&$-$&$-$&$[[32,24,4]]_{4}^{\circ}$&(15,2)&Theorem \ref{qsquare}&$[[32,24,4]]_{4}$\\
$8$&$4$&$1$&$3$&$-$&$-$&$[[32,20,5]]_{4}^{\ast}$&(14,3)&Theorem \ref{qsquare}&$[[32,20,5]]_{4}$\\
$9$&$4$&$1$&$4$&$-$&$-$&$[[32,18,6]]_{4}^{\ast}$&(14,3)&Theorem \ref{qsquare}&$[[32,18,6]]_{4}$\\
$10$&$4$&$2$&$5$&$-$&$-$&$[[32,14,7]]_{4}^{\ast}$&(13,4)&Theorem \ref{qsquare}&$[[32,14,7]]_{4}$\\
$11$&$4$&$2$&$6$&$-$&$-$&$[[32,12,8]]_{4}^{\ast}$&(13,4)&Theorem \ref{qsquare}&$[[32,12,8]]_{4}$\\
$12$&$4$&$0$&$2$&$-$&$-$&$[[48,38,4]]_{4}^{\ast}$&(15,2)&Theorem \ref{3qsquare}&$[[48,38,4]]_{4}$\\
$13$&$4$&$2$&$6$&$-$&$-$&$[[48,22,8]]_{4}^{\ast}$&(13,4)&Theorem \ref{3qsquare}&$[[48,22,8]]_{4}$\\
\hline

$14$&$5$&$1$&$3$&$-$&$-$&$[[50,38,5]]_{5}^{\ast}$&(23,3)&Theorem \ref{qsquare}&$[[50,38,5]]_{5}$\\
$15$&$5$&$1$&$4$&$-$&$-$&$[[50,36,6]]_{5}^{\ast}$&(23,3)&Theorem \ref{qsquare}&$[[50,36,6]]_{5}$\\
$16$&$5$&$2$&$5$&$-$&$-$&$[[50,32,7]]_{5}^{\ast}$&(22,4)&Theorem \ref{qsquare}&$[[50,32,7]]_{5}$\\
$17$&$5$&$2$&$6$&$-$&$-$&$[[50,30,8]]_{5}^{\ast}$&(22,4)&Theorem \ref{qsquare}&$[[50,30,8]]_{5}$\\
$18$&$5$&$3$&$7$&$-$&$-$&$[[50,26,9]]_{5}^{\ast}$&(21,5)&Theorem \ref{qsquare}&$[[50,26,9]]_{5}$\\
$19$&$5$&$3$&$8$&$-$&$-$&$[[50,24,\textbf{10}]]_{5}^{\diamond}$&(21,5)&Theorem \ref{qsquare}&$[[50,24,9]]_{5}$\\
$20$&$5$&$1$&$4$&$-$&$-$&$[[75,57,6]]_{5}^{\ast}$&(23,3)&Theorem \ref{3qsquare}&$[[75,57,6]]_{5}$\\
$21$&$5$&$2$&$6$&$-$&$-$&$[[75,49,8]]_{5}^{\ast}$&(22,4)&Theorem \ref{3qsquare}&$[[75,49,8]]_{5}$\\
$22$&$5$&$3$&$8$&$-$&$-$&$[[75,41,10]]_{5}^{\ast}$&(21,5)&Theorem \ref{3qsquare}&$[[75,41,10]]_{5}$\\
\hline

$23$&$7$&$2$&$5$&$-$&$-$&$[[98,80,7]]_{7}^{\ast}$&(46,4)&Theorem \ref{qsquare}&$[[98,80,7]]_{7}$\\
$24$&$7$&$2$&$6$&$-$&$-$&$[[98,78,8]]_{7}^{\ast}$&(46,4)&Theorem \ref{qsquare}&$[[98,78,8]]_{7}$\\
$25$&$7$&$3$&$7$&$-$&$-$&$[[98,74,9]]_{7}^{\ast}$&(45,5)&Theorem \ref{qsquare}&$[[98,74,9]]_{7}$\\
$26$&$7$&$3$&$8$&$-$&$-$&$[[98,72,10]]_{7}^{\ast}$&(45,5)&Theorem \ref{qsquare}&$[[98,72,10]]_{7}$\\
$27$&$7$&$4$&$9$&$-$&$-$&$[[98,68,11]]_{7}^{\ast}$&(44,6)&Theorem \ref{qsquare}&$[[98,68,11]]_{7}$\\
$28$&$7$&$4$&$10$&$-$&$-$&$[[98,66,12]]_{7}^{\ast}$&(44,6)&Theorem \ref{qsquare}&$[[98,66,12]]_{7}$\\
$29$&$7$&$5$&$11$&$-$&$-$&$[[98,62,\textbf{13}]]_{7}^{\diamond}$&(43,7)&Theorem \ref{qsquare}&$[[98,62,12]]_{7}$\\
$30$&$7$&$5$&$12$&$-$&$-$&$[[98,60,\textbf{14}]]_{7}^{\diamond}$&(43,7)&Theorem \ref{qsquare}&$[[98,60,13]]_{7}$\\
\hline
\end{tabular}
\end{table}

\begin{remark}\label{remark1rc}
We report an interesting phenomenon by exhibiting $30$ optimal pure quantum $(r,\delta)$-LRCs derived from our framework; that is, there exist quantum codes that are not only optimal pure quantum $(r,\delta)$-LRCs but also, according to Grassl's database \cite{Grassl2025Bounds}, best-known, optimal, or record-breaking quantum codes.
To the best of our knowledge, the new discovery that quantum codes are simultaneously optimal pure quantum $(r,\delta)$-LRCs and record-breaking quantum codes has not been previously reported in the literature. Specifically, as shown in Table \ref{table3},
\begin{itemize}
\item [1)] The optimal pure quantum $(r,\delta)$-LRCs, denoted by $[[n,k,d]]_{q}^{\ast}$ (Nos. $3$-$5$, $8$-$18$, and $20$-$28$), also match the best-known parameters
in Grassl's database \cite{Grassl2025Bounds}.

\item [2)] The optimal pure quantum $(r,\delta)$-LRCs, denoted by $[[n,k,d]]_{q}^{\circ}$ (Nos. $1$, $2$, $6$, and $7$), also attain the optimal parameters in Grassl's
database \cite{Grassl2025Bounds}.

\item [3)] The optimal pure quantum $(r,\delta)$-LRCs, denoted by $[[n,k,\bm{d}]]_{q}^{\diamond}$ (Nos. $19$, $29$, and $30$), are also the record-breaking quantum
codes according to Grassl's database \cite{Grassl2025Bounds}.
\end{itemize}
\end{remark}

\section{Concluding Remarks}\label{conclusion}

In this paper, we established a unified $\tau$-monomial decomposition theorem for invertible self-adjoint matrices over finite fields of arbitrary characteristic.
We proved the existence of $\tau$-OD matrices over $\mathbb{F}_{q^2}$ for any characteristic and demonstrated that there exist several new infinite families of $\tau$-OD matrices over $\mathbb{F}_{q^2}$ of characteristic $2$. As an application, we constructed several infinite families of quantum codes and presented $222$ record-breaking quantum codes that surpass the best-known records maintained in Grassl's database \cite{Grassl2025Bounds}.

Based on Theorems \ref{theorem-optqx} and \ref{theorem-optqx-x1}, we constructed four new infinite families of optimal pure quantum $(r,\delta)$-LRCs with flexible parameters in Theorems \ref{cor-sepoql}-\ref{3qsquare}. In addition, we reported an interesting phenomenon: there exist quantum codes that are not only optimal pure quantum $(r,\delta)$-LRCs but also, according to Grassl's database \cite{Grassl2025Bounds}, best-known, optimal, or record-breaking quantum codes.
To the best of our knowledge, the new discovery that quantum codes are simultaneously optimal pure quantum $(r,\delta)$-LRCs and record-breaking quantum codes has not been previously reported in the literature.

\end{document}